\newif\ifrs
\ifrs \usepackage{mathrsfs} \fi  % Use \mathscr{*}
\newif\ifcol
\newtheorem{theorem*}{Theorem}[section]
\newtheorem{note*}[theorem*]{Note}
\newtheorem{lemma*}[theorem*]{Lemma}
\newtheorem{definition*}[theorem*]{Definition}
\newtheorem{proposition*}[theorem*]{Proposition}
\newtheorem{corollary*}[theorem*]{Corollary}
\newtheorem{remark*}[theorem*]{Remark}
\newtheorem{example*}[theorem*]{Example}
\newtheorem*{conditionE'3}{Condition (E3')}
\numberwithin{equation}{section}
\newif\ifcol
\newcommand{\colorr}{\color[rgb]{0.8,0,0}}
\newcommand{\colorn}{\color[rgb]{1,1,1}}
\newcommand{\colorr}{\color{black}}% {{\color[rgb]{0.8,0,0}}
\newcommand{\colorn}{\color{black}}% {\color[rgb]{1,1,1}}
\def\bd{\begin{description}}
\def\ed{\end{description}}
\def\D2{\bbD_{2,\infty-}}
\def\D{{\bf D}}
\def\E{{\bf E}}
\def\F{{\bf F}}
\def\G{{\bf G}}
\def\I{{\bf I}}
\def\calb{{\cal B}}
\def\calf{{\cal F}}
\def\calg{{\cal G}}
\def\calh{{\cal H}}
\def\half{\frac{1}{2}}
\def\be{\begin{equation}}
\def\ee{\end{equation}}
\def\bea{\begin{eqnarray}}
\def\eea{\end{eqnarray}}
\def\beas{\begin{eqnarray*}}
\def\eeas{\end{eqnarray*}}
\def\bi{\begin{itemize}}
\def\ei{\end{itemize}}
\def\im{\item}
\def\bd{\begin{description}}
\def\ed{\end{description}}
\def\l{\left}
\def\r{\right}
\newcommand{\bbD}{{\mathbb D}}
\newcommand{\reels}{\mathbb{R}}
\newcommand{\naturels}{\mathbb{N}}
\newcommand{\esp}{\mathbb{E}}
\newcommand{\proba}{\mathbb{P}}
\newcommand{\inv}[1]{\frac{1}{#1}}
\newcommand{\var}{\operatorname{Var}}
\newcommand{\espc}{\esp_{\theta_0,i,n}}
\newcommand{\probac}{\proba_{\theta_0,i,n}}
\newcommand{\esps}{\esp_{\theta_0,n}}
\newcommand{\probas}{\proba_{\theta_0,n}}
\DeclareMathOperator{\ind}{\perp \!\!\! \perp}
\newcommand{\blockd}{(i-1)\Delta_n}
\newcommand{\blockf}{i\Delta_n}
\newtheorem*{conditionCstar}{[$\textnormal{C}^*$]}
\begin{document}
 \title{ Statistical inference for the doubly stochastic self-exciting process}
\author{ Simon Clinet$^{1,2}$  and Yoann Potiron$^{3}$ \\ 
$^1$ \small Graduate School of Mathematical Sciences, University of Tokyo: 3-8-1 Komaba, Meguro-ku, \\Tokyo 153-8914, Japan. Email: simon@ms.u-tokyo.ac.jp, website : http://www.ms.u-tokyo.ac.jp/\char`\~simon/ \\
$^2$ CREST, Japan Science and Technology Agency, Japan.\\
$^3$ Faculty of Business and Commerce, Keio University. 2-15-45 Mita, Minato-ku, Tokyo, 108-8345, Japan. Phone:  +81-3-5418-6571. Email: potiron@fbc.keio.ac.jp, website: http://www.fbc.keio.ac.jp/\char`\~ potiron}
\date{ This version: \today}
\maketitle

\begin{abstract}
We introduce and show the existence of a Hawkes self-exciting point process with exponentially-decreasing kernel and where parameters are time-varying. The quantity of interest is defined as the integrated parameter $T^{-1}\int_0^T\theta_t^*dt$, where $\theta_t^*$ is the time-varying parameter, and we consider the high-frequency asymptotics. To estimate it na\"{i}vely, we chop the data into several blocks, compute the maximum likelihood estimator (MLE) on each block, and take the average of the local estimates.  The asymptotic bias explodes asymptotically, thus we provide a non-na\"{i}ve estimator which is constructed as the na\"{i}ve one when applying a first-order bias reduction to the local MLE. We show the associated central limit theorem. Monte Carlo simulations show the importance of the bias correction and that the method performs well in finite sample, whereas the empirical study discusses the implementation in practice and documents the stochastic behavior of the parameters.

\end{abstract}

\textbf{Keywords}: Hawkes process; high-frequency data; integrated parameter ; time-varying parameter

\section{Introduction} \label{introduction}
In high-frequency data, market events are observed more often than ever. As an example, the correlation between the timing of those events and other financial quantities, such as asset price, volatility and microstructure noise has become of special interest. Also, financial agents can model the order book to predict key quantities, such as the volume of trades in the next hour. For all those reasons, models for inter-arrival times, also called duration models, are needed. As a pioneer work, \cite{engle1998autoregressive} introduced the autoregressive conditional duration (ACD) model. Other references include and are not limited to \cite{van2001duration}, \cite{zhang2001nonlinear}, as well as \cite{bowsher2007modelling}, \cite{fernandes2006family}, and more recently \cite{renault2014dynamic} and \cite{potiron2017estimation}. 

\smallskip
The cited work is partly based on the self-exciting Hawkes point process introduced in \cite{HawkesPointSpectra1971} and \cite{hawkes1971spectra}. In that model, the intensity of the point process $N_t$  is defined as $\lambda (t) := \nu + \int_0^t \phi_{t-s} dN_s$, where the baseline $\nu > 0$. Self-exciting processes are very popular to model phenomena mainly because future events can be boosted by past events. In the high-frequency finance literature, \cite{toke2011market} documented such time-clustering property in the order flow of several stocks. Other examples of application can be found in \cite{embrechts2011multivariate}, \cite{bacry2013modelling}, \cite{bacry2013some}, \cite{ZhengRoueffAbergel2014}, \cite{jaisson2015limit}, etc. Also, \cite{bacry2015hawkes} offers a general overview of the Hawkes process applications in finance.

\smallskip
 We restrict our attention to the case with exponential exciting function $\phi_t = a e^{- bt}$, as studied in \cite{OakesMarkovianHawkes1975}. Time-varying parameter extensions have already been considered taking the locally stationary processes approach in \cite{roueff2016locally}, and restricting to the baseline time-varying case in \cite{fox2016modeling}, \cite{chen2013inference} and \cite{chen2016nonparametric}. Our approach is much in line with the latter couple of work in that we consider the high-frequency point of view. In \cite{chen2013inference} the authors allow the background parameter (as they call it) $\nu$ to be time-varying to incorporate intraday seasonality and consider the ACD model with time-varying background parameter. They illustrate that on data the ACD performs better when allowing for time-varying background, and that as it was already well-documented in \cite{engle1998autoregressive} the background parameter is moving a lot intraday. 
 
 \smallskip
 This calls into question what happens to the other two parameters $a$ and $b$ when sampling at the ultra high frequency? Do they look constant intraday? In our empirical study, we document that they are moving intraday just as the background parameter does although the intraday seasonality pattern isn't as clear. Indeed from one day to the next, the paths are very much different and although intraday seasonality can definitely be considered as one factor, it seems that it can't solely explain such behavior. Correspondingly we introduce a self-exciting process with stochastic time-varying parameters $\theta_t^* := (\nu_t^*, a_t^*, b_t^*)$. The new object of interest is defined as the integrated parameter 
\begin{eqnarray}
\label{integratedparameter}
\Theta := \frac{1}{T} \int_0^T \theta_t^* dt,
\end{eqnarray}
where $T > 0$ is the horizon time. 

\smallskip
To estimate the integrated parameter (\ref{integratedparameter}), we choose to do locally MLE estimations, which was studied in a parametric context  in \cite{Clinet20171800}, and whose numerical computation can be consulted in \cite{ozaki1979maximum}. Specifically, if we consider $B_n := n h_n^{-1}$ regular non-overlapping blocks of observation with time length $\Delta_n := T h_n n^{-1}$, the estimator of (\ref{integratedparameter}) is defined as
\begin{eqnarray}
\label{newestimator} \widehat{\Theta}_n := \frac{1}{B_n} \sum_{i=1}^{B_n} \widehat{\Theta}_{i,n},
\end{eqnarray}
where $\widehat{\Theta}_{i,n}$ corresponds to the MLE applied to the market events on the $i$th block, $n$ corresponds to the number of events' order between $0$ and $T$ (typically the expected number of events) and the block size $h_n$ stands for the number of events in a block's order (typically the expected number of events on a block). The idea to use a Riemann sum of local estimates in high-frequency finance problems is very common, and can be found for example in \cite{jacod2013quarticity} or \cite{mykland2009inference}. Our own recent work includes \cite{clinet2017efficient}. The more general literature on local parametric approaches, when not considering the high-frequency data case, includes \cite{fan1996local}, but also \cite{hastie1993varying},  the locally stationary processes of \cite{dahlhaus1997fitting}, etc. 

\smallskip
The first contribution of this paper is to obtain conditions on the stochastic parameter $\theta_t^*$ and the block size $h_n$ under which we can show a local central limit theorem (LCLT) in high-frequency asymptotics, and finiteness of moments of order $2 \kappa > 2$. The technique used, namely Quasi Likelihood Analysis (QLA) whose most general and powerful formulation can be consulted in \cite{YoshidaPolynomial2011}, is not problem-specific and can very much be applied to different models. For this part, blocks with $h_n$ which goes to infinity very slowly will be preferred, as the block length $\Delta_n$ will be smaller, and thus the parameter $\theta_t^*$ almost constant on each block. In particular, if $\theta_t^*$ is not constant, we obtain that a necessary condition is 
\begin{equation}
\label{localass}
h_n = o (\sqrt{n}).
\end{equation}

\smallskip
The second issue that this work is addressing is the asymptotic bias generated by $\widehat{\Theta}_n$. Even in the simple parametric case, note that the bias of the MLE on each block $\widehat{\Theta}_{i,n}$ is of order $h_n^{-1}$, and thus that the bias of $\widehat{\Theta}_n$ is also of the same order $h_n^{-1}$. The asymptotic bias, i.e. the bias of the scaled error $\sqrt{n} \big( \widehat{\Theta}_n - \Theta \big)$, is thus of order $\sqrt{n} h_n^{-1}$. If we want to obtain no asymptotic bias, we thus need to assume that 
\begin{equation}
\label{biasass}
\sqrt{n} = o (h_n).
\end{equation}
Thus, for that part, the block size $h_n$ should be as large as possible. 

\smallskip
In view of the necessary conditions (\ref{localass}) and (\ref{biasass}), there is no hope to obtain any $h_n$ for which the asymptotic bias of $\widehat{\Theta}_n$ will vanish. For that reason, we derive the one-order bias-corrected parametric MLE. Correspondingly, we define $\widehat{\Theta}_{i,n}^{(BC)}$ as the bias-corrected MLE when fitted to the observations on the $i$th block. Moreover, the bias-corrected estimator of (\ref{integratedparameter}) is defined as 
\begin{eqnarray}
\label{bcnewestimator} \widehat{\Theta}_n^{(BC)} := \frac{1}{B_n} \sum_{i=1}^{B_n} \widehat{\Theta}_{i,n}^{(BC)}.
\end{eqnarray}
We provide conditions under which $\widehat{\Theta}_n^{(BC)}$ has no asymptotic bias.
Finally, the global central limit theorem (GCLT) is obtained as an immediate consequence of the finiteness of moments of order $2\kappa$, the LCLT and the fact that the asymptotic bias of $\widehat{\Theta}_{n}^{(BC)}$ is null.  

\smallskip
The following section provides the setup, Section $3$ develops the statistical underpinning for the time-varying self-exciting process case and Section $4$ introduces the general model. In Section $5$, we discuss the main results. We give some practical guidance about the implementation of the statistical procedure in Section $6$. We also carry out numerical simulations in Section $7$, and give an empirical illustration on real tick-by-tick data in section $8$. Finally, Section $9$ concludes. Proofs can be found in the Annex.

\section{The setup}
\label{setup}

In this work, the terminology "market event" should be understood as possibly corresponding to a time of trade, bid or ask order (limit or market), an order of cancellation, the time of a price change, etc. We need to introduce some notation first, that will be used throughout this work. For any stochastic process $X_t$, we define $\F^X = (\calf_t^X)_{t\in [0,T]}$, where $\calf_t^X = \sigma \{ X_s, 0 \leq s \leq t \}$ designates the canonical filtration generated by $X_t$. We assume that $N_t^n$ is a point process, which counts the number of events on $[0,t]$. It means that $dN_t^n = 1$ if there is a market event at time $t$ and $dN_t^n = 0$ if not. Moreover, we assume that there is no jump at time $0$ and thus that $dN_0^n = 0$. Correspondingly, we define the intensity of market events $\lambda_*^n (t)$. The intensity process can be thought as the instantaneous expected number of events, i.e. $\lambda_*^n(t) dt = \esp\l[dN_t^n | \calf_t^{(\theta^*, N^n)}\r]$, where $\calf_t^{(\theta^*, N^n)}$ is the filtration generated by $\calf_t^{\theta^*}$ and $\calf_t^{N^n}$. For definitions, the reader can consult \cite{daley2007introduction} or \cite{JacodLimit2003} for more general results about the compensator of a point process. 

\smallskip There are commonly two ways to make the number of events go to infinity. The low-frequency asymptotics assume that $T \rightarrow \infty$. \cite{Clinet20171800} took this approach in an ergodic framework. On the contrary, the high-frequency point of view (also sometimes called heavy traffic asymptotics) assumes that $T$ is fixed, and that the number of events explodes on $[0, T]$. We adopt the latter approach and further consider a sequence of intensities such that $\esp [ \lambda_*^n (t) ]$ is exactly of order $n$, with $n \rightarrow \infty$. This yields a number of observations $N_T^n$ of order $n$, so that we are in the classical framework of the large-sample theory.   

\section{Outline of the problem: an illustrative example}
\label{outline}
We start our theoretical exposition by the introduction of a point process toy model which provides an insight on the difficulties to overcome when considering the self-exciting model case. For the sake of simplicity, we stay at a heuristic level. The continuous parameter $\theta_t^*$ is assumed to be 1-dimensional throughout the rest of this section. The parameter $\theta_t^*$ is also restricted to belong to a compact set $K = \big[ \underline{\theta}, \overline{\theta} \big]$, where $\underline{\theta} > 0$. Moreover, $\theta_t^*$ is assumed to be adapted to some filtration $\calf_t$, and to satisfy uniformly in $0 \leq s < t \leq T$ that $\esp_{s} \big[ |\theta_t^* - \theta_s^*|^p \big] = O_\proba((t-s)^p)$, where $\esp_s [ . ]$ denotes the conditional expectation with respect to $\calf_s$. Finally, we assume that the process $N_t^n$ is adapted to $\calf_t$ and follows the dynamic of a doubly stochastic Poisson process (or Cox process) whose underlying stochastic intensity is assumed to be defined as $\lambda_*^n(t) = n \sqrt{\theta_t^*}$. 

\smallskip
The estimation procedure $\widehat{\Theta}_n$ follows \cite{potiron2016estimating}. We are interested in assessing the GCLT $\sqrt{n}(\widehat{\Theta}_n - \Theta) \to^d V_T^{\half} \mathcal{N} (0,1)$, where the asymptotic random variance $V_T = T^{-1} \int_0^T v_t dt$ is independent from $\mathcal{N} (0,1)$. Since the parameter $\theta_t^*$ is smooth, we obtain 
\bea 
\label{outlineparamsmooth}
\Theta = \inv{B_n} \sum_{i=1}^{B_n} \theta_{(i-1)\Delta_n}^*  + O_\proba(\Delta_n).
\eea
Consequently, the GCLT will follow if we can prove that 
\bea
\label{outlineCLT}
\frac{\sqrt{n}}{B_n} \sum_{i =1}^{B_n} \big( \widehat{\Theta}_{i,n}-\theta_{(i-1)\Delta_n}^* \big) \to^d V_T^{\half} \mathcal{N} (0,1).
\eea
We focus on how to obtain (\ref{outlineCLT}) in this simple toy model. To do that, we rewrite the left hand-side of (\ref{outlineCLT}) as a sum of a martingale triangular array and an array of biases. Formally, (\ref{outlineCLT}) is expressed as
\bea 
\underbrace{\frac{\sqrt{n}}{B_n}\sum_{i =1}^{B_n}{M_{i,n}}}_{S_{n}^{(M)}} + \underbrace{\frac{\sqrt{n}}{B_n}\sum_{i=1}^{B_n}{B_{i,n}}}_{S_{n}^{(B)}}  \to^d V_T^{\half} \mathcal{N} (0,1),
\label{cltCondition}
\eea
where $M_{i,n} = \widehat{\Theta}_{i,n} - \theta_{(i-1)\Delta_n}^* - \esp_{(i-1)\Delta_n}\big[\widehat{\Theta}_{i,n}-\theta_{(i-1)\Delta_n}^*\big]$ and $B_{i,n} = \esp_{(i-1)\Delta_n}\big[\widehat{\Theta}_{i,n}-\theta_{(i-1)\Delta_n}^*\big]$. Our strategy to show (\ref{cltCondition}) relies thus on exploiting the  martingale decomposition on the left hand side of (\ref{cltCondition}) to show that the covariances between blocks are negligible. More precisely, we want to prove that $S_{n}^{(M)} \to^d V_T^{\half} \mathcal{N} (0,1)$ on the one hand, and that $S_{n}^{(B)} \to^\proba 0$ on the other hand.  To show the former statement, classical sufficient conditions (see for instance Corollary 3.1 of pp. 58-59 in \cite{hall1980martingale}, or also Theorem VIII.3.33 in \cite{JacodLimit2003}) will hold if\footnote{The reader can find more details in Section \ref{GCLTproof}} uniformly in $i \in \{1,...,B_n \}$ we can show that
\bea 
\label{outlinemom}
\esp_{(i-1)\Delta_n}\l[\l(\sqrt{h_n}\l(\widehat{\Theta}_{i,n}-\theta_{(i-1)\Delta_n}^*\r)\r)^2\r] = v_{(i-1)\Delta_n} +o_\proba(1), 
\eea 
and for some $\kappa > 1$ that
\bea 
\label{outlinelin}
\esp_{(i-1)\Delta_n}\l[\l|\sqrt{h_n}\l(\widehat{\Theta}_{i,n}-\theta_{(i-1)\Delta_n}^*\r)\r|^{2\kappa}\r] = O_\proba(1).
\eea 
If we show the LCLT, i.e. the convergence of $\sqrt{h_n}(\widehat{\Theta}_{i,n}-\theta_{(i-1)\Delta_n}^*) \to^d v_{(i-1)\Delta_n}^{\frac{1}{2}} \mathcal{N} (0,1)$ uniformly in the block number $i \in \{ 1, \cdots, B_n \}$, we can deduce from (\ref{outlinelin}) that (\ref{outlinemom}) holds. This will be our strategy to show that $S_{n}^{(M)} \to^d V_T^{\half} \mathcal{N} (0,1)$. Moreover, to obtain the GCLT (\ref{cltCondition}), we also need to show that the array of biases vanishes asymptotically. Accordingly, we will look at how to obtain those three conditions (boundedness of local moments of order $2\kappa$, LCLT and no asymptotic bias) in the toy model.\\
\smallskip 

To fix ideas, we provide one way, which turns out to be helpful when estimating (\ref{integratedparameter}), to obtain asymptotic properties of the MLE in the parametric case when the intensity of the point process $N_t^n$ is defined as $\lambda_{*}^n (t) := n \sqrt{\theta^*}$. The log-likelihood of the parametric model can be expressed up to a constant additive term as 
\bea 
l_n(\theta) = \text{log}\l(\sqrt{\theta}\r)N_T^n - n\sqrt{\theta}T,
\eea
whose maximizer $\hat{\theta}_n$ admits the explicit form
\bea 
\label{outlineMLEexplicitform}
\hat{\theta}_n = \l(\frac{N_T^n}{nT}\r)^2.
\eea
If we introduce the martingale $\tilde{N}_t^n = N_t^n - n\sqrt{\theta^*}t$, we can rewrite $\hat{\theta}_n$ as a function of $\tilde{N}_T^n$:
\bea 
\hat{\theta}_n = \theta^*  + \frac{2\sqrt{\theta^*}}{nT}\tilde{N}_T^n + \l(\frac{\tilde{N}_T^n}{nT}\r)^2.
\label{closedForm}
\eea
As a consequence of classical limit theorems on martingales (see, e.g., Theorem 2.28 of p. 152 in \cite{mykland2012econometrics} if we interpolate $\tilde{N}_t^n$ as a continuous martingale, or the more general Theorem IX.7.3 in p. 584 of \cite{JacodLimit2003}), we obtain the CLT
 $$\sqrt{n} (\hat{\theta}_n -\theta^*) \to^d T^{-\half}\Gamma(\theta^*)^{-\half} \mathcal{N} (0,1),$$
where the Fisher information has the form $\Gamma(\theta^*) = \inv{4}(\theta^*)^{-\frac{3}{2}}$. We also have the stronger statement that for any $p \geq 1$: 
\bea 
\esp \l[ \l(\sqrt{n} (\hat{\theta}_n -\theta^*)\r)^p \r] \to \esp\l[\l(T^{-\half}\Gamma(\theta^*)^{-\half}\xi\r)^p\r],
\eea
where $\xi$ follows a $\mathcal{N} (0,1)$. Finally, we can also compute in (\ref{closedForm}) the finite-sample bias of the MLE 
\bea 
\esp \big[ \hat{\theta}_n - \theta^* \big] = \frac{\sqrt{\theta^*}}{nT}.
\label{biasConstant}
\eea

\smallskip
We are now back to the time-varying parameter model case $\lambda_*^n(t) = n \sqrt{\theta_t^*}$. In that case, we adapt the definition of the martingale as $\tilde{N}_t^n = N_t^n - n \int_0^t \sqrt{\theta_s^*} ds$. Working out from the explicit form (\ref{outlineMLEexplicitform}), the local MLE can be expressed as 
\bea \nonumber
\widehat{\Theta}_{i,n} = \l(\frac{\tilde{N}_{\blockf }^n-\tilde{N}_{\blockd }^n}{h_nT}\r)^2 +\frac{2}{h_n^2T^2}(\tilde{N}_{\blockf}^n - \tilde{N}_{\blockd}^n )\int_{\blockd }^{\blockf }{n\sqrt{\theta_s^*}ds} \\ \label{outlineMLEform} +  \inv{h_n^2T^2}\l(\int_{\blockd }^{\blockf }{n\sqrt{\theta_s^*}ds}\r)^2.
\eea
In view of (\ref{outlineMLEform}) and under the assumption that $h_n = o (n^{\frac{2}{3}})$, it is easy to obtain the LCLT with local conditional variance $v_{s} = T^{-1}\Gamma \big(\theta_{s}^* \big)^{-1}$ and the boundedness of moments of order $2\kappa$. It remains to control the array of biases $S_n^{(B)}$. Calculation gives us
\bea 
B_{i,n} = \frac{\sqrt{\theta_{(i-1)\Delta_n}^*}}{h_nT} + O_\proba(\Delta_n),
\label{biasMoving}
\eea
where the residual term $O_\proba(\Delta_n)$, which was not part of the parametric bias (\ref{biasConstant}), is due to the deviation of $\theta_t^*$. In order to obtain no asymptotic bias, we assume that $\sqrt{n} = o (h_n)$. Consequently, if we assume that $h_n = n^{1/\delta}$ with $\frac{3}{2} < \delta < 2$, we can prove the GCLT  with asymptotic variance $V_T = T^{-2}\int_0^T{\Gamma\l(\theta_t^*\r)^{-1}dt}$ in this toy model . This is a simple example where no further bias correction is needed to obtain the GCLT. However, in the time-varying self-exciting model, we will require to bias correct the estimator. This could be done in this simple setting via
\bea 
\widehat{\Theta}_{i,n}^{(BC)} = \widehat{\Theta}_{i,n} - \frac{\sqrt{\widehat{\Theta}_{i,n}}}{h_nT}.
\eea
%A calculation would show that if we use $\widehat{\Theta}_{i,n}^{(BC)}$ the lower bound on $h_n$ can be lowered from $n^{\half}$ to $n^{\frac{1}{3}}$. 

\section{The model} \label{model}
 We introduce in this section the time-varying self-exciting process, which will also be called the doubly stochastic Hawkes process, in analogy with the doubly stochastic Poisson process introduced in \cite{cox1955some}. We first recall the definition of the non time-varying self-exciting point process. In the parametric case, the point process $N_t^{P,n}$ can be defined via its intensity function 
\bea
\label{intensityPHP}
\lambda_{*}^{P,n} (t) = n\nu^{*} + \int_0^{t-}{na^{*} e^{-nb^{*}(t-s)} dN_s^{P,n}}, 
\eea
where $\theta^* = (\nu^*, a^*, b^*)$ is the 3-dimensional parameter. The self-excitation property can be read directly from the intensity form $\lambda_{*}^{P,n}$ in (\ref{intensityPHP}). Indeed, a market event arriving at time $t$ will immediately boost the intensity, with an additional factor of magnitude $na^*$, favoring the occurrence of new events in the close future. The excitation then exponentially fades away after a time of order $\l(nb^*\r)^{-1}$. We explain now our choice regarding the asymptotics. First, we assume that the baseline intensity is proportional to $n$ to boost the average rate of spontaneous events. Moreover, we assume that the excitation variables are of magnitude $(na^*,nb^*)$ in order to preserve the proportionality between the typical excitation time after a market event, $\l(nb^*\r)^{-1}$, and the average inter-arrival time between two spontaneous events $(n\nu^*)^{-1}$. To wrap it up, $N_t^n$ is a self-exciting process with parameters $(n \nu^*, n a^*, nb^*)$. Note that other choices can lead to fairly different asymptotics such as the ones in \cite{ogihara2015quasi} where authors suggested a model with baseline $n\nu^*$ but a constant excitation kernel of the form $a^*e^{-b^*t}$.       

\smallskip
We consider now the time-varying case. We assume that the 3-dimensional time-varying parameter process $\theta_t^*$ is component-wise positive and is confined into the interior of a compact space $K$. This implies the existence of two non-negative vectors $\underline{\theta}$ and $\overline{\theta}$ such that $0 < \underline{\theta} \leq \theta \leq \overline{\theta}$ for any 
$\theta \in K$, where the inequalities should be read component-wise. Moreover, we assume that $N_t^n$ admits the $\calf_t$-stochastic intensity $\lambda_{*}^n(t)$ defined as 
\bea 
\label{intensityTVHP}
\lambda_{*}^n(t) = n\nu_{t}^{*} + \int_0^{t-}{na_{s}^{*} e^{-nb_{s}^{*}(t-s)} dN_s^n}, \text{    } t \in (0,T],
\eea
where $N_t^n$ and $\theta_t^*$ are adapted to $\calf_t$\footnote{The formal definition of $\calf_t$ can be found in Section \ref{mainResults}.}, and $N_0^n = 0 $ a.s. The time-varying model (\ref{intensityTVHP}) is a natural time-varying parametric model extension of (\ref{intensityPHP}). It is constructed in the same spirit as for the doubly stochastic Poisson process, in the sense that conditionally on the path of $\theta_t^*$, $N_t^n$ is distributed as a standard inhomogeneous Hawkes process. The formal definition of such a property along with the existence of the doubly stochastic Hawkes process can be found in Theorem \ref{lemmaExistence}. Finally, note that the time-varying parameter model (\ref{intensityTVHP}) is more general than the parametric model (\ref{intensityPHP}). In particular, the intensity between two market events is not exponentially decreasing, but rather a sum of decreasing exponential functions, each one with its own starting point and decreasing rate. 

\section{Main Results}\label{mainResults}
\subsection{Preliminary results}
We present in this section general results for the doubly stochastic Hawkes process. We start by stating basic conditions on a given parameter process $\theta_t$ that ensure the existence of the related doubly stochastic Hawkes process. 
\bd
\im[[E\!\!]]
    \bd 
        \im[{\bf (i)}]  \bea 
                         r := \sup_{t \in [0,T]} \frac{a_t}{b_t} < 1 \text{    } \proba-a.s.
                         \label{condQuotient}
                    \eea 
        \im[{\bf (ii)}]  \bea 
                        \int_0^T \nu_s ds < +\infty \text{    } \proba-a.s.
                        \label{condNu}
                     \eea
   
    \ed 
\ed
First, note that (\ref{condQuotient}) is not harmful. Indeed, the corresponding condition for the existence of the parametric model is $\frac{a}{b} < 1$. Moreover, when estimating parameters by local MLE, we need $\theta_t$ to be contained within a compact set. Thus, (\ref{condNu}) will be verified automatically in that context. The next theorem shows the existence of the doubly stochastic Hawkes process associated with the process $\theta_t$. We recall that $\calf_t^\theta$ designates the canonical filtration associated with $\theta_t$. Moreover, the following bigger filtration $\calf_t$ is introduced for the construction of the doubly stochastic Hawkes process. We define the filtration as $\calf_t = \calf_t^{(\theta, \overline{N})} = \calf_t^{\theta} \vee \calf_t^{\overline{N}}$, where $\overline{N}_t = \overline{N}([0,t] \times \reels)$ is a  Poisson process of intensity $1$ on $\reels^2$ which is independent from $\theta_t$.  

\begin{theorem*} (Existence)
 Under \textnormal{[E]}, there exists a point process $N_t$ adapted to $\calf_t$ such that its $\calf_t$-intensity has the representation
\bea 
\lambda(t) = \nu_{t} + \int_0^{t-}{a_s e^{-b_s(t-s)} dN_s}.
\label{eqRecursive}
\eea
Moreover, conditionally on the path of $\theta_t$, $N_t$ is distributed as a standard Hawkes process with inhomogeneous deterministic parameter  $\theta_t$, that is 
\bea 
\esp \l[  f(N) \l| \calf_T^{\theta} \r. \r] = \esp \l[ f\l(N^{\tilde{\theta}}\r) \r]_{|\tilde{\theta} = \theta},
\label{doublyStoProp}
\eea
for any continuous bounded function $f$, and where $N_t^{\tilde{\theta}}$ is a doubly stochastic Hawkes process with underlying deterministic process $\tilde{\theta}_t$.

\label{lemmaExistence}
\end{theorem*}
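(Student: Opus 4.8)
The plan is to condition on the path of $\theta_t$ and construct $N$ pathwise as a measurable functional of the driving Poisson process $\overline N$, then recover the random-parameter statements by disintegration. Fix a deterministic path $\tilde\theta=(\tilde\nu,\tilde a,\tilde b)$ satisfying \textnormal{[E]} and regard the second coordinate of $\overline N$ as a thinning level. I would look for a pair $(N,\lambda)$ solving the Poisson embedding equations $dN_t=\int_{\reels}\mathds{1}\{\vartheta\le\lambda(t)\}\,\overline N(dt,d\vartheta)$ and $\lambda(t)=\tilde\nu_t+\int_0^{t-}\tilde a_s e^{-\tilde b_s(t-s)}\,dN_s$ on $[0,T]$; this is a fixed-point problem for $t\mapsto N_t$, and the crux is to produce a solution that is a.s.\ finite on $[0,T]$.

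To solve it I would iterate: set $N^{(0)}\equiv 0$ and let $N^{(k+1)}$ be the thinning of $\overline N$ at level $\lambda^{(k)}(t)=\tilde\nu_t+\int_0^{t-}\tilde a_s e^{-\tilde b_s(t-s)}\,dN^{(k)}_s$. Using the \emph{same} realization of $\overline N$ at every stage, a coupling argument shows $k\mapsto N^{(k)}$ is nondecreasing, hence converges to some $N^{(\infty)}$, and monotone convergence shows $N^{(\infty)}$ solves the embedding equation. The key estimate is $N^{(\infty)}_T<\infty$ a.s.: $N^{(\infty)}$ is stochastically dominated by the total population of a branching process whose immigrants form a Poisson process on $[0,T]$ with mean $\int_0^T\tilde\nu_s\,ds<\infty$ (finite by \eqref{condNu}), and in which a point born at time $s$ has a Poisson number of children with mean $\int_s^T\tilde a_s e^{-\tilde b_s(t-s)}\,dt\le \tilde a_s/\tilde b_s\le r<1$ (bounded by \eqref{condQuotient}). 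Subcriticality then gives a.s.\ finite clusters and finitely many immigrants, hence $N^{(\infty)}_T<\infty$ a.s.; set $N:=N^{(\infty)}$.

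It remains to identify the intensity and establish the doubly stochastic property. Since $\lambda(t)$ above depends only on $\overline N$ restricted to $[0,t)\times\reels$ and on $\tilde\theta_s$ for $s\le t$, it is $\calf_{t-}$-predictable, and the standard projection theorem for Poisson thinning (as in Br\'emaud and Massouli\'e) yields that $N$ admits the intensity $\lambda(t)=\tilde\nu_t+\int_0^{t-}\tilde a_s e^{-\tilde b_s(t-s)}\,dN_s$, which is \eqref{eqRecursive}. Because the construction is a single measurable map $N=\Phi(\tilde\theta,\overline N)$ that depends on $\tilde\theta$ only through the explicit formula above, plugging in the random path $\theta$ (legitimate since $\theta$ satisfies \textnormal{[E]} a.s.) and using the independence of $\overline N$ and $\theta$ gives a process adapted to $\calf_t=\calf_t^{\theta}\vee\calf_t^{\overline N}$ with $\calf_t$-intensity \eqref{eqRecursive}; and for bounded continuous $f$, conditioning on $\calf_T^{\theta}$ and applying Fubini's theorem to $\esp[f(\Phi(\theta,\overline N))\mid\calf_T^{\theta}]$ gives \eqref{doublyStoProp} with $N^{\tilde\theta}=\Phi(\tilde\theta,\overline N)$.

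I expect the main obstacle to be exactly the a.s.\ finiteness of $N^{(\infty)}$ on $[0,T]$ in the time-\emph{varying}, path-dependent setting: the branching domination must be arranged so that the offspring law of a point depends on its birth time and on the random path $\tilde\theta$, which is why one needs the \emph{uniform} bound $r<1$ from \eqref{condQuotient} (a merely pointwise integrability of the kernel would not suffice). A secondary technical point is the filtration bookkeeping for the thinning projection theorem, namely checking that $\lambda(t)$ is predictable with respect to a history rich enough to contain both $N$ and $\theta$ yet small enough that $\overline N$ still has unit intensity.
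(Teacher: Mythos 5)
Your construction is correct and shares the paper's skeleton: both solve the Poisson-embedding fixed point by monotone iteration of thinnings of the same driving measure $\overline{N}$, pass to the limit by monotone convergence to get the representation (\ref{eqRecursive}), and deduce (\ref{doublyStoProp}) from the independence of $\overline{N}$ and $\theta$. Where you genuinely diverge is in the finiteness step and in the organization. The paper keeps the random path throughout and controls $\esp\l[\int_0^t\lambda(s)ds\,\l|\,\calf_T^{\theta}\r.\r]$ by a recursion on the increments $\rho_t^n=\esp\l[\lambda^n(t)-\lambda^{n-1}(t)\,\l|\,\calf_T^{\theta}\r.\r]$, which contracts by the factor $r<1$ of (\ref{condQuotient}) and sums to the explicit bound $(1-r)^{-1}\int_0^t\nu_s ds$, finite by (\ref{condNu}); the only auxiliary ingredient is Lemma \ref{lemmaMartingale}, guaranteeing that compensated Poisson integrals keep zero conditional mean under the independent enlargement by $\calf_T^{\theta}$. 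You instead freeze a deterministic path, prove almost-sure finiteness by dominating $N^{(\infty)}$ by the total progeny of a subcritical branching (cluster) process with offspring mean at most $r<1$, and then recover the random-parameter statements through the measurable map $\Phi(\tilde{\theta},\overline{N})$ and Fubini. Both routes work and both hinge on the same subcriticality condition; yours buys a transparent probabilistic picture (Hawkes--Oakes clusters) and almost-sure finiteness without any moment computation, but the domination of the thinning iterates by the cluster population is itself a claim that needs a short coupling argument in the inhomogeneous-kernel setting, whereas the paper's expectation recursion is self-contained, yields a quantitative conditional bound that is recycled later (the same scheme drives Lemma \ref{lemmaMomentsGeneral}), and sidesteps the pathwise-then-disintegrate bookkeeping by working under $\esp[\,\cdot\,|\,\calf_T^{\theta}]$ from the start.
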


From now on we assume that $\theta_t^*$ satisfies Condition [E]. Under this assumption, since $N_t^n$ is a time-varying self-exciting process with parameters $(n \nu_t^*, n a_t^*, n b_t^*)$, $N_t^n$ is well-defined and adapted to $\calf_t$.

\smallskip
We describe the statistical procedure, provide a formal definition of the local MLE $\widehat{\Theta}_{i,n}$ as well as its first order bias-corrected version $\widehat{\Theta}_{i,n}^{(BC)}$. We state their asymptotic properties, including the main result of this paper which is the GCLT for $\widehat{\Theta}_n^{(BC)}$ in Theorem \ref{GCLT}. Recall that we have chopped our observations into $B_n$ time blocks of the form $((i-1)\Delta_n, i\Delta_n]$. For any $i \in \{1,...,B_n\}$ and any $\theta \in K$, we consider the regression family of intensities
\bea 
\lambda^{i,n}(t,\theta) = n\nu + \int_{(i-1)\Delta_n}^{t-}{nae^{-nb(t-s)}dN_s^n},
\label{candidateIntensity}
\eea
defined for $t \in ((i-1)\Delta_n, i\Delta_n]$. We now define the Quasi Log Likelihood\footnote{The model is by definition misspecified and thus $l_{i,n}$ is not the log likelihood function of the model.} on the $i$-th block as
\bea 
l_{i,n}(\theta) = \int_{(i-1)\Delta_n}^{i \Delta_n}{\text{log}\l(\lambda^{i,n}(t,\theta)\r)dN_t^n} - \int_{(i-1)\Delta_n}^{i \Delta_n}{\lambda^{i,n}(t,\theta)dt}.
\label{QLL}
\eea
We take the local MLE $\widehat{\Theta}_{i,n}$ as one maximizer of the Quasi Log Likelihood on the $i$-th block  defined as
\bea 
l_{i,n}\l(\widehat{\Theta}_{i,n}\r) = \max_{\theta \in K}l_{i,n}(\theta). 
\label{defLocalMLE}
\eea 
Looking at the form of (\ref{candidateIntensity}), (\ref{QLL}) and (\ref{defLocalMLE}), we can see that $\lambda^{i,n}$, $l_{i,n}$ and $\widehat{\Theta}_{i,n}$ are functions of the $i$-th block's events\footnote{Note that this doesn't mean that $\widehat{\Theta}_{i,n}$ are uncorrelated.}. In particular, we don't take account for the possible preexcitation induced by past events in the expression of the candidate intensity (\ref{candidateIntensity}), as the lower bound of the integral is fixed to $\blockd$. Asymptotically, such approximation is valid because the exponential form of the excitation kernel along with the order of the excitation parameters $(na_t^*, nb_t^*)$ induce a weak-enough influence of the past events on the actual stochastic intensity $\lambda_*^n(t)$. 

\smallskip
In what follows we specify the form of $h_n$ and assume the existence of an exponent $\delta > 1$ such that 
\bea 
h_n = n^{1/ \delta}.
\label{formH}
\eea
We will also have to specify the smoothness of the process $\theta^{*}$ using the following quantities. First, define the regularity modulus of order $p \in \naturels - \{0\}$, at time $t \in [0,T]$ and value $\theta \in K$ as 
\bea 
w_p(t,\theta,r) = \esp\l[\l.\sup_{h \in [0, r \wedge (T - t)]} |\theta_{t+h}^{*} - \theta_t^{*}|^p\r| \calf_t, \theta_t^{*} = \theta \r], \text{   } r > 0.
\eea
We then define the global regularity modulus as 
\bea 
w_p(r) = \sup_{(t,\theta) \in [0,T] \times K} w_p(t,\theta,r), \text{  } r>0. 
\eea 

We introduce the following conditions needed to obtain the LCLT and the boundedness of moments.
\bd
\im[[C\!\!]] 
  \bd 

  \im[{\bf (i)}] There exists an exponent $\gamma \in  (0,1]$, such that for  $r \to 0$, we have
  \bea
  w_p(r)= O_\proba \l(r^{\gamma p}\r).
  \eea 
  \im[{\bf (ii)}] $\delta$ and $\gamma$ satisfy the relation 
                   \bea 
                    \delta > 1+ \inv{\gamma}.
                   \eea
  \im[{\bf (iii)}] The excitation parameters $a_t^*$ and $b_t^*$ satisfy 
                    \bea \label{condBoundedMoments} c :=  \sup_{(t,n) \in [0,T] \times \naturels} \int_0^t{n a_s^* e^{-n b_s^*(t-s)}ds}  < 1 \text{    } \proba-a.s.
                     \eea
  \ed
\ed
Note that the conditional expectation $\esp\l[ .| \calf_s, \theta_s^{*} = \theta \r]$ refers to the operator $\esp[.|\calf_s]$ conditioned on $\theta_s^* = \theta$. By definition, for a $\calf$-measurable random variable $X$, if we write $G_X(\theta) = \esp\l[ X| \calf_s, \theta_s^{*} = \theta \r]$, the relationship between both expecations can be expressed as $\esp[X|\calf_s]=G_X(\theta_s^{*}) $. The justification of the existence of $\esp\l[ .| \calf_s, \theta_s^{*} = \theta \r]$ can be found in Section \ref{sectionLCLT}. Condition [C]-{\bf (i)} quantifies the regularity of the process $\theta_t^*$ through the regularity exponent $\gamma$. A natural example of a process satisfying [C]-{\bf (i)} is the drift function, i.e. of the form 
\begin{eqnarray}
\label{drift}
\theta_t^* = \theta_0^* + \int_0^t{u_s^* ds}, 
\end{eqnarray}                           
where $u^*$ is a stochastic process that takes its values in a compact subset of $\reels^3$. Another example is a smoothed version of the Brownian motion that can be obtained as follows. Take some $\tau > 0$, a positive vector $\theta^{(M)} \in \reels^3$, a positive diagonal matrix $\sigma = \text{diag}(\sigma^{\nu},\sigma^{a},\sigma^{b})$ and consider the process 
\bea 
\theta_t^{*} = \theta^{(M)} + \frac{\sigma} {\tau}\int_{t-\tau}^t{W_s ds},
\label{smoothBM}
\eea 
where $(W_t)_{t \in \l[-\tau,T\r]}$ is a 3-dimensional standard Brownian motion. One can confine $\theta_t^{*}$ in a compact space by stopping the process $W$ when it reaches some critical value. This second example is useful to model the stochastic component of the parameter as a nuisance process, and we use (\ref{smoothBM}) in our simulation study. Note that the smaller $\tau$, the less auto-correlated $\theta_t^{*}$ will be, and that we would be back to a Brownian motion in the limit $\tau \to 0$. For both examples (\ref{drift}) and (\ref{smoothBM}) we have $\gamma = 1$, but note that the correlation structure of $\theta_t^{*}$ may be very complex though (to do so, we can take any process $u_t^{*}$ which has a complex correlation structure).\\

\smallskip 

Condition [C]-{\bf (ii)} controls the lower bound of $h_n$ and is necessary to derive the LCLT and the local boundedness of moments. In particular, as $\gamma \leq 1$, [C]-{\bf (ii)} implies that $h_n = o(\sqrt n)$. This was stated in (\ref{localass}). Finally [C]-{\bf (iii)} is an additional condition that ensures the existence of moments of $N^n$. We can see that  [C]-{\bf (iii)} is automatically satisfied if $\underline{a} \leq a^* \leq \overline{a}$, $\underline{b} \leq b^* \leq \overline{b}$ and $\overline{a} < \underline{b}$.

\smallskip
We specify now the value of the exponent
\bea 
\kappa = \gamma(\delta -1) > 1,
\label{defKappa}
\eea
where the inequality is a direct consequence of [C]-{\bf (ii)}. For $\theta \in K$, the positive symmetric matrix $\Gamma(\theta)$ is defined as the asymptotic Fisher information of a parametric Hawkes process generated by $\theta$ and can be found in (\ref{paramFisher}). The next theorem encompasses the LCLT and the local convergence of moments of order smaller than $2\kappa$ of the rescaled local MLE $\sqrt{h_n}\l(\widehat{\Theta}_{i,n} - \theta_{(i-1) \Delta_n}^*\r)$.    

\begin{theorem*} \label{conditionalCLT} (LCLT and boundedness of moments)
Let $L \in [0,2\kappa)$. Under \textnormal{[C]}, we have uniformly in $i \in \{ 1, \cdots, B_n \}$ that 
\bea 
	\esp_{(i-1) \Delta_n} \l[ f \l( \sqrt{h_n}\l(\widehat{\Theta}_{i,n} - \theta_{(i-1) \Delta_n}^*\r) \r) \r] =  \esp_{(i-1) \Delta_n} \l[ f \l( T^{-\half}\Gamma\l(\theta_{(i-1) \Delta_n}^*\r)^{-\frac{1}{2}}\xi \r) \r] + o_\proba(1)
	\label{convMoment2}
\eea
for any continuous function $f$ with $|f(x)| = O(|x|^{L})$ when $ |x| \to \infty$ , and such that $\xi$ follows a standard normal distribution and is independent of $\calf$.
\end{theorem*}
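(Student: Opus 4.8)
\emph{Proof strategy.} The plan is to reduce the statement to a Quasi Likelihood Analysis (QLA) in the sense of \cite{YoshidaPolynomial2011} for a \emph{parametric} Hawkes process observed over a long time window, for which the theory of polynomial-type large deviation inequalities yields simultaneously the LCLT and the convergence of moments up to the critical order $2\kappa$. First I would exploit the doubly stochastic property \eqref{doublyStoProp}: conditionally on the path of $\theta^*$, the restriction of $N^n$ to the $i$-th block is an inhomogeneous Hawkes process with the deterministic parameter function $t \mapsto \theta_t^*$. It therefore suffices to argue for a deterministic parameter function and re-insert the randomness through the operator $\esp[\,\cdot\,|\calf_{(i-1)\Delta_n},\theta_{(i-1)\Delta_n}^*=\theta]$; every intermediate estimate will be sought uniformly in the block index $i$, in the frozen value $\theta\in K$, and in $n$, which is how the uniformity in $i\in\{1,\dots,B_n\}$ in \eqref{convMoment2} is obtained.

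Next I would rescale time by $n$: with $\widetilde N_u := N^n_{u/n}$, the process $\widetilde N$ has on the rescaled window $((i-1)Th_n, iTh_n]$, of length $Th_n\to\infty$, the $O(1)$ intensity $\nu^*_{u/n}+\int a^*_{v/n} e^{-b^*_{v/n}(u-v)}\,d\widetilde N_v$, i.e.\ a Hawkes process whose parameters vary only on the slow scale $n$. Freezing the parameter at $\theta_{(i-1)\Delta_n}^*$ produces a genuinely stationary parametric Hawkes process; the error is controlled by the regularity modulus, since over the block the parameter moves by $O_\proba(\Delta_n^\gamma)$ in every $L^p$ by Condition [C]-{\bf (i)}, and $\Delta_n^\gamma\asymp h_n^{-\kappa}$ by \eqref{formH} and \eqref{defKappa}. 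Condition [C]-{\bf (ii)} is precisely what keeps $\sqrt{h_n}$ times this deviation negligible, in $L^{2\kappa}$ and not merely in probability. In parallel I would dispose of the ``restart'' approximation built into \eqref{candidateIntensity}: the preexcitation term $\int_0^{(i-1)\Delta_n} n a_s^* e^{-nb_s^*(t-s)}\,dN_s^n$ omitted there decays exponentially at rate $\geq n\underline b$, so on the rescaled scale it influences only an $O(1)$-length initial layer of a window of length $Th_n\to\infty$ and is handled with moment bounds on $N^n$.

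With these reductions in place, the core is the QLA for the quasi log-likelihood \eqref{QLL}. I would (i) establish a law of large numbers for the additive functionals of the near-stationary Hawkes process on a block (quasi score, observed information and their $\theta$-derivatives), using the exponential forgetting of the Markovian representation afforded by the exponential kernel, so that the normalized quasi-likelihood converges and $\widehat\Theta_{i,n}$ is consistent toward $\theta_{(i-1)\Delta_n}^*$; (ii) verify the non-degeneracy, namely that the asymptotic Fisher information $\Gamma(\theta)$ is positive definite uniformly over $K$; and (iii) establish the polynomial-type large deviation inequality for the likelihood-ratio random field $u\mapsto \exp\big(l_{i,n}(\theta_{(i-1)\Delta_n}^*+h_n^{-1/2}u)-l_{i,n}(\theta_{(i-1)\Delta_n}^*)\big)$, i.e.\ bounds of the form $\esp_{(i-1)\Delta_n}\big[\sup_{|u|>R}(\,\cdot\,)^{1/2}\big]\lesssim R^{-M}$ for $M$ up to the order permitted by the available moments. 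Steps (i)--(iii) are the hypotheses of Yoshida's QLA theorem, which then delivers both the convergence $\sqrt{h_n}(\widehat\Theta_{i,n}-\theta_{(i-1)\Delta_n}^*)\to^d T^{-\half}\Gamma(\theta_{(i-1)\Delta_n}^*)^{-\frac{1}{2}}\xi$ and $L^q$-convergence for every $q<2\kappa$; testing against a continuous $f$ with $|f(x)|=O(|x|^L)$, $L<2\kappa$, and invoking the resulting uniform integrability gives \eqref{convMoment2}.

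The main obstacle I expect is making all of the above uniform in $i$, in the frozen value $\theta\in K$, and in $n$, for a Hawkes process whose parameters are of magnitude $n$ and whose background is itself random. The sharpest points are the moment estimates $\esp_{(i-1)\Delta_n}\big[(N^n_{i\Delta_n}-N^n_{(i-1)\Delta_n})^p\big]=O(h_n^p)$ together with the corresponding bounds on stochastic integrals against $dN^n$, which rest on the subcriticality Condition [C]-{\bf (iii)} through a cluster-representation/Gronwall argument, and the uniform polynomial-type large deviation inequality, where the long memory of the exponential kernel and the need to absorb the time-variation error make the bookkeeping relating $\delta$, $\gamma$ and $\kappa$ delicate; this is exactly the place where the constraint $\delta>1+1/\gamma$ and the critical exponent $2\kappa=2\gamma(\delta-1)$ are forced.
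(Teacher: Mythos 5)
Your proposal matches the paper's own strategy in all essentials: a time change reducing each block to a long-run window of length $h_nT$, approximation by the parametric Hawkes model frozen at $\theta_{(i-1)\Delta_n}^*$ with error of order $h_n^{-\kappa}$ controlled by [C]-{\bf (i)}--{\bf (ii)}, treatment of the pre-excitation via the exponential decay of the kernel (the paper additionally localizes on $\{R_{i,n}(0)\leq M_n\}$ and removes the localization at the end by Markov's inequality), uniformity in $i$ and in the frozen value obtained through the conditional operator $\esp[\,\cdot\,|\calf_0^{i,n},\theta_0^{i,n,*}=\theta_0]$, and finally Yoshida's QLA (polynomial-type large deviation inequality) to obtain simultaneously the uniform CLT and the convergence of conditional moments of order $L<2\kappa$. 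The paper implements the freezing step via an explicit coupling with a constant-parameter Hawkes process driven by the same Poisson measure and $L^p$-proximity lemmas for the score, information and third derivatives, but this is the technical realization of exactly the reduction you describe, so your approach is essentially the same.
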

We now introduce the first-order bias-corrected local MLE for any $i \in \{1,...,B_n\}$ as 

\bea 
\widehat{\Theta}_{i,n}^{(BC)} = \widehat{\Theta}_{i,n} - \frac{b\l(\widehat{\Theta}_{i,n}\r)}{h_n T},
\label{localMLEBC}
\eea
where $b$ is defined in (\ref{paramb}), Section \ref{sectionMLEclassic}, and should be compared to its very similar form for the classical i.i.d case, see e.g. \cite{cox1968general}. We finally recall the definition of the global bias-corrected estimator that was introduced in (\ref{bcnewestimator}), i.e.
\begin{eqnarray}
\label{bcnewestimator2} \widehat{\Theta}_n^{(BC)} = \frac{1}{B_n} \sum_{i=1}^{B_n} \widehat{\Theta}_{i,n}^{(BC)}.
\end{eqnarray}
In the next theorem, the expression $x \wedge y$ stands for $\min \{x,y\}$.

\begin{theorem*} (bias correction) \label{biasCorrection2}
Let $\epsilon \in (0,1)$. The bias of the estimator $\widehat{\Theta}_{i,n}$ admits the expansion
\bea 
\esp_{(i-1)\Delta_n}\l[\widehat{\Theta}_{i,n} - \theta_{(i-1)\Delta_n}^*\r] = \frac{b\l(\theta_{(i-1)\Delta_n}^*\r)}{h_n T} + O_\proba\l(h_n^{-\epsilon \l(\kappa \wedge \frac{3}{2}\r)}\r), 
\label{biasExpansion2}
\eea 
uniformly in $i \in \{1,..., B_n\}$. Moreover, the estimator $\widehat{\Theta}_{i,n}^{(BC)}$ has the uniform bias expansion
\bea 
\esp_{(i-1)\Delta_n}\l[\widehat{\Theta}_{i,n}^{(BC)} - \theta_{(i-1)\Delta_n}^*\r] =  O_\proba \l(h_n^{-\epsilon \l(\kappa \wedge \frac{3}{2}\r)}\r) .
\label{EqBiasCorrection2}
\eea 
\end{theorem*}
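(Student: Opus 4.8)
The plan is to prove (\ref{biasExpansion2}) by a fourth-order stochastic Taylor expansion of the local MLE around $\theta_0 := \theta_{(i-1)\Delta_n}^*$, to identify the $h_n^{-1}$ coefficient with the classical Cox--Snell bias of the \emph{parametric} Hawkes model with parameter $\theta_0$ (thereby matching the definition of $b$ in (\ref{paramb})), and to push everything else into a remainder controlled via the moment bounds of Theorem \ref{conditionalCLT}; then (\ref{EqBiasCorrection2}) follows from (\ref{biasExpansion2}) by a one-term Taylor expansion of $b$. All $O_\proba(\cdot)$ below are meant uniformly in $i\in\{1,\dots,B_n\}$, which is legitimate because Theorem \ref{conditionalCLT} is uniform in $i$ and the regularity modulus $w_p$ is a global supremum.

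\textbf{Step 1: stochastic expansion of the local MLE.} Let $\dot l_{i,n},\ddot l_{i,n},\dddot l_{i,n},l_{i,n}^{(4)}$ denote the successive $\theta$-derivatives of the quasi-log-likelihood (\ref{QLL}). On $\{\widehat\Theta_{i,n}\in\mathrm{int}\,K\}$ the first-order condition $\dot l_{i,n}(\widehat\Theta_{i,n})=0$ holds; its complement contributes $O_\proba(h_n^{-\kappa})$ to the bias by Theorem \ref{conditionalCLT} and Markov's inequality, which is negligible relative to the target rate. Taylor's formula to fourth order, together with two successive substitutions of $\widehat\Theta_{i,n}-\theta_0=g_{i,n}+O_\proba(h_n^{-1})$ into the quadratic term, yields
\[
\widehat\Theta_{i,n}-\theta_0 \;=\; g_{i,n}\;-\;\tfrac12\,\ddot l_{i,n}(\theta_0)^{-1}\dddot l_{i,n}(\theta_0)\,g_{i,n}^{\otimes 2}\;+\;R_{i,n},\qquad g_{i,n}:=-\ddot l_{i,n}(\theta_0)^{-1}\dot l_{i,n}(\theta_0),
\]
with $R_{i,n}=O_\proba(h_n^{-3/2})$; here one uses $\dot l_{i,n}(\theta_0)=O_\proba(h_n^{1/2})$, $\ddot l_{i,n}(\theta_0)^{-1}=O_\proba(h_n^{-1})$ with $-h_n^{-1}\ddot l_{i,n}(\theta_0)\to^\proba T\Gamma(\theta_0)$, and $\dddot l_{i,n},l_{i,n}^{(4)}=O_\proba(h_n)$ uniformly on $K$, the latter resting on the existence of moments of $N^n$, which is guaranteed by [C]-{\bf (iii)}.

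\textbf{Step 2: the $h_n^{-1}$ term.} One first writes $\dot l_{i,n}(\theta_0)$ as the sum of the stochastic integral of $\partial_\theta\log\lambda^{i,n}(\cdot,\theta_0)$ against the compensated measure $dN^n-\lambda_*^n\,dt$ --- an $\calf_{(i-1)\Delta_n}$-martingale increment --- plus the model--mismatch integral of $\partial_\theta\log\lambda^{i,n}(\cdot,\theta_0)$ against $(\lambda_*^n-\lambda^{i,n}(\cdot,\theta_0))\,dt$, which comes from the variation of $\theta_t^*$ on the block and the omitted pre-block excitation and is $O_\proba(h_n\,w_1(\Delta_n))+(\text{exponentially small})=O_\proba(h_n^{1-\kappa})$ by [C]-{\bf (i)}, (\ref{defKappa}) and $\Delta_n=Th_n^{-(\delta-1)}$. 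Hence $\esp_{(i-1)\Delta_n}[g_{i,n}]$ splits into a negligible part (from replacing $\ddot l_{i,n}(\theta_0)^{-1}$ by $(-h_nT\Gamma(\theta_0))^{-1}$ and the above) plus a genuinely $O_\proba(h_n^{-1})$ part coming from the conditional covariance between the normalized observed information and the normalized score; combined with $\esp_{(i-1)\Delta_n}$ of the quadratic term (also $O_\proba(h_n^{-1})$), the sum of the two $h_n^{-1}$ contributions is --- by the doubly-stochastic property (\ref{doublyStoProp}) of Theorem \ref{lemmaExistence} and the negligibility of pre-block excitation discussed after (\ref{defLocalMLE}) --- exactly the Cox--Snell bias (cf.\ \cite{cox1968general}) of a parametric Hawkes model with parameter $\theta_0$ and effective sample size of order $h_n$, which is by definition $b(\theta_0)/(h_nT)$.

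\textbf{Step 3: controlling the remainder, and the bias-corrected estimator.} The main obstacle is to upgrade the $O_\proba(h_n^{-3/2})$ control of $R_{i,n}$ and of the sub-leading parts of the first two terms of the expansion to bounds on their $\esp_{(i-1)\Delta_n}$; this is carried out by Hölder's inequality fed with the moment bounds of Theorem \ref{conditionalCLT} on $\sqrt{h_n}(\widehat\Theta_{i,n}-\theta_0)$ and with the uniform-on-$K$ moment bounds on the derivatives of $l_{i,n}$. Since those moments are available only up to order $2\kappa$, and strictly so, a term of nominal rate $h_n^{-3/2}$ carrying a cubic power of $\widehat\Theta_{i,n}-\theta_0$ can only be controlled in conditional expectation at rate $h_n^{-L/2}$ for $L<2\kappa$, i.e.\ at rate $h_n^{-\epsilon(\kappa\wedge\frac{3}{2})}$ with arbitrary $\epsilon\in(0,1)$: this is precisely the source of both the cap $\kappa\wedge\frac{3}{2}$ and the slack $\epsilon$ in (\ref{biasExpansion2}). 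Finally, since $\Gamma$ is invertible on $K$ the function $b$ is $C^1$ there, so $b(\widehat\Theta_{i,n})=b(\theta_0)+\nabla b(\theta_0)\cdot(\widehat\Theta_{i,n}-\theta_0)+O(|\widehat\Theta_{i,n}-\theta_0|^2)$; taking $\esp_{(i-1)\Delta_n}$ and using (\ref{biasExpansion2}) together with $\esp_{(i-1)\Delta_n}[|\widehat\Theta_{i,n}-\theta_0|^2]=O_\proba(h_n^{-1})$ gives $\esp_{(i-1)\Delta_n}[b(\widehat\Theta_{i,n})]/(h_nT)=b(\theta_0)/(h_nT)+O_\proba(h_n^{-2})$, which cancels the leading term of (\ref{biasExpansion2}) in $\esp_{(i-1)\Delta_n}[\widehat\Theta_{i,n}^{(BC)}-\theta_0]=\esp_{(i-1)\Delta_n}[\widehat\Theta_{i,n}-\theta_0]-\esp_{(i-1)\Delta_n}[b(\widehat\Theta_{i,n})]/(h_nT)$ and yields (\ref{EqBiasCorrection2}).
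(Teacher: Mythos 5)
Your skeleton is essentially the paper's own route (Taylor expansion of the likelihood equation at $\theta_0=\theta^*_{(i-1)\Delta_n}$, martingale-plus-mismatch decomposition of the score, identification of the $h_n^{-1}$ coefficient with the parametric bias functional $b$, Hölder plus uniform conditional moment bounds for the remainders, then a Lipschitz expansion of $b(\widehat{\Theta}_{i,n})$ for the bias-corrected part), but the central step is asserted rather than proved. The claim that the sum of the two $h_n^{-1}$ contributions ``is by definition $b(\theta_0)/(h_nT)$'' is not a definition: in (\ref{paramb}), $b$ is built from the long-run ergodic limits $\Gamma(\theta_0),K(\theta_0),C(\theta_0),Q(\theta_0)$ of time-averaged functionals of a stationary parametric Hawkes process. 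Matching the block-level conditional covariance between the observed information and the score, and the third-derivative term, to these limits with an error of the stated size is exactly the content of the paper's Lemmas \ref{lemmaCovariance}, \ref{ordre2} and \ref{ordre3}, and it requires (i) coupling the data with a frozen-parameter Hawkes process driven by the same Poisson measure and a quantitative $\mathbb{L}^p$ comparison at rate $h_n^{-\epsilon\kappa}$ (Lemma \ref{boundedDeviation}) --- your pointwise bound on $\lambda_*^n-\lambda^{i,n}(\cdot,\theta_0)$ handles the drift mismatch in the score but not the replacement of the covariance and third-derivative functionals; (ii) the bracket identity of Lemma \ref{lemmaCovariance} identifying the score--information covariance as $C+Q$; and (iii) the uniform-in-$(\theta_0,i)$ convergences of the block quantities to $\Gamma,K,C,Q$ at rate $O_\proba(h_n^{-\epsilon/2})$ ((\ref{eqCConstant}), (\ref{eqQConstant}), (\ref{eqKConstant})), imported from the mixing results of \cite{Clinet20171800}; the final $O_\proba(h_n^{-\epsilon(\kappa\wedge\frac{3}{2})})$ in (\ref{biasExpansion2}) depends on these rates, so invoking Cox--Snell and the doubly stochastic property does not deliver it.

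Two further points would fail as written. First, your solved-for expansion $\widehat{\Theta}_{i,n}-\theta_0=g_{i,n}-\tfrac12\ddot l_{i,n}(\theta_0)^{-1}\dddot l_{i,n}(\theta_0)g_{i,n}^{\otimes2}+R_{i,n}$ requires taking conditional expectations of terms containing the random inverse $\ddot l_{i,n}(\theta_0)^{-1}$, which has no a priori finite moments; the paper avoids this by applying the conditional expectation to the score expansion (\ref{taylorConstant}) \emph{before} inverting, so that only the deterministic system $\esp_{\theta_0,i,n}[\Gamma^c_{i,n}(\theta_0)]\,\esp_{\theta_0,i,n}[\widehat{\Theta}_{i,n}-\theta_0]$ is inverted, and it controls $\Gamma^c_{i,n}(\theta_0)^{-1}$ only on the high-probability event $\mathbb{B}_n(\theta_0)$ of Lemma \ref{lemmaConvInv}; you need an analogous localization. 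Second, the statement is under $\esp_{(i-1)\Delta_n}$, while the uniform moment machinery is only valid after truncating the pre-excitation $\{R_{i,n}(0)\leq M_n\}$ and conditioning on the initial parameter value; the unconditioning step (regular conditional distributions plus the Markov bound on $\proba[R_{i,n}(0)>M_n]$, as in the proof of Theorem \ref{conditionalCLT}) is missing from your argument. Finally, attributing the cap $\kappa\wedge\frac{3}{2}$ solely to the availability of moments below order $2\kappa$ is not accurate: the $\kappa$ part comes from the model-substitution error, whose rate is $h_n^{-\kappa}$ in $\mathbb{L}^p$ for every $p$ (Lemma \ref{boundedDeviation}), while $\frac{3}{2}$ comes from the third-order and product terms. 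Your final step deriving (\ref{EqBiasCorrection2}) from (\ref{biasExpansion2}) via a Lipschitz expansion of $b$ is fine (and slightly sharper than the paper's, which redoes the expansion argument to get $\esp_{\theta_0,i,n}[b(\widehat{\Theta}_{i,n})]=b(\theta_0)+O_\proba(h_n^{-\epsilon/2})$), provided you justify the regularity of $\theta\mapsto b(\theta)$ on $K$.
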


Now our aim is to combine Theorem \ref{conditionalCLT} and Theorem \ref{biasCorrection2} to state the asymptotic properties of the global estimator. In the following there are two parts. The main one gives the GCLT when the parameter is assumed to be sufficiently smooth. The second part investigates what happens when the parameter is rough. 

\subsection{Global central limit theorem when parameters are smooth} 

In this section, we state an additional condition on $\delta$ and $\gamma$ so that $\widehat{\Theta}_n^{(BC)}$ is asymptotically unbiased. 

\bd
\im[[BC\!\!]] $\delta$ and $\gamma$ satisfy the relation 
\bea 
\frac{\gamma}{\gamma-\half} < \delta < 3.
\eea 
\ed
Intuitively, the left-hand side inequality in [BC] ensures that the size of each block is not too big so that the bias induced by the parameter process $\theta_t^*$ itself is negligible. On the contrary, the right-hand side inequality is a sufficient condition to keep under control the finite-sample bias of the local MLE by avoiding too small blocks. More precisely, Condition \textnormal{[BC]} implies in particular that the  exponent $\gamma \in \l(\frac{3}{4},1\r]$. Note that such condition excludes the class of It\^{o}-processes as a parameter process. Moreover, on $\l(\frac{3}{4},1\r]$ we have $\frac{\gamma}{\gamma-\half} \geq 1 + \inv{\gamma}$ with equality for $\gamma =1$, and thus \textnormal{[BC]} is a stronger condition than \textnormal{[C]}-{\bf (ii)}. For instance, in the Lipschitz case $\gamma = 1$,  \textnormal{[BC]} (and thus \textnormal{[C]}-{\bf (ii)}) are satisfied for $ 2 < \delta < 3$. This means by definition of $\delta$ that $h_n$ must be taken so that $n^{\inv{3}} = o(h_n)$ and $h_n = o(n^\half)$. \\%The case where $\gamma \notin \l(\frac{3}{4},1\r]$ is treated below.\\
\medskip

We finally state the main result of this work which investigates the limit error of the bias-corrected estimator $\widehat{\Theta}_{n}^{(BC)}$.
\begin{theorem*} (GCLT)
\label{GCLT}
 Assume that \textnormal{[C]} and \textnormal{[BC]} hold. Then, $\calf_T^{\theta^{*}}$-stably in law as $n \rightarrow \infty$,
\begin{eqnarray}
 \label{GCLTeq} \sqrt{n} \big( \widehat{\Theta}_n^{(BC)} - \Theta \big) \rightarrow \l(T^{-2} \int_0^T \Gamma(\theta_s^*)^{-1} ds \r)^{\frac{1}{2}} \mathcal{N} (0,1),
\end{eqnarray}
where $\mathcal{N} (0,1)$ is independent from the $\sigma$-field $\mathcal{F}_T^{\theta^{*}}$.
\end{theorem*}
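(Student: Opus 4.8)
The plan is to carry out the martingale-decomposition programme rehearsed for the toy model in Section \ref{outline}, now feeding in Theorem \ref{conditionalCLT} and Theorem \ref{biasCorrection2} as the two analytic inputs. Write $\Delta_n = Th_n/n$ and $B_n = n/h_n$. The first step is to replace $\Theta$ by a Riemann sum: by Condition \textnormal{[C]}-{\bf (i)}, $\Theta = \frac{1}{B_n}\sum_{i=1}^{B_n}\theta_{(i-1)\Delta_n}^{*} + O_\proba\l(\Delta_n^{\gamma}\r)$, and $\sqrt n\,\Delta_n^{\gamma}=O_\proba\l(n^{\frac12-\gamma+\gamma/\delta}\r)$, which is $o_\proba(1)$ exactly when $\delta>\gamma/(\gamma-\half)$, the left-hand inequality of \textnormal{[BC]} (this is also where $\gamma>\half$ enters). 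Hence it suffices to prove that $\frac{\sqrt n}{B_n}\sum_{i=1}^{B_n}\l(\widehat\Theta_{i,n}^{(BC)}-\theta_{(i-1)\Delta_n}^{*}\r)$ converges $\calf_T^{\theta^{*}}$-stably to $\l(T^{-2}\int_0^T\Gamma(\theta_s^{*})^{-1}ds\r)^{1/2}\mathcal N(0,1)$, the square root being a matrix square root and $\mathcal N(0,1)$ a standard normal vector independent of $\calf_T^{\theta^{*}}$.

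Set $\zeta_{i,n}=\widehat\Theta_{i,n}^{(BC)}-\theta_{(i-1)\Delta_n}^{*}$, $B_{i,n}=\esp_{(i-1)\Delta_n}[\zeta_{i,n}]$ and $M_{i,n}=\zeta_{i,n}-B_{i,n}$; since each $\widehat\Theta_{i,n}^{(BC)}$ is a function of the $i$-th block's events, $(M_{i,n})_{i=1}^{B_n}$ is a martingale difference array for the discrete filtration $(\calf_{i\Delta_n})_i$, so cross-block covariances vanish automatically. Split the sum as $S_n^{(M)}+S_n^{(B)}$ with $S_n^{(B)}=\frac{\sqrt n}{B_n}\sum_i B_{i,n}$. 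By Theorem \ref{biasCorrection2}, $B_{i,n}=O_\proba\l(h_n^{-\epsilon(\kappa\wedge 3/2)}\r)$ uniformly in $i$ for any $\epsilon\in(0,1)$, so $S_n^{(B)}=O_\proba\l(n^{\frac12-\epsilon(\kappa\wedge 3/2)/\delta}\r)$. Using $\kappa=\gamma(\delta-1)$, a short case split ($\gamma(\delta-1)\ge\tfrac32$ versus $\gamma(\delta-1)<\tfrac32$) shows that one can choose $\epsilon<1$ with $\epsilon(\kappa\wedge\tfrac32)>\delta/2$ precisely under the two inequalities of \textnormal{[BC]} --- respectively $\delta<3$ and $\delta>\gamma/(\gamma-\half)$ --- whence $S_n^{(B)}\to^{\proba}0$. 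This is the step that pins down the exact shape of Condition \textnormal{[BC]}.

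It remains to apply a stable martingale central limit theorem to $S_n^{(M)}$ (e.g. Chapter 3 of \cite{hall1980martingale} together with a conditioning argument, or Theorem IX.7.28 of \cite{JacodLimit2003}), for which I would verify three things. \emph{Conditional covariance}: since $\widehat\Theta_{i,n}^{(BC)}-\widehat\Theta_{i,n}=O_\proba(h_n^{-1})$ (deterministically, as $b(\cdot)$ is bounded on the compact $K$) and $\sqrt{h_n}\,B_{i,n}=o_\proba(1)$ uniformly by Theorem \ref{biasCorrection2} (using $\kappa>1$), the rescaled increment $\sqrt{h_n}\,M_{i,n}$ equals $\sqrt{h_n}\l(\widehat\Theta_{i,n}-\theta_{(i-1)\Delta_n}^{*}\r)$ up to a uniform $o_\proba(1)$; feeding $f$ linear, quadratic, and equal to $|\cdot|^{2\kappa'}$ for some $\kappa'\in(1,\kappa)$ into Theorem \ref{conditionalCLT} (legitimate since $2\kappa'<2\kappa$) then yields, componentwise and uniformly in $i$, that $\esp_{(i-1)\Delta_n}\l[(\sqrt{h_n}M_{i,n})(\sqrt{h_n}M_{i,n})^{\top}\r]=T^{-1}\Gamma\l(\theta_{(i-1)\Delta_n}^{*}\r)^{-1}+o_\proba(1)$, so that $\sum_i(\tfrac{\sqrt n}{B_n})^2\esp_{(i-1)\Delta_n}[M_{i,n}M_{i,n}^{\top}]=\tfrac1{B_n}\sum_i T^{-1}\Gamma(\theta_{(i-1)\Delta_n}^{*})^{-1}+o_\proba(1)$, and the Riemann sum converges in probability to $T^{-2}\int_0^T\Gamma(\theta_s^{*})^{-1}ds$ by continuity of $s\mapsto\Gamma(\theta_s^{*})^{-1}$ on $K$ and boundedness. \emph{Conditional Lyapunov condition}: with the same $\kappa'$, $\sum_i\esp_{(i-1)\Delta_n}\l[\l|\tfrac{\sqrt n}{B_n}M_{i,n}\r|^{2\kappa'}\r]=O_\proba\l((h_n/n)^{\kappa'-1}\r)\to^{\proba}0$ since $\kappa'>1$ and $h_n/n=n^{1/\delta-1}\to0$. \emph{Stability}: I would obtain $\calf_T^{\theta^{*}}$-stability by conditioning on the whole path of $\theta^{*}$ --- by Theorem \ref{lemmaExistence} and the independence of $\overline N$ from $\theta$, conditionally on $\calf_T^{\theta^{*}}$ the process $N^n$ is an inhomogeneous Hawkes process with a deterministic parameter, $M_{i,n}$ is a conditional martingale difference array with respect to $(\calf^{\overline N}_{i\Delta_n})$, its conditional covariance converges to the now deterministic matrix $T^{-2}\int_0^T\Gamma(\theta_s^{*})^{-1}ds$, and the ordinary martingale CLT applies path-by-path; integrating back and using bounded convergence on characteristic functions yields the mixed-Gaussian limit independent of $\calf_T^{\theta^{*}}$.

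I expect the main obstacle to be the joint bookkeeping in the conditional-covariance step: one has to check, uniformly in $i\in\{1,\dots,B_n\}$, that subtracting $b(\widehat\Theta_{i,n})/(h_nT)$ and then conditionally centring neither disturbs the leading term $T^{-1}\Gamma(\theta_{(i-1)\Delta_n}^{*})^{-1}$ nor resurrects a non-negligible bias, and this balancing of the exponents $\gamma,\delta,\kappa$ is exactly what forces the window \textnormal{[BC]} and genuinely requires the uniform moment bounds of order $2\kappa>2$ from Theorem \ref{conditionalCLT}, not merely the distributional part of the LCLT. A secondary technical point is making the distributional-to-stable upgrade rigorous, for which the doubly stochastic representation of Theorem \ref{lemmaExistence} together with a subsequence/bounded-convergence argument is the right tool.
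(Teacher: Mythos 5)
Your overall architecture is the one the paper actually uses, and most of your verification matches Section \ref{GCLTproof} step by step: the Riemann-sum replacement of $\Theta$ via [C]-{\bf (i)} with the rate $\sqrt{n}\Delta_n^{\gamma}$ killed by the left inequality of [BC]; the control of the bias array through Theorem \ref{biasCorrection2}, where the requirement that some $\epsilon<1$ satisfy $\epsilon(\kappa\wedge\frac{3}{2})>\delta/2$ reproduces exactly the two inequalities of [BC]; and the conditional-variance and Lyapunov (equivalently Lindeberg) inputs obtained from Theorem \ref{conditionalCLT} after checking that the correction $b(\widehat{\Theta}_{i,n})/(h_nT)$ and the conditional centering are negligible at scale $\sqrt{h_n}$. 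The paper packages these three facts as Condition [$\textnormal{C}^*$] and then applies a stable martingale CLT (Theorem 3.2 of \cite{Jacod1997}), so up to presentation this part of your proposal is the paper's proof.

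The one point where you genuinely diverge is the stability step, and there your argument has a gap. You assert that, conditionally on $\calf_T^{\theta^{*}}$, the array $(M_{i,n})_i$ is a martingale difference array with respect to $(\calf^{\overline{N}}_{i\Delta_n})_i$. This is false as stated: $M_{i,n}=\zeta_{i,n}-\esp[\zeta_{i,n}\mid\calf_{(i-1)\Delta_n}]$ is centered with respect to $\calf_{(i-1)\Delta_n}=\calf^{\theta^{*}}_{(i-1)\Delta_n}\vee\calf^{\overline{N}}_{(i-1)\Delta_n}$, which carries no information on the path of $\theta^{*}$ inside the $i$-th block, whereas under the conditional law given $\calf_T^{\theta^{*}}$ that path is known; consequently $\esp[M_{i,n}\mid\calf^{\overline{N}}_{(i-1)\Delta_n}\vee\calf_T^{\theta^{*}}]$ is in general nonzero, of the order of the parameter deviation over a block, and after summing and scaling by $\sqrt{n}$ it is not obviously negligible without a new bias analysis. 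Repairing this would require re-centering under the path-conditional law and proving conditional-on-path analogues of Theorems \ref{conditionalCLT} and \ref{biasCorrection2} (whose statements condition on $\calf_{(i-1)\Delta_n}$, not on $\calf_T^{\theta^{*}}$); the bounded-convergence/subsequence device you mention handles the integration back but not this re-centering. The paper avoids the issue by verifying Jacod's orthogonality condition directly: for any bounded $\F^{\theta^{*}}$-martingale $Z$, the leading term of $M_{i,n}$ is (up to negligible remainders) a multiple of the score $\partial_\theta l^c_{i,n}(\theta^{*}_{(i-1)\Delta_n})$, which is a stochastic integral against the compensated measure $\overline{N}^{i,n}-\overline{\Lambda}^{i,n}$ and hence orthogonal to $Z$ by independence of $\overline{N}$ and $\theta^{*}$, so that $\esp_{(i-1)\Delta_n}[M_{i,n}\Delta Z_{i,n}]$ vanishes asymptotically. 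Either adopt that orthogonality argument or supply the missing path-conditional estimates; as written, the stability claim does not follow from the results you invoke.
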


\begin{remark*}
(convergence rate) The convergence rate in Theorem (\ref{GCLT}) is the same as in the parametric case. We also  conjecture that the asymptotic variance is the non-parametric efficient bound.
\end{remark*}
 
%The proof of the consistency of (\ref{AV}) can be found in Section \ref{proofAV}.

\begin{remark*} (robustness to jumps in the parameter process) We assume that we add a jump component to the parameter process
\begin{eqnarray}
 \label{jumpmodel}
 \theta_t^* = \theta_t^{(C)} + \theta_t^{(J)},
\end{eqnarray}
where $\theta_t^{(J)}$ denotes a $3$-dimensional finite activity jump process and $d\theta_t^{(J,k)}$ is either zero (no jump) or a real number indicating the size of the jump at time $t$ for $k=1, 2, 3$. We further assume that there is no initial jump, i.e. $J_0 = 0$. Moreover, we assume that $J_t$ is a general Poisson process independent from the other quantities. Under similar assumptions, the results of this work can be adapted.
\end{remark*}
\begin{remark*} (mutually exciting process) The proofs can be adapted to a multidimensional Hawkes process. Investigating the corresponding conditions is beyond the scope of this paper.
\end{remark*}

\subsection{What happens in the rough parameter case?}
In this section, we are interested in the asymptotic properties of our estimators when the regularity condition $\gamma \in \l (\frac{3}{4},1\r]$ fails. We first give a theoretical argument to show that the bound $3/4$ can be lowered to $1/2$, although the some of the corresponding bias theoretical formula terms would be too involved to be of any practical interest. Nonetheless the bias can be computed with Monte Carlo methods (see Section \ref{goal} in our numerical study for more details). We then provide the expected convergence rate of the consistency for both the naive and the first order bias-corrected estimators. 

\smallskip
When $\gamma \notin \l(\frac{3}{4},1\r]$, Theorem \ref{GCLT} fails in general. This is due to the bias expansion obtained in Theorem \ref{biasCorrection2}, (\ref{EqBiasCorrection2}), whose order in $h_n^{-\kappa \wedge \frac{3}{2}}$ can be dominated by $n^{-\half}$ only if $\gamma > 3/4$. Nevertheless, we can expect that correcting for the bias to a higher order improves the rate of convergence in (\ref{EqBiasCorrection2}). Thus we would obtain a corresponding central limit theorem even for $\gamma \leq 3/4$. A closer investigation to the proofs shows that if one conducts the bias correction up to order $q \in \naturels - \{0\}$, conditions [C]-{\bf(ii)} and [BC] are respectively replaced by $\delta > 1 + (q+1)/(2\gamma)$ and $\gamma/(\gamma-1/2) < \delta < 2+q$, so that the GCLT becomes valid under the weaker condition $\gamma \in \l(\half + \frac{1}{2(1+q)},1\r]$. For $q \to +\infty$, the asymptotic admissible interval becomes thus $\l(\half,1\r]$, so it is theoretically possible to construct an asymptotically normal estimator for any $\gamma > 1/2$.

\smallskip
When $\gamma \in (0, \half]$, we can't use the same martingale approach. In general, the bias induced by the parameter process in the expansion (\ref{EqBiasCorrection2}) cannot be corrected without some information on the distribution of $\theta_t^{*}$. We can't show that the bias is of the right order, because this would imply a choice of $\delta$ such that $\gamma < \delta(\gamma -1/2) $ which is not possible if $\gamma$ reaches the critical value $\gamma = 1/2$. Investigating if other approaches yield a better estimate of the bias under additional specification on the structure of $\theta_t^{*}$ is beyond the scope of this work. 

\smallskip
We now turn to the convergence rate of our estimators when the central limit theorem fails. We can prove that both estimators $\widehat{\Theta}_n$ and $\widehat{\Theta}_n^{(BC)}$ are consistent. Indeed, it turns out that the first order bias corrected version is $n^\alpha$-consistent\footnote{An estimator $\bar{\Theta}_n$ is said to be $a_n$-consistent if $a_n(\bar{\Theta}_n-\Theta) = O_\proba(1).$ } for any  $\alpha  \in \big(0,{\frac{\gamma}{1+\frac{2}{3}\gamma}}\big)$, whereas the naive estimator is only $n^\alpha$-consistent for any  $\alpha  \in \big(0,{\frac{\gamma}{1+\gamma}}\big)$. More specifically, we have the following result. The proposition can be showed following a similar reasoning as in the proof of Theorem \ref{GCLT}.

\begin{proposition*} (consistency)\label{propConsistency}
For any $\alpha \in \l(0 , \frac{\gamma}{1+\gamma}\r)$, the choice $\delta \in \l(1+\inv{\gamma}, \inv{\alpha}\r)$ gives 
$$ n^\alpha(\widehat{\Theta}_n - \Theta) \to^\proba 0.$$
Moreover, if $\gamma \in (0,3/4]$, for any $\alpha \in \l(0,\frac{\gamma}{1+\frac{2}{3}\gamma}\r)$, the choice $\delta \in \l( (1+\inv{\gamma}) \vee \frac{\gamma}{\gamma-\alpha}, \frac{3}{2\alpha}\r)$ gives 
$$ n^\alpha(\widehat{\Theta}_n^{(BC)} - \Theta) \to^\proba 0.$$
\end{proposition*}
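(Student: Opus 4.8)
The plan is to reuse the martingale-plus-bias decomposition of the proof of Theorem \ref{GCLT}, but, since we now target only a rate of consistency rather than a CLT, it suffices to bound in probability the three error contributions and read off the power of $n$ at which each vanishes. Note first that the hypotheses impose $\delta>1+1/\gamma$, so Condition \textnormal{[C]} holds with the given $\gamma$ (Conditions \textnormal{[C]}-{\bf (i)} and \textnormal{[C]}-{\bf (iii)} being part of the standing setting), hence $\kappa=\gamma(\delta-1)>1$ and both Theorem \ref{conditionalCLT} and Theorem \ref{biasCorrection2} apply. Throughout, $\Delta_n=Th_nn^{-1}=Tn^{1/\delta-1}$, so $h_n^{-1}=n^{-1/\delta}$ and $\Delta_n^\gamma=T^\gamma n^{-\kappa/\delta}$. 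For each of the two estimators I would start from
\[
\widehat{\Theta}_n-\Theta=\frac{1}{B_n}\sum_{i=1}^{B_n}\big(\widehat{\Theta}_{i,n}-\theta_{(i-1)\Delta_n}^*\big)+R_n,\qquad R_n:=\frac{1}{B_n}\sum_{i=1}^{B_n}\theta_{(i-1)\Delta_n}^*-\Theta,
\]
and the analogue for $\widehat{\Theta}_n^{(BC)}$. Exactly as in (\ref{outlineparamsmooth}), bounding $|R_n|\le\sup_{0\le s\le t\le T,\,t-s\le\Delta_n}|\theta_t^*-\theta_s^*|$ and invoking the regularity of $\theta^*$ from \textnormal{[C]}-{\bf (i)} gives $R_n=O_\proba(\Delta_n^\gamma)=O_\proba(n^{-\kappa/\delta})$.

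Next I would split each block term into its conditional mean and a conditionally centered part, $\widehat{\Theta}_{i,n}-\theta_{(i-1)\Delta_n}^*=M_{i,n}+B_{i,n}$ with $B_{i,n}=\esp_{(i-1)\Delta_n}[\widehat{\Theta}_{i,n}-\theta_{(i-1)\Delta_n}^*]$, and likewise $M_{i,n}^{(BC)},B_{i,n}^{(BC)}$ for the bias-corrected local estimator. The partial sums $(\sum_{i\le k}M_{i,n})_k$ form a martingale for $(\calf_{k\Delta_n})_k$, since $M_{i,n}$ is $\calf_{i\Delta_n}$-measurable (it depends only on the $i$-th block's events) and $\esp_{(i-1)\Delta_n}[M_{i,n}]=0$. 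Applying Theorem \ref{conditionalCLT} with $f(x)=x^2$ (admissible because $2<2\kappa$) and the boundedness of $\Gamma^{-1}$ on $K$ gives, uniformly in $i$, $\esp_{(i-1)\Delta_n}[M_{i,n}^2]\le\esp_{(i-1)\Delta_n}[(\widehat{\Theta}_{i,n}-\theta_{(i-1)\Delta_n}^*)^2]=O_\proba(h_n^{-1})$, so the predictable bracket satisfies $\sum_{i=1}^{B_n}\esp_{(i-1)\Delta_n}[M_{i,n}^2]=O_\proba(B_nh_n^{-1})$; Lenglart's inequality then yields $B_n^{-1}\sum_iM_{i,n}=O_\proba((B_nh_n)^{-1/2})=O_\proba(n^{-1/2})$. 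The same bound holds for $M_{i,n}^{(BC)}$, the extra term $b(\widehat{\Theta}_{i,n})/(h_nT)$ being of deterministic size $O(h_n^{-1})$ (as $b$ is bounded on $K$), hence contributing only $O(h_n^{-2})$ to the conditional variance. For the conditional-mean parts, Theorem \ref{biasCorrection2} together with the boundedness of $b$ on $K$ give $B_n^{-1}\sum_iB_{i,n}=O_\proba(h_n^{-1})=O_\proba(n^{-1/\delta})$ and, for every $\epsilon\in(0,1)$, $B_n^{-1}\sum_iB_{i,n}^{(BC)}=O_\proba(h_n^{-\epsilon(\kappa\wedge\frac32)})=O_\proba(n^{-\epsilon(\kappa\wedge\frac32)/\delta})$.

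Collecting terms, $n^\alpha(\widehat{\Theta}_n-\Theta)=O_\proba(n^{\alpha-\frac12})+O_\proba(n^{\alpha-\frac1\delta})+O_\proba(n^{\alpha-\frac\kappa\delta})$, and since $\kappa>1$ the Riemann contribution is dominated by the bias contribution; it is therefore enough that $\alpha<1/2$ (which follows from $\alpha<\gamma/(1+\gamma)\le1/2$, using $\gamma\le1$) and $\alpha<1/\delta$, i.e. $\delta<1/\alpha$ — precisely the assumed range, whose left endpoint $1+1/\gamma$ is the one forced by Condition \textnormal{[C]}-{\bf (ii)}. For the bias-corrected estimator, $n^\alpha(\widehat{\Theta}_n^{(BC)}-\Theta)=O_\proba(n^{\alpha-\frac12})+O_\proba(n^{\alpha-\epsilon(\kappa\wedge\frac32)/\delta})+O_\proba(n^{\alpha-\frac\kappa\delta})$; the Riemann term is dominated by the bias term since $\epsilon(\kappa\wedge\frac32)<\kappa$, and it suffices to have $\alpha<1/2$ and, for some $\epsilon$ close to $1$, $\alpha<\epsilon(\kappa\wedge\frac32)/\delta$, i.e. $\alpha<(\kappa\wedge\frac32)/\delta$ with strict inequality. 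Here $\delta<3/(2\alpha)$ gives $\alpha<(3/2)/\delta$, while $\delta>\gamma/(\gamma-\alpha)$ rearranges — using $\alpha<\gamma$ — to $\gamma(\delta-1)>\alpha\delta$, i.e. $\alpha<\kappa/\delta$; combining these two yields $\alpha<(\kappa\wedge\frac32)/\delta$. Finally $\alpha<\gamma/(1+\frac23\gamma)\le1/2$ (valid for $\gamma\le3/4$) supplies $\alpha<1/2$ and also makes the $\delta$-window non-empty, the binding non-emptiness requirement $\gamma/(\gamma-\alpha)<3/(2\alpha)$ being exactly $\alpha<\gamma/(1+\frac23\gamma)$.

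The work is thus essentially exponent bookkeeping; the only conceptual ingredient beyond Theorems \ref{conditionalCLT} and \ref{biasCorrection2} is the Lenglart estimate above, which lets us deduce the $n^{-1/2}$ rate of the martingale average from the conditional moment bounds of Theorem \ref{conditionalCLT} without invoking any unconditional moment control. The one place where more than routine care is needed is ensuring that the $O_\proba$ bounds are genuinely uniform in the block index $i$ at the stated orders — which is precisely what the uniform statements in Theorems \ref{conditionalCLT} and \ref{biasCorrection2} provide — after which reconciling the three rates $n^{-1/2}$, $n^{-1/\delta}$ (resp. $n^{-\epsilon(\kappa\wedge\frac32)/\delta}$) and $n^{-\kappa/\delta}$ with the prescribed windows for $\delta$ proceeds exactly as in the proof of Theorem \ref{GCLT}.
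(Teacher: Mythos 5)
Your proof is correct and is essentially the paper's own argument: the same martingale-plus-bias (plus Riemann) decomposition, the same inputs (Theorems \ref{conditionalCLT} and \ref{biasCorrection2}, applicable since $\delta>1+1/\gamma$), and the same exponent bookkeeping identifying the two admissible $\delta$-windows. The only differences are harmless: you obtain the $O_\proba(n^{-1/2})$ rate of the martingale average via Lenglart's inequality rather than by noting, as the paper does, that the CLT for $S_n^{(M)}$ remains valid so that $n^{\alpha-\half}S_n^{(M)}\to^\proba 0$ for $\alpha<\half$; and your bound on the Riemann term through the global pathwise modulus of continuity is slightly more than [C]-{\bf (i)} literally provides (it controls conditional moments of local suprema, not a uniform pathwise modulus), but the averaged conditional bound used in Step 1 of the proof of Theorem \ref{GCLT} gives the needed $O_\proba(\Delta_n^{\gamma})$ for the block average directly, so nothing is affected.
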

In particular we can see that $\widehat{\Theta}_n$ is already almost $\sqrt n$-consistent when $\gamma =1$ without any bias correction. In a similar way, the case $\gamma = 3/4$ also yields an almost $\sqrt n$-consistent bias corrected estimator $\widehat{\Theta}_n^{(BC)}$ as was expected. Again, knowing if the bounds for $\alpha$ given in Proposition \ref{propConsistency} are optimal is beyond the scope of this paper.

\section{Statistical implementation}
\label{statisticalimplementation}
In this section, we give some practical guidance to the above theory including a studentized version of the GCLT. Actually, on real data, the quantity of interest is $n \Theta$ whereas $n$ is (usually) unknown. This doesn't prevent us from obtaining a studentized version of the GCLT. A feasible procedure consists in estimating directly $n\Theta$ in place of estimating  $\Theta$. When properly divided by $n$, this yields the same estimate as the non feasible procedure, i.e. we have $n\widehat{\Theta} = \widehat{n\Theta}$, where $\widehat{\Theta}$ is the naive or the bias-corrected estimator. Indeed note that a maximizer $\widehat{\Theta}_{i,n}$ of $l_{i,n}(\theta)$ is equal to $n^{-1}\tilde{\Theta}_{i,n}$, where $\tilde{\Theta}_{i,n}$ is a maximizer of $l_{i,n}(n^{-1}\theta)$, which corresponds to the ordinary quasi-likelihood (i.e. with disregard for the actual value of $n$). 

\smallskip
Now we provide an estimator (up to a scaling factor) of the asymptotic variance $V_T = T^{-2} \int_0^T \Gamma(\theta_s^*)^{-1} ds$, which also requires no information on the value of $n$. For any $i \in \{ 1, \cdots, B_n \}$, we estimate the contribution of the $i$-th block by the formula  
\bea 
\widehat{C}_{i,n} := - \l[\partial_{\xi}^2 l_{i,n}\l(n^{-1}\xi\r)_{|\xi = n \widehat{\Theta}_{i,n}}\r]^{-1} .
\label{fisherEstim}
\eea
The term $\partial_{\xi}^2 l_{i,n}\l(n^{-1}\xi\r)$ doesn't depend on $n$ (when $h_n$ is chosen) and corresponds precisely to the Hessian matrix at point $\xi$ of the likelihood function of a Hawkes model when one disregards the value of $n$. In particular, this implies that $\widehat{C}_{i,n}$ can be computed. The asymptotic variance is then estimated, up to a scaling factor, as the weighted sum
\begin{eqnarray}
\label{AV} \widehat{C}_n = \frac{1}{B_n^2} \sum_{i=1}^{B_n} \widehat{C}_{i,n}. 
\end{eqnarray} 
The next proposition states the consistency of $n^{-1}\widehat{C}_n$ towards $V_T$ along with a corresponding studentized version of Theorem \ref{GCLT}, which is a corollary to the stable convergence in the GCLT. 
\begin{proposition*} \label{propStudent}
We have
$$ n^{-1}\widehat{C}_n \to^\proba V_T.$$
Moreover, we have the convergence in distribution
\begin{eqnarray}
\label{studentization}
 \widehat{C}_n^{-1/2}\l( n\widehat{\Theta}_n^{(BC)} - n\Theta \r) \rightarrow  \mathcal{N} (0,1).
 \end{eqnarray}
\end{proposition*}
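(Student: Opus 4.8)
The plan is to prove the two claims in turn, using Theorem \ref{GCLT} as a black box for the second one. For the consistency $n^{-1}\widehat{C}_n \to^\proba V_T$, first I would rewrite $n^{-1}\widehat{C}_{i,n}$ in terms of the rescaled Hessian. Since $\widehat{C}_{i,n} = -\big[\partial_\xi^2 l_{i,n}(n^{-1}\xi)_{|\xi = n\widehat{\Theta}_{i,n}}\big]^{-1}$ and $\partial_\xi^2 l_{i,n}(n^{-1}\xi) = n^{-2}\partial_\theta^2 l_{i,n}(\theta)_{|\theta = n^{-1}\xi}$, one gets $n^{-1}\widehat{C}_{i,n} = -n\big[\partial_\theta^2 l_{i,n}(\widehat{\Theta}_{i,n})\big]^{-1}$. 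The key fact I need — which should follow from the same estimates that underlie Theorem \ref{conditionalCLT} and is presumably established in the proof of the LCLT (the observed information converges to the Fisher information) — is that, uniformly in $i$,
\bea
-\frac{h_n}{B_n}\, n^{-1}\!\cdot\! \partial_\theta^2 l_{i,n}(\widehat{\Theta}_{i,n}) \ \text{is comparable to}\ T\,\Gamma\big(\theta_{(i-1)\Delta_n}^*\big) + o_\proba(1),
\eea
or more precisely that $-\,n^{-1} h_n^{-1}\partial_\theta^2 l_{i,n}(\widehat{\Theta}_{i,n}) \to^\proba T\Gamma(\theta_{(i-1)\Delta_n}^*)$ uniformly. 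Here the rate $h_n$ comes from the fact that each block contains $\approx h_n$ events, so the block log-likelihood's curvature scales like $h_n$ (not $n$), after factoring out the $n$ from the intensity normalization. Then $n^{-1}\widehat{C}_{i,n} \approx (h_n)^{-1} T^{-1}\Gamma(\theta_{(i-1)\Delta_n}^*)^{-1}$, and summing,
\bea
n^{-1}\widehat{C}_n = \frac{1}{B_n^2}\sum_{i=1}^{B_n} n^{-1}\widehat{C}_{i,n} \approx \frac{1}{B_n^2}\sum_{i=1}^{B_n}\frac{1}{h_n T}\Gamma\big(\theta_{(i-1)\Delta_n}^*\big)^{-1} = \frac{1}{B_n h_n T}\cdot\frac{1}{B_n}\sum_{i=1}^{B_n}\Gamma\big(\theta_{(i-1)\Delta_n}^*\big)^{-1}.
\eea
Since $B_n h_n = n$ and $\Delta_n = Th_n n^{-1} = T B_n^{-1}$, the prefactor is $n^{-1} B_n^{-1} T^{-1}\cdot B_n = \,$... wait, more carefully $B_n^{-1}h_n^{-1}T^{-1} = (B_n h_n)^{-1}T^{-1} = (nT)^{-1}$, and $B_n^{-1}\sum_i \Gamma(\theta^*_{(i-1)\Delta_n})^{-1}$ is a Riemann sum converging to $T^{-1}\int_0^T \Gamma(\theta_s^*)^{-1}ds$. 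Hence $n^{-1}\widehat{C}_n \to^\proba T^{-2}\int_0^T\Gamma(\theta_s^*)^{-1}ds = V_T$, using the smoothness of $\theta^*$ under [C] to control the Riemann-sum error (order $\Delta_n^\gamma$) and the uniform $o_\proba(1)$ to pass the limit through the average.

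For the studentized CLT \eqref{studentization}, I would first reduce to Theorem \ref{GCLT} via the identity $n\widehat{\Theta}_n^{(BC)} = \widehat{n\Theta}^{(BC)}$ explained in the text, so that $n\widehat{\Theta}_n^{(BC)} - n\Theta = n(\widehat{\Theta}_n^{(BC)} - \Theta)$. Theorem \ref{GCLT} gives $\sqrt{n}(\widehat{\Theta}_n^{(BC)} - \Theta) \to V_T^{1/2}\mathcal{N}(0,1)$, $\calf_T^{\theta^*}$-stably, with $\mathcal{N}(0,1)$ independent of $\calf_T^{\theta^*}$; equivalently $n^{-1/2}(n\widehat{\Theta}_n^{(BC)} - n\Theta) \to V_T^{1/2}\mathcal{N}(0,1)$ stably. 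Now write
\bea
\widehat{C}_n^{-1/2}\big(n\widehat{\Theta}_n^{(BC)} - n\Theta\big) = \big(n^{-1}\widehat{C}_n\big)^{-1/2}\cdot n^{-1/2}\big(n\widehat{\Theta}_n^{(BC)} - n\Theta\big).
\eea
Since $n^{-1}\widehat{C}_n \to^\proba V_T$ (first part) and $V_T = T^{-2}\int_0^T\Gamma(\theta_s^*)^{-1}ds$ is $\calf_T^{\theta^*}$-measurable and a.s. positive definite (as $\Gamma$ is positive definite on $K$, being an asymptotic Fisher information, and $\theta^*$ stays in the interior of $K$), the continuous mapping / Slutsky argument for stable convergence applies: $(n^{-1}\widehat{C}_n)^{-1/2} \to^\proba V_T^{-1/2}$, and multiplying a stably convergent sequence by a sequence converging in probability to a limit-measurable random matrix preserves the stable convergence of the product. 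Thus the product converges in law to $V_T^{-1/2}V_T^{1/2}\mathcal{N}(0,1) = \mathcal{N}(0,1)$, which is the claim. (In the multivariate case one takes a symmetric square root and uses that $V_T^{-1/2}V_T^{1/2}$ acting on an independent standard Gaussian is again standard Gaussian; the independence of $\mathcal{N}(0,1)$ from $\calf_T^{\theta^*}$ guaranteed by Theorem \ref{GCLT} is exactly what makes this cancellation hold in law.)

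The main obstacle is the first part — specifically, justifying that the observed information $-n^{-1}h_n^{-1}\partial_\theta^2 l_{i,n}(\widehat{\Theta}_{i,n})$ converges in probability to $T\Gamma(\theta_{(i-1)\Delta_n}^*)$ uniformly in $i$, and that the error is uniform enough that averaging over the $B_n$ blocks still yields a $o_\proba(1)$ total error. This is essentially a by-product of the QLA machinery used to prove Theorem \ref{conditionalCLT}: one needs (a) consistency of $\widehat{\Theta}_{i,n}$ to $\theta_{(i-1)\Delta_n}^*$ at rate $h_n^{-1/2}$, uniformly in $i$, (b) a local uniform law of large numbers showing $-n^{-1}h_n^{-1}\partial_\theta^2 l_{i,n}(\theta) \to T\Gamma(\theta_{(i-1)\Delta_n}^*)$ uniformly in $\theta$ near $\theta_{(i-1)\Delta_n}^*$, and (c) enough moment control (from the boundedness of moments in Theorem \ref{conditionalCLT}, and the bound $c<1$ from [C]-{\bf(iii)}) to upgrade the "uniformly in $i$" from each-block statements to a statement about the average. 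Everything else is Slutsky-type bookkeeping plus the Riemann-sum convergence already exploited in \eqref{outlineparamsmooth}.
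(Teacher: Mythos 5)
Your overall route is the same as the paper's: the paper also rewrites $n^{-1}\widehat{C}_n = (TB_n)^{-1}\sum_{i=1}^{B_n}\Gamma_{i,n}\big(\widehat{\Theta}_{i,n}\big)^{-1}$ with $\Gamma_{i,n}(\theta) = -(h_nT)^{-1}\partial_\theta^2 l_{i,n}(\theta)$, proves $\Gamma_{i,n}(\widehat{\Theta}_{i,n})^{-1} = \Gamma(\theta^*_{(i-1)\Delta_n})^{-1} + o_\proba(1)$ uniformly in $i$ (via uniform consistency of $\widehat{\Theta}_{i,n}$, an $\mathbb{L}^p$ bound on $\sup_{\theta}h_n^{-1}\big|\partial_\theta(\partial_\theta^2 l_{i,n}(\theta))^{-1}\big|$, and the convergence of $\Gamma_{i,n}^c(\theta_0)^{-1}$ to $\Gamma(\theta_0)^{-1}$), then a Riemann-sum step using [C]-(i); and the studentized statement is indeed obtained exactly as you describe, as a Slutsky-type corollary of the stable convergence in Theorem \ref{GCLT} together with the $\calf_T^{\theta^*}$-measurability and a.s. invertibility of $V_T$.

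However, your write-up of the first part contains a normalization error that you should fix, because as displayed it does not yield $V_T$. The observed information on a block scales like $h_n$, not $nh_n$: the correct key fact is $-h_n^{-1}\partial_\theta^2 l_{i,n}(\widehat{\Theta}_{i,n}) \to^\proba T\,\Gamma\big(\theta^*_{(i-1)\Delta_n}\big)$ uniformly in $i$ (your stated version carries a spurious factor $n^{-1}$). Consequently
\begin{eqnarray*}
n^{-1}\widehat{C}_{i,n} \;=\; -\,n\big[\partial_\theta^2 l_{i,n}(\widehat{\Theta}_{i,n})\big]^{-1} \;\approx\; \frac{n}{h_n T}\,\Gamma\big(\theta^*_{(i-1)\Delta_n}\big)^{-1},
\end{eqnarray*}
not $(h_nT)^{-1}\Gamma^{-1}$ as you wrote; the prefactor after summing is then $n(B_n h_n T)^{-1} = T^{-1}$, and combined with the Riemann-sum limit $B_n^{-1}\sum_i\Gamma(\theta^*_{(i-1)\Delta_n})^{-1} \to T^{-1}\int_0^T\Gamma(\theta_s^*)^{-1}ds$ this gives $V_T$. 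With your normalization the prefactor you computed, $(nT)^{-1}$, times that same Riemann-sum limit equals $n^{-1}V_T \to 0$, so the chain of approximations as written is internally inconsistent with the conclusion $n^{-1}\widehat{C}_n \to^\proba V_T$. Once this factor of $n$ is corrected, your argument coincides with the paper's proof, including the identification of the real technical burden (uniform-in-$i$ convergence of the inverted observed information evaluated at $\widehat{\Theta}_{i,n}$, which the paper handles through its Lemmas on $\Gamma_{i,n}$, $\Gamma_{i,n}^c$ and their inverses).
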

Note that $\widehat{C}_n$ is the asymptotic variance of the dispersion between the estimated value of the scaled integrated parameter $n\widehat{\Theta}_n^{(BC)}$ and the target $n\Theta$ itself. In particular feasible asymptotic confidence intervals can be constructed from the data.

\smallskip
As the value of $n$ is unknown, which value to choose for $h_n$? One idea is to normalize the value of $\theta_t^*$ so that the expected number of events between 0 and $T$ is roughly one when parameters are equal to $\theta_t^*$ (by analogy with other models in high frequency data where $n$ corresponds exactly to the size of the sample data, as when estimating volatility from log-price returns observed regularly at times $iT/n$). This amounts to taking $n \approx N_T$ in practice. Although not perfect this provides guidance to the choice of $h_n$, which is assumed to be $n^{1/3} = o (h_n)$ and $h_n= o (n^{1/2})$ for a regular process ($\gamma =1$). In our numerical study, we have $N_T \approx 27,300$ which amounts to taking $n = 27,300$. This gives us $n^{1/3} \approx 30$ and $n^{1/2} \approx 165$, and correspondingly we look at different $h_n = 136.5, 273, 546$ which are of the same order. In our empirical study, we consider $h_n = \sqrt{n}, 2\sqrt{n}, 4\sqrt{n}, 8\sqrt{n}, 16\sqrt{n}$.

\section{Numerical simulations} \label{Simulations}
\subsection{Goal of the study} \label{goal}
In this section, we report the numerical results which assess the central limit theory of 
\begin{eqnarray*}
 Z_n^{(BC)} = \widehat{C}_n^{-1/2}\l(  n\widehat{\Theta}_n^{(BC)} - n\Theta \r)  \rightarrow  \mathcal{N} (0,1)
 \end{eqnarray*} 
 in a finite sample context for several time-varying parameter models. In addition, we report the behavior of the studentized naive estimator 
 \begin{eqnarray*}
 Z_n = \widehat{C}_n^{-1/2}\l(  n\widehat{\Theta}_n - n\Theta \r).
 \end{eqnarray*}
 Finally, we compare the performance of $\widehat{\Theta}_n^{(BC)}$ and $\widehat{\Theta}_n$ with two concurrent methods which are
\begin{enumerate}
\item The MLE on $[0,T]$ when considering that the parameters are not time-varying on $[0,T]$.
\item The time-varying baseline intensity MLE from \cite{chen2013inference} (CH) that assumes that $\nu_t^* = f (t, \theta)$ with $f$ being a polynomial of order $3$. More specifically in this setting the MLE estimates $(\theta,a,b)$ where $a$ and $b$ are assumed to be constant over time.
\end{enumerate}
The local log-likelihood functions and local variance estimators are computed implementing the formula obtained in \cite{ozaki1979maximum}. To compute $\widehat{\Theta}_n^{(BC)}$, we can either implement the function defined in (\ref{localMLEBC}) or carry out Monte-Carlo simulations to compute $b_{i,n} (\theta)$ for any $\theta$ prior to the numerical study. We choose the latter option as this allows to get also rid of bias terms which appears in the Taylor expansion in a higher order than $1$. Indeed, although those terms vanish asymptotically, they can pop up in a finite sample context. To be more specific, we first compute the sample mean for a grid of parameter values $\theta$ and a grid of block length $\Delta$ with 100,000 Monte Carlo paths of the parametric model, that we denote $b (\theta, \Delta)$. Then on each block, we estimate the bias by $b(\widehat{\Theta}_{i,n} , \Delta_{i,n} )$.

\subsection{Model design}
We consider that $T=21,600$ seconds, which corresponds to the period of activity for one working day from 10am to 4pm. The market events are chosen to correspond to trades. The $N_t^n$ process is generated using a time-varying version of the algorithm described in \cite{ozaki1979maximum} (Section 4, pp. 148-149). The integrated parameter is set to $n \Theta \approx (.8, 11, 30)$, which are comparable values to our empirical results. This yields on average $N_T^n \approx 27,300$ trades a day.

\smallskip
We consider three deterministic and one stochastic models for the time-varying parameter. The first two settings are toy models. Model I is a linear trend with $\theta_t^* = \theta^{(M)} + A (-1 +2 \frac{t}{T})$, where the non-random target value $\theta^{(M)} = (.8, 11, 30)$ and the amplitude is set to $A= (.5, 4, 10)$. This means that $\theta_t^*$ takes values in $\big[.3, 1.3 \big] \times \big[7, 15\big] \times \big[20, 40\big]$, which is comparable to the daily variation in our empirical results. In Model II $\theta_t^*$ oscillates around  $\theta^{(M)} = (.8, 11, 30)$ and has the form $\theta_t^* = \theta^{(M)} + A \cos ( \frac{t}{T} 2 \pi)$, in particular implying that the range of taken values is the same as in Model I.

\smallskip
Model III is taken directly from the literature. We keep $a_t^*$ and $b_t^*$ constant whereas $\nu_t^*$ follows a usual intraday pattern, so that CH is well specified for this model. As pointed out in \cite{engle1998autoregressive} (see discussions in Section 5-6 and Figure 2), the expected duration before the next trade tends to follow a U-shape intraday pattern. This diurnal effect motivated \cite{chen2013inference} (see Section 5, pp. 1011-1017) to model a Hawkes process where $\nu_t^*$ is time-varying with a quadratic form. The model is written as $\nu_t^* = e^{\beta_1} + \{ e^{\beta_2} + e^{\beta_3} \}^2 (t/T - e^{\beta_2}/(e^{\beta_2} + e^{\beta_3}))^2$.  We fit the model to the empirical intraday mean and find $\beta_1 \approx -.84$, $\beta_2 \approx -.26$ and $\beta_3 \approx -.39$, which implies that $T^{-1} \int_0^T \nu_t^* dt \approx .61$. The other two parameters $(a_t^*, b_t^*) = (11, 30)$ are assumed to be constant.

\smallskip
Model IV is an extension of Model III based on more realistic considerations where $a_t^*$ and $b_t^*$ also feature intraday seasonality. In addition, we allow for additive stochastic component in the three parameters . We assume that 
\begin{eqnarray*}
\nu_t^* & = & e^{\beta_1^{\nu}} + \{ e^{\beta_2^{\nu}} + e^{\beta_3^{\nu}} \}^2 (t/T - e^{\beta_2^{\nu}}/(e^{\beta_2^{\nu}} + e^{\beta_3^{\nu}}))^2 + \sigma^{\nu} \tilde{W}_t^{\nu}, \\
a_t^* & = & e^{\beta_1^{a}} + \{ e^{\beta_2^{a}} + e^{\beta_3^{a}} \}^2 (t/T - e^{\beta_2^{a}}/(e^{\beta_2^{a}} + e^{\beta_3^{a}}))^2 + \sigma^{a} \tilde{W}_t^{a}, \\
b_t^* & = & e^{\beta_1^{b}} + \{ e^{\beta_2^{b}} + e^{\beta_3^{b}} \}^2 (t/T - e^{\beta_2^{b}}/(e^{\beta_2^{b}} + e^{\beta_3^{b}}))^2 + \sigma^{b} \tilde{W}_t^{b}, 
\end{eqnarray*}
where $(\beta_1^{\nu}, \beta_2^{\nu}, \beta_3^{\nu}) \approx (-0.84, -0.26, -0.39)$, $(\beta_1^{a}, \beta_2^{a}, \beta_3^{a}) \approx (2.35, -0.05, 0.40)$, and $(\beta_1^{b}, \beta_2^{b}, \beta_3^{b})$ $\approx (3.66,$ $-0.33, 0.67)$ were obtained when fitted to the respective parameter intraday mean. We also set $(\sigma^{\nu}, \sigma^{a}, \sigma^{b}) = (0.8/(6T^{1/2}), 11/(6T^{1/2}), 30/(6T^{1/2}))$ and $\tilde{W}_t = (\tilde{W}_t^{\nu}, \tilde{W}_t^{a}, \tilde{W}_t^{b}) = \int_{t-1}^t W_s ds$ with $W_t = (W_t^{\nu}, W_t^{a}, W_t^{b})$ a standard 3-dimensional Brownian motion. This means that the standard deviation of the noise factor is roughly equal to $1/6$ the value of the parameter at time $T$. Also, we cap the possible value taken by $\nu_t^*$ so that the intensity parameter stays bigger than $0.2$. It is clear that in all those models the parameter is smooth enough to satisfy the assumptions of the GCLT.

\smallskip
Finally, we look at several values for $h_n = 136.5, 273, 546$, which correspond respectively to block lengths of size $Th_n/n=$ 108, 216 and 432 seconds. We have $N_T \approx 27,300$ which means that we should take $n = 27,300$ as explained in Section 
\ref{statisticalimplementation}. According to the theory we would expect to need that $h_n$ is of the same order as $n^{1/3}$ and  $n^{1/2}$, which are approximately equal to 30 and 165, thus our choices for $h_n$ seem coherent with the theory.

\subsection{Results}
Table \ref{montecarlonaive} shows the Monte Carlo results for the feasible statistic of the naive estimator. For all the models, the value of the bias is striking as it is of the same magnitude as the standard deviation. This indicates that the bias do play a crucial role in finite sample too.

\smallskip
Table \ref{montecarloBC} shows the result for the bias-corrected estimator. In addition, Figure \ref{QQplot} provides the associated QQ-plot. In this case, the sample mean is very close to 0 indicating that our proposed reduction method is working well. The standard deviation obtained for the intensity parameter is close to 1, but it is bigger for the other parameters $a^*$ and $b^*$. Correspondingly, the asymptotics are slightly underestimating the mass of the distribution in the tails. The reason for this is probably that it is difficult to accurately estimate the variance of the parametric model on the small blocks.

\smallskip
Table \ref{performance} shows the performance of the estimators with concurrent approaches. It is clear that regardless of the model at hand the bias-corrected local approach performs better than the MLE and the CH. In Model III, CH performs a MLE with no misspecification. Although CH performs much better than in the other models, it still doesn't outperform the local approach, which indicates that the local approach can performs better even on standard parametric models. Both estimators are badly biased in case of misspecification for them. More surprisingly, although Model III follows a model included in  \cite{chen2013inference} and thus CH performs a MLE with no misspecification in that specific case, the estimates are still biased (although very good for $a^*$ and with a smaller standard deviation).  

\smallskip
Finally, we can see that the naive estimator is more biased when $h_n$ is smaller, which is in line with what we expected. The bias-correction is performed better with bigger $h_n$, although too big of a $h_n$ will tend to bias-correct less efficiently (due to the fact that parameters are moving too much on a bigger block). This can be seen in Table \ref{performance} as the bias-corrected estimator seems to perform slightly better with 4 minute block than 7 minute block.

\begin{table}
\centering
\caption{Finite sample properties of $Z_n$ for several models$^\dag$} 
\label{montecarlonaive}
\begin{tabular}{@{}rccccccccc@{}}
\toprule
\toprule
\multicolumn{1}{l}{Param.} & \multicolumn{1}{l}{Mean} & \multicolumn{1}{l}{Stdv.} & \multicolumn{1}{l}{RMSE} & \multicolumn{1}{l}{0.5\%} & \multicolumn{1}{l}{2.5\%} & \multicolumn{1}{l}{5\%} & \multicolumn{1}{l}{95\%} & \multicolumn{1}{l}{97.5\%} & \multicolumn{1}{l}{99.5\%} \\ \toprule
\multicolumn{10}{l}{Model I} \\
$\nu^*$ & 0.69& 1.02 & 1.23 & 0.00 & 0.40 & 1.00 & 82.60 & 89.70 & 97.00\\
$a^*$ & 0.77 & 1.10 & 1.34 & 0.40 & 1.00 & 1.50 & 78.00 & 86.50 & 95.90\\
$b^*$ & 1.37 & 1.24 & 1.85 & 0.20 & 0.60 & 1.00 & 58.20 & 67.60 & 83.90\\
\multicolumn{10}{l}{Model II} \\
$\nu^*$ & 0.71& 1.02&1.24&0.00&0.60&1.10&81.20&89.40&97.60\\
$a^*$ & 0.71 & 1.14 & 1.34 & 0.30 & 1.40 & 2.10 & 79.50 & 86.70 & 94.30\\
$b^*$ & 1.33 & 1.27 & 1.83 & 0.10 & 0.40 & 1.10 & 60.00 & 69.60 & 85.40\\
\multicolumn{10}{l}{Model III} \\
$\nu^*$ & 0.80 & 1.05 & 1.32 & 0.10 & 0.30 & 0.90 & 78.40 & 86.30 & 95.30\\
$a^*$ & 0.78 & 1.14 & 1.38 & 0.00 & 1.20 & 1.70 & 78.10 & 85.70 & 94.10\\
$b^*$ & 1.43 & 1.24 & 1.89 & 0.00 & 0.20 & 0.40 & 55.70 & 66.10 & 80.60\\
\multicolumn{10}{l}{Model IV} \\
$\nu^*$ & 0.83 & 0.99 & 1.29 & 0.00 & 0.20 & 0.50 & 79.70 & 87.10 & 96.30  \\
$a^*$ & 1.05 & 1.04 & 1.48 & 0.00 & 0.60 & 0.90 & 71.60 & 80.90 & 93.00\\
$b^*$ & 1.62 & 1.10 & 1.96 & 0.00 & 0.10 & 0.20 & 52.20 & 63.20 & 79.70\\

\bottomrule

\end{tabular}

\scriptsize $^\dag$This table shows summary statistics and empirical quantiles benchmarked to the $N$(0,1) distribution for the feasible Z-statistics related to the naive  estimator with $h_n=273$ (which corresponds to a 216 second block length). The simulation design is Model I-IV with $M = 1,000$ Monte-Carlo simulations. 
\end{table}

\begin{table}
\centering
\caption{Finite sample properties of $Z_n^{(BC)}$ for several models$^\dag$} 
\label{montecarloBC}
\begin{tabular}{@{}rccccccccc@{}}
\toprule
\toprule
\multicolumn{1}{l}{Param.} & \multicolumn{1}{l}{Mean} & \multicolumn{1}{l}{Stdv.} & \multicolumn{1}{l}{RMSE} & \multicolumn{1}{l}{0.5\%} & \multicolumn{1}{l}{2.5\%} & \multicolumn{1}{l}{5\%} & \multicolumn{1}{l}{95\%} & \multicolumn{1}{l}{97.5\%} & \multicolumn{1}{l}{99.5\%} \\ \toprule
\multicolumn{10}{l}{Model I} \\
$\nu^*$ & -0.01 & 1.02 & 1.02 & 0.40 & 2.80 & 5.70 & 94.70 & 97.20 & 99.30\\ 
$a^*$ & 0.00 & 1.12 & 1.12 & 1.10 & 4.00 & 7.60 & 93.50 & 96.70 & 99.10\\
$b^*$ & -0.02 & 1.30 & 1.30 & 2.60 & 6.60 & 10.70 & 90.10 & 93.40 & 98.60\\
\multicolumn{10}{l}{Model II} \\
$\nu^*$ & 0.02&1.02&1.02& 0.90 & 2.70 & 5.20 & 95.70 & 98.10 & 99.40\\
$a^*$ & -0.07 & 1.16 & 1.16 & 2.00 & 5.00 & 9.00 & 92.10 & 95.70 & 99.00\\
$b^*$ & -0.08 & 1.32 & 1.33 & 4.20 & 7.60 & 11.50 & 90.50 & 94.30 & 98.20\\
\multicolumn{10}{l}{Model III} \\
$\nu^*$ & 0.00 & 1.05 & 1.05 & 0.10 & 2.90 & 6.20 & 94.30 & 96.90 & 99.50\\
$a^*$ & -0.02 & 1.15 & 1.15 & 1.90 & 4.80 & 8.80 & 91.00 & 95.90 & 99.00 \\
$b^*$ & -0.06 & 1.29 & 1.30 & 4.00 & 7.30 & 11.00 & 90.80 & 94.50 & 98.30\\
\multicolumn{10}{l}{Model IV} \\
$\nu^*$ & 0.07 & 0.99 & 1.00 & 0.30 & 2.30 & 4.60 & 94.20 & 97.60 & 99.70\\
$a^*$ & -0.04 & 1.05 & 1.05 & 1.20 & 3.50 & 5.40 & 94.90 & 97.30 & 99.30\\
$b^*$ & -0.07 & 1.15 & 1.15 & 1.40 & 5.70 & 9.10 & 92.00 & 95.80 & 99.30\\
\bottomrule

\end{tabular}

\scriptsize $^\dag$This table shows summary statistics and empirical quantiles benchmarked to the $N$(0,1) distribution for the feasible Z-statistics related to the bias-corrected estimator with $h_n=273$ (which corresponds to a 216 second block length). The simulation design is Model I-IV with $M = 1,000$ Monte-Carlo simulations. 
\end{table}

\begin{figure}
\includegraphics[width=1\linewidth]{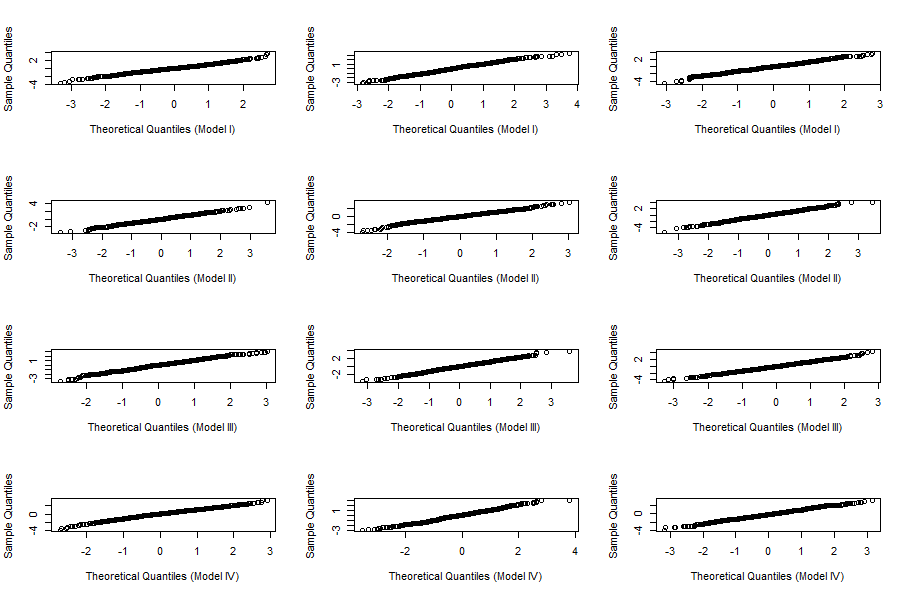}
\centering
\caption{QQ-plots benchmarked to the $N$(0,1) distribution for the feasible Z-statistics related to the bias-corrected estimator with $h_n=273$ (which corresponds to a 216 second block length). The left column corresponds to $\nu_t^*$, the column in the middle for $a_t^*$ and the right column for $b_t^*$. The simulation design is Model I-IV with $M = 1,000$ Monte-Carlo simulations. }
\label{QQplot}
\end{figure}

\begin{table}
\centering
\caption{Performance of 8 estimators for several models$^\dag$} 
\label{performance}
\begin{tabular}{@{}lcccccc@{}}
\toprule
\toprule
\multicolumn{2}{l}{} & \multicolumn{2}{l}{$\nu^*$} & \multicolumn{2}{l}{$a^*$} & \multicolumn{1}{l}{$b^*$} \\ \toprule
\multicolumn{1}{l}{Est.} & \multicolumn{1}{l}{Mean} & \multicolumn{1}{l}{Stdv.} &  \multicolumn{1}{l}{Mean} & \multicolumn{1}{l}{Stdv.} &  \multicolumn{1}{l}{Mean} & \multicolumn{1}{l}{Stdv.} \\ \toprule
\multicolumn{7}{l}{Model I} \\
naive 2m& 0.009 & 0.007 & 0.286 & 0.196 & 1.239 & 0.630\\
BC 2m& 0.000 & 0.007 & -0.012 & 0.201 & -0.124 & 0.658\\
naive 4m& 0.005 & 0.007 & 0.134 & 0.189 & 0.546 & 0.495 \\
BC 4m& 0.000 & 0.007 & 0.002 & 0.192 & 0.005 & 0.503\\
naive 7m& 0.002 & 0.007 & 0.068 & 0.188 & 0.271 & 0.477\\
BC 7m& 0.000 & 0.007 & 0.004 & 0.189 & 0.010 & 0.481\\
MLE & -0.011 & 0.006 & 0.489 & 0.198 & 0.485 & 0.494\\
CH & 0.018 & 0.010 & 0.424 & 0.438 & 1.378 & 0.942\\
\multicolumn{7}{l}{Model II} \\
naive 2m& 0.009 & 0.007 & 0.287 & 0.213 & 1.346 & 0.734\\
BC 2m& 0.000 & 0.007 & -0.018 & 0.218 & -0.078 & 0.744\\
naive 4m& 0.005 & 0.007 & 0.126 & 0.198 & 0.538 & 0.516\\
BC 4m& 0.000 & 0.007 & -0.009 & 0.201 & -0.019 & 0.525\\
naive 7m & 0.002 & 0.007 & 0.057 & 0.196 & 0.245 & 0.490\\
BC 7m & 0.000 & 0.007 & -0.009 & 0.197 & -0.022 & 0.494\\
MLE & -0.017 & 0.006 & 0.708 & 0.214 & 0.666 & 0.516 \\
CH & -0.063 & 0.012 & 0.294 & 0.443 & -0.265 & 1.097\\
\multicolumn{7}{l}{Model III} \\
naive 2m& 0.009 & 0.006 & 0.348& 0.235& 1.474& 0.734\\
BC 2m& 0.000 & 0.006 & -0.009 & 0.241 & -0.108 & 0.742\\
naive 4m& 0.005 & 0.005 & 0.158 & 0.227 & 0.645 & 0.568 \\
BC 4m& 0.000 & 0.005 & -0.003 & 0.241 & -0.015 & 0.578\\
naive 7m& 0.002 & 0.006 & 0.074 & 0.225 & 0.316 & 0.543\\
BC 7m& 0.000 & 0.006 & -0.004 & 0.227 & -0.020 & 0.548\\
MLE & -0.004 & 0.006 & -0.081 & 0.220 & -0.572 & 0.519\\ 
CH & -0.009 & 0.005 & -0.002 & 0.213 & -0.082 & 0.454\\
\multicolumn{7}{l}{Model IV} \\
naive 2m& 0.009 & 0.006 & 0.624 & 0.332 & 3.533 & 2.579\\ 
BC 2m & 0.000 & 0.006 & -0.011 & 0.311 & -0.199 & 2.333\\
naive 4m & 0.005 & 0.006 & 0.276 & 0.274 & 1.389 & 1.022 \\
BC 4m & 0.000 & 0.006 & -0.006 & 0.270 & -0.008 & 0.967\\
naive 7m & 0.003 & 0.006 & 0.133 & 0.265 & 0.655 & 0.889\\
BC 7m & 0.000 & 0.006 & 0.000 & 0.265 & 0.013 & 0.876\\
MLE & -0.004 & 0.006 & 0.005 & 0.286 & 0.924 & 0.959\\
CH & -0.012 & 0.009 & -0.857 & 0.699 &  -3.802 & 2.430\\
\bottomrule

\end{tabular}

\scriptsize $^\dag$This table shows the statistic $\widehat{\Theta} - \Theta$ where $\widehat{\Theta}$ is equal to the naive estimator and the bias-corrected estimator  with $h_n=136.5, 273, 546$ (this corresponds respectively to a 108 second block (roughly 2 minutes), 216s (roughly 4m) and 432s (roughly 7m)), the MLE and the CH. The simulation design is Model I-IV with $M = 1,000$ Monte-Carlo simulations. 
\end{table}

\section{Empirical study}
In this section, we implement local MLE on intraday transaction (corresponding to trade) times of Apple (APPL) shares carried out on the NASDAQ in 2015. Our aim is twofold. First, using relatively large (30 minute) local blocks, we document about seasonality and intraday variability in the parameters. Second, we implement the naive and the bias-corrected estimator. We
exclude January 1, the day after Thanksgiving and December 24 which are less active. This leaves us with 251 trading days of data. To prevent from opening and closing effect, we consider transactions that were carried out between 9:30am and 3:30pm, which corresponds to 5 full hours of trading. The number of daily trades is on average 15,000 with more than 50,000 trades for the most active days and slightly more than 3,000 for the least active days.

\smallskip
In Figure \ref{plotnuintraday}-\ref{plotbintraday}, we document the intraday variation of the three parameters. To do that, we divide the 5 hours of trading into 10 blocks of 30 minutes. On each block, we fit the MLE and obtain the corresponding estimates. We also estimate the standard deviation, which allows us to build 95\% confidence intervals. Given how volatile the estimates are with respect to their own confidence interval, it is clear that neither the parametric model nor the seasonal component model can be satisfactory to fit such data. This time-varying tendency of parameter intraday values was consistently observed across most of the trading days in 2015. The behavior is heterogeneous in the three parameters. The seasonal model seems to do a decent job for the intensity parameter\footnote{probably even better if we add a "day effect" in the model} although the shape of the parameter is very particular for each different day. The seasonal tendency is less clear for the other two parameters. $a^{*}$ tends to oscillate not too far around the seasonal path with a behavior which is day specific, whereas $b^{*}$ can really go far off from one side or the other with no specific pattern. For all those reasons we believe that including in the model both a seasonal and a stochastic effect is more realistic.

\begin{figure}
\includegraphics[width=1\linewidth]{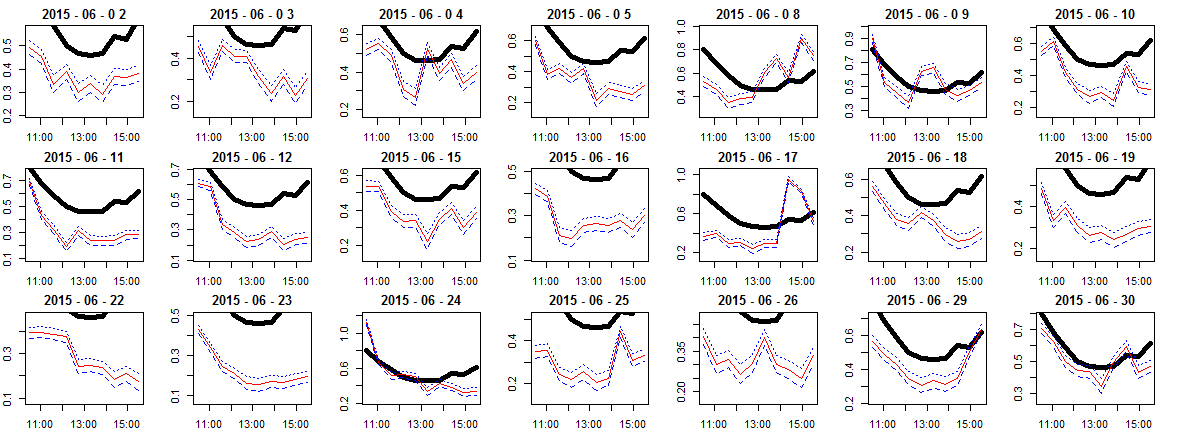}
\centering
\caption{Local estimated $\nu^*$ parameter on 30 minute long blocks in June 2015. The two dashed lines correspond to the 95\% confidence intervals. The thick line stands for the seasonality intraday effect, estimated as a temporal local mean across all the trading days in 2015. }
\label{plotnuintraday}
\end{figure}

\begin{figure}
\includegraphics[width=1\linewidth]{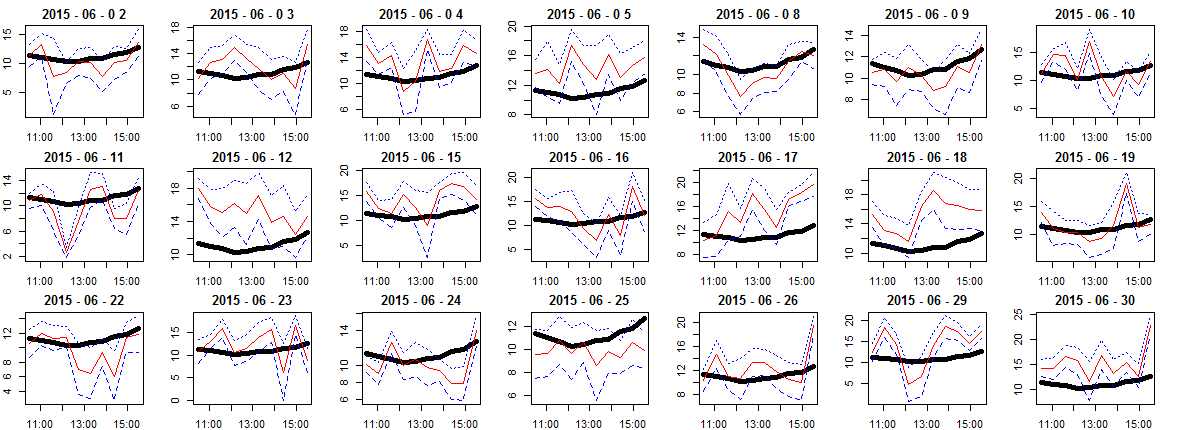}
\centering
\caption{Local estimated $a^*$ parameter on 30 minute long blocks in June 2015. The two dashed lines correspond to the 95\% confidence intervals. The thick line stands for the seasonality intraday effect, estimated as a temporal local mean across all the trading days in 2015.}
\label{plotaintraday}
\end{figure}

\begin{figure}
\includegraphics[width=1\linewidth]{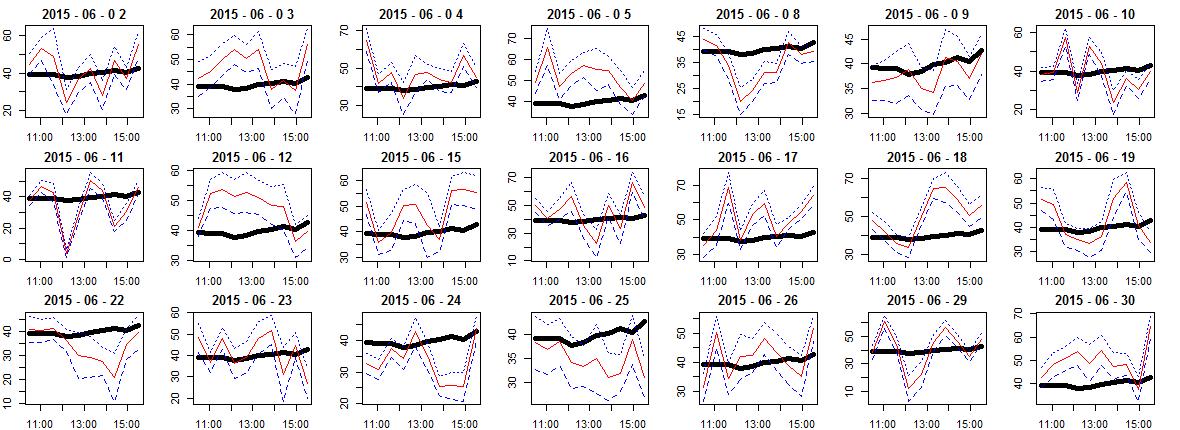}
\centering
\caption{Local estimated $b^*$ parameter on 30 minute long blocks in June 2015. The two dashed lines correspond to the 95\% confidence intervals. The thick line stands for the seasonality intraday effect, estimated as a temporal local mean across all the trading days in 2015.}
\label{plotbintraday}
\end{figure}

\smallskip
In Table \ref{empwork}, we report statistics of the implemented estimators. As our method is non parametric, the assumption of any particular parametric model for the time-varying parameter is not required. Overall we find that the daily estimates are on average roughly equal to $(0.56, 11, 40)$, with a standard deviation around $(0.24, 2, 8)$. The results are in line with the numerical study. We implemented five levels of $h_n = \sqrt{n}, 2 \sqrt{n}, 4 \sqrt{n}, 8 \sqrt{n},$ $16 \sqrt{n}$ that we denote respectively the corresponding bias-corrected estimators BC 1-5. We can see that BC 2-4 are highly correlated, whereas BC 1 and BC 5 are slightly less correlated. This is probably due to the fact that $h_n$ can be too small on non active days in the case of BC 1 and too big when considering BC 5. This shows that the local method seems robust to a wide range of possible tuning parameter $h_n$. Furthermore, the mean of the MLE and the CH are very different from the one of BC. This is most likely explained by the strong bias obtained in our numerical study. Among those two estimators it is not surprising to find that the MLE is more in line with the local estimates than the CH as the MLE is a "local estimate" in the degenerate case $h_n = n$. 
\begin{table}
\centering
\caption{Summary statistics for 12 estimators$^\dag$} 
\label{empwork}
\begin{tabular}{@{}lccccccccc@{}}
\toprule
\toprule
\multicolumn{2}{l}{} & \multicolumn{3}{l}{$\nu^*$} & \multicolumn{3}{l}{$a^*$} & \multicolumn{2}{l}{$b^*$} \\ \toprule
\multicolumn{1}{l}{Est.} & \multicolumn{1}{l}{Mean} & \multicolumn{1}{l}{Stdv.} & \multicolumn{1}{l}{Corr.(,BC 3)} & \multicolumn{1}{l}{Mean} & \multicolumn{1}{l}{Stdv.} & \multicolumn{1}{l}{Corr.(,BC 3)} & \multicolumn{1}{l}{Mean} & \multicolumn{1}{l}{Stdv.} & \multicolumn{1}{l}{Corr.(,BC 3)} \\ \toprule
naive 1& 0.57 & 0.24 & $\approx$ 1 & 12.48 & 2.18 & 0.94 & 49.80 & 10.33 & 0.78\\
BC 1& 0.56 & 0.24 & $\approx$ 1 & 11.16 & 2.17 & $\approx$ 1 & 40.24 & 9.54 & 0.98\\
naive 2& 0.57 & 0.24 & $\approx$ 1 & 11.78 & 2.00 & 0.98 & 44.74 & 8.20 & 0.89\\
BC 2& 0.56 & 0.24 & $\approx$ 1 & 11.12 & 2.00 & $\approx$ 1 & 39.92 & 7.79 & 0.99\\
naive 3& 0.56 & 0.24 & $\approx$ 1 & 11.42 & 1.96 & 0.99 & 42.14 & 7.58 & 0.92 \\
BC 3& 0.56 & 0.24 & 1 & 11.10 & 1.96 & 1 & 40.16 & 7.48 & 1\\
naive 4& 0.56 & 0.24 & $\approx$ 1 & 11.22 & 1.96 & $\approx$ 1 & 40.66 & 7.40 & 0.96\\
BC 4& 0.56 & 0.24 & $\approx$ 1 & 11.07 & 1.96 & $\approx$ 1 & 39.72 & 7.36 & 0.94\\
naive 5& 0.56 & 0.24 & $\approx$ 1 & 11.07 & 1.97 & 0.99 & 39.60 & 7.50 & 0.94  \\
BC 5 & 0.56 & 0.24 & $\approx$ 1 & 11.01 & 1.97 & 0.98 & 39.14 & 7.50 & 0.92\\
MLE & 0.55 & 0.23 & $\approx$ 1 & 10.78 & 2.11 & 0.95 & 36.77 & 8.55 & 0.91\\
CH & 0.55 & 0.22 &  0.99 & 11.69 & 1.44 & 0.63 & 40.50 & 4.85 & 0.63\\
\bottomrule

\end{tabular}

\scriptsize $^\dag$Sample mean, standard deviation and correlation with BC 3 for the naive estimators (naive 1-5) and the bias-corrected estimators (BC 1-5) with respectively $h_n=\sqrt{n}, 2 \sqrt{n}, 4 \sqrt{n}, 8 \sqrt{n}, 16 \sqrt{n}$, the MLE and the CH implemented for APPL in 2015. 
\end{table}

\section{Conclusion}
We have introduced a time-varying parameter extension of the Hawkes process with exponential exciting function. We have also provided an estimator, along with its central limit theorem, of the integrated parameter. We have seen on numerical simulations that this is of particular interest to the practitioner because some concurrent methods (e.g. MLE applied to all the observations) are biased. Finally, our empirical study points out the possible presence of variability in the parameter in addition to seasonal effects. 

\smallskip
There are some questions left to explore such as what would happen to the local MLE in the case of a kernel with a fatter tail, such as a polynomial decreasing kernel. As far as the authors know, no convergence of moments of the rescaled MLE has been investigated even in the parametric case. Also, optimality of the tuning parameter $h_n$ could be investigated, and we could potentially allow for time-varying tuning parameter.

\smallskip
Finally, we point out that the method can be extended to estimate more general key quantities than the integrated parameter, such as functional of the parameter  $T^{-1}\int_0^T{f_s(\theta_s^*) ds}$. In particular, the GCLT for weighted versions $T^{-1}\int_0^T{\theta_s^*w_sds}$ where $w_s$ is a weight process chosen by the practitioner may be derived by a similar reasoning.   

\section{Appendix} \label{Annex}

\subsection{The standard MLE for the parametric Hawkes process} \label{sectionMLEclassic}

We briefly introduce the standard maximum likelihood estimation procedure for the parametric Hawkes process with exponential kernel $\phi_t = ae^{-bt}$ in the long run (also called low-frequency) asymptotics, that is when we consider observations of a Hawkes process $N^P$ on the time interval $[0,T]$ with $T \to \infty$. We define several deterministic key quantities, such as the Fisher information matrix, as time average limits of quantities which depend on the point process $N_t^{P}$. 

\smallskip
The regression family is defined for each $\theta \in K$ as 
\bea 
\lambda(t,\theta) = \nu + \int_0^{t-}{ae^{-b(t-s)}dN_s^P}.
\eea
We assume that there exists an unknown parameter $\theta^* \in K$ such that the $\calf_t^{N^P}$-intensity of $N_t^P$ is expressed as 
\bea 
\lambda_*^{P}(t) = \lambda(t,\theta^*).
\eea
The log-likelihood process is, up to a constant term,
\bea
l_T(\theta) = \int_0^T{\text{log}\l(\lambda(t,\theta)\r)dN_t^P} - \int_0^T{\lambda(t,\theta)dt}.
\eea
The MLE $\hat{\theta}_T$ is a maximizer of $l_T(\theta)$. We define 
\bea 
\Gamma_T(\theta^*) = - \inv{T} \partial_\theta^2 l_T(\theta^*) \in \reels^{3 \times 3},
\eea
\bea 
K_T(\theta^*) = \inv{T} \partial_\theta^3 l_T(\theta^*) \in \reels^{3 \times 3 \times 3},
\eea
\bea 
M_T(\theta^*) = \int_0^T{\frac{\partial_\theta \lambda(t,\theta^*)}{\lambda(t,\theta^*)}\{dN_t^P - \lambda(t,\theta^*)dt\}}, 
\eea 
and for any indices $k,l,m \in \{0,1,2\}$,
\bea
C_T(\theta^*)_{k,lm} = \inv{T}  \int_{0}^{T}{\partial_{\theta,k}\lambda(t,\theta^*) \partial_{\theta,lm}^2 \text{log} \l(\lambda(t,\theta^*)\r)dt},
\eea  
and 
\bea 
Q_T(\theta^*)_{k,lm} = - \frac{M_T(\theta^*)_k }{T}  \int_{0}^{T}{\frac{\partial_\theta \lambda(t,\theta^*)_l\partial_\theta \lambda(t,\theta^*)_m}{\lambda(t,\theta^*)}dt }.
\eea 
 The three time-averaged quantities $\Gamma_T$, $K_T$ and $C_T$ admit deterministic limiting values when $T \to \infty$ because the process $N^P$ is exponentially mixing. Indeed, a slight generalization of Lemma 6.6 in \cite{Clinet20171800} shows that the vector process $(\lambda(t,\theta^*), \partial_\theta(t,\theta^*),$ $\cdots, \partial_\theta^3(t,\theta^*))$ satisfies the mixing condition [M2] defined on p. 14 in the cited paper, which in turn implies the existence of $\Gamma(\theta^*) \in \reels^{3 \times 3}$, and $K(\theta^*)$, $C(\theta^*) \in \reels^{3 \times 3 \times 3}$ such that for any $\epsilon \in (0,1)$ and any integer $p \geq 1$,
\bea 
\esp \l|\Gamma_T(\theta^*) -  \Gamma(\theta^*) \r|^p = O\l(T^{-\epsilon\frac{p}{2}}\r),
\label{paramFisher}
\eea
\bea 
\esp \l|K_T(\theta^*) -  K(\theta^*) \r|^p = O\l(T^{-\epsilon\frac{p}{2}}\r),
\label{paramK}
\eea
and
\bea 
\esp \l|C_T(\theta^*) -  C(\theta^*) \r|^p = O\l(T^{-\epsilon\frac{p}{2}}\r),
\label{paramC}
\eea
where $|x|$ stands for $\sum_i |x_i|$ for any vector or a matrix $x$. Moreover, it is also an easy consequence of the mixing property along with the fact that $M_T(\theta^*)$ is a martingale that we have the convergence 
\bea 
\esp\l[Q_T(\theta^*) - Q(\theta^*)\r] = O\l(T^{-\frac{\epsilon}{2}}\r),
\label{paramQ}
\eea 
for some $Q(\theta^*) \in \reels^{3 \times 3 \times 3}$. Note that $\Gamma(\theta^*)$ is the asymptotic Fisher information. In particular, in \cite{Clinet20171800} the authors have shown the convergence of moments of the MLE (see Theorem 4.6), 
\bea 
\esp\l[ f\l(\sqrt T (\hat{\theta}_T - \theta^*)\r)\r] \to \esp\l[f\l(\Gamma(\theta^*)^{-\half} \xi\r)\r],
\label{paramConvMoment}
\eea
where $f$ can be any continuous function of polynomial growth, and $\xi$ follows a standard normal distribution. Also, it is easy to see that the convergences in (\ref{paramFisher})-(\ref{paramConvMoment}) hold uniformly in $\theta^* \in K$ under a mild change in the proofs of \cite{Clinet20171800}. The result (\ref{paramConvMoment}) should be compared to Theorem \ref{conditionalCLT}.  Finally, from $\Gamma$, $K$, $C$ and $Q$ we define for any $k \in \{0,1,2\}$
\bea 
b(\theta^*)_k = \half \Gamma(\theta^*)^{jk}\Gamma(\theta^*)^{lm}(K(\theta^*)_{jlm}+ 2\l\{C(\theta^*)_{l,jm}+Q(\theta^*)_{l,jm}\r\})
\label{paramb}
\eea 
with implicit summation of repeated indices. The function $b$ appears in the expression of the expansion of the bias of the local MLE in Section \ref{sectionBiasCorrection}. 

\subsection{Construction of the doubly stochastic Hawkes process}
We establish the existence of the doubly stochastic self-exciting process under very general conditions on the parameter process. We also provide the boundedness of moments of various stochastic integrals with respect to such point process when the parameter is assumed to take its values in a compact space. We follow the same procedure as in \cite{BremaudStability1996} for the construction of a Hawkes process, that is, we show the existence of the doubly stochastic Hawkes process by a fixed point argument. In what follows we let $\calb=(\Omega,\calf,\F,\proba)$, $\F=(\calf_t)_{t\in [0,T]}$, $\calf = \calf_T$ be a stochastic basis such that the filtration $\F$ is generated by the three-dimensional predictable process $(\theta_s)_{s \in [0,T]} = (\nu_s,a_s,b_s)_{s \in [0,T]}$ which is component-wise non-negative, and by a Poisson process $\overline{N}$ of intensity $1$ on $\reels^2$ which is independent of $\theta$. In other words, $\calf_t = \calf_t^{(\theta, \overline{N} )}$. In the following, properties such as predictability or adaptivity will automatically refer to $\F$. Before we turn to the existence of the self-exciting doubly stochastic process, we recall a key result for martingales.

\begin{lemma*}
 Let $\F = (\calf_t)_{t \in [0,T]}, \calf = \calf_T$ be a filtration and $\calg$ a $\sigma$-field that is independent of $\calf$. Consider also the extended filtration defined by $\calh_t = \calf_t \vee \calg$. Then any square integrable $\calf_t$-martingale $M$ is also a $\calh_t$-martingale. In particular, for any $\calh_t$-predictable process $u$ such that $\int_0^T{u_s d\langle M,M\rangle_s}$ is integrable, $\esp[\int_0^T{u_s dM_s} | \calg ] = 0$. 
 \label{lemmaMartingale}
\end{lemma*}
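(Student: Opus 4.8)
The plan is to first show that $M$ remains a martingale once the filtration is enlarged by the independent $\sigma$-field $\calg$, and then to derive the integral identity by combining this with the standard construction of the It\^o integral and the tower property. The square-integrability hypothesis is used only for the second part.

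For the first part, the one fact to isolate is the following elementary lemma: if $X$ is integrable and $\calf_s$, $\calg$ are $\sigma$-fields such that $\sigma(X)\vee\calf_s$ is independent of $\calg$, then $\esp[X\,|\,\calf_s\vee\calg]=\esp[X\,|\,\calf_s]$. I would prove it by showing that the $\calf_s$-measurable (hence $\calf_s\vee\calg$-measurable) random variable $\esp[X\,|\,\calf_s]$ satisfies $\esp\big[\mathbf{1}_{F\cap G}\,\esp[X\,|\,\calf_s]\big]=\esp\big[\mathbf{1}_{F\cap G}\,X\big]$ for every $F\in\calf_s$ and $G\in\calg$: both sides equal $\esp[\mathbf{1}_F X]\,\proba(G)$, because $\mathbf{1}_F X$ is $\sigma(X)\vee\calf_s$-measurable and $\mathbf{1}_F\,\esp[X\,|\,\calf_s]$ is $\calf_s$-measurable, hence both are independent of $\mathbf{1}_G$. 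Since $\{F\cap G:F\in\calf_s,\,G\in\calg\}$ is a $\pi$-system generating $\calf_s\vee\calg$, a monotone-class argument extends the identity to all of $\calf_s\vee\calg$, which proves the claim. Applying this lemma with $X=M_t$ for $s\leq t\leq T$, and noting that $\sigma(M_t)\vee\calf_s\subseteq\calf_T=\calf$ is independent of $\calg$, we get $\esp[M_t\,|\,\calh_s]=\esp[M_t\,|\,\calf_s]=M_s$; together with $\calh_t$-adaptedness of $M$ (since $\calf_t\subseteq\calh_t$) and integrability, this shows $M$ is a $\calh_t$-martingale.

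When $M$ is square integrable, the same argument applied to the $\F$-martingale $M_t^2-\langle M,M\rangle_t$ shows it is a $\calh_t$-martingale. Since the $\F$-predictable $\sigma$-field is contained in the $\calh$-predictable one, the $\F$-predictable process $\langle M,M\rangle$ is also $\calh$-predictable, so by uniqueness in the Doob--Meyer decomposition the predictable bracket of $M$ is the same whether computed in $\F$ or in $\calh$; in particular the object $\langle M,M\rangle$ in the statement is unambiguous. Now, for $\calh_t$-predictable $u$ with $\int_0^T u_s\,d\langle M,M\rangle_s$ integrable, the It\^o integral $I_t:=\int_0^t u_s\,dM_s$ is a well-defined square integrable $\calh_t$-martingale with $I_0=0$, so $\esp[I_T\,|\,\calh_0]=0$ a.s. Since $\calg\subseteq\calf_0\vee\calg=\calh_0$, the tower property gives $\esp\big[I_T\,|\,\calg\big]=\esp\big[\esp[I_T\,|\,\calh_0]\,\big|\,\calg\big]=0$, as required.

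I do not expect a genuinely hard step here: this is the classical ``enlargement by independent noise'' principle. The two points needing a little care are the measure-theoretic lemma reducing conditioning on $\calf_s\vee\calg$ to conditioning on $\calf_s$ for $\calf$-measurable random variables, and the verification that enlarging $\F$ to $\calh$ leaves the predictable quadratic variation $\langle M,M\rangle$ unchanged, so that the integrability assumption on $u$ is imposed on the correct process; both are routine.
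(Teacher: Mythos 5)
Your proposal is correct and follows essentially the same route as the paper: the paper also shows $\esp[M_t\,|\,\calf_s\vee\calg]=\esp[M_t\,|\,\calf_s]=M_s$ using independence of $\calg$ from $\calf$, and then concludes that $\int_0^t u_s\,dM_s$ is a $\calh_t$-martingale so that conditioning on $\calg\subseteq\calh_0$ gives zero. You simply spell out the details the paper leaves implicit (the $\pi$-system argument for the conditioning identity, and the invariance of $\langle M,M\rangle$ under the enlargement), which is a more careful rendering of the same argument.
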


\begin{proof}
Let $M$ defined as in the lemma and write for $0 \leq s \leq t \leq T$,
\beas
\esp[M_t | \calh_s] &=& \esp[M_t | \calf_s \vee \calg] \\
&=& \esp[M_t | \calf_s] \\
&=& M_s,
\eeas
since $\calg \ind M_t$ and $\calg \ind \calf_s$. It follows that $\int_0^t{u_s dM_s}$ is a $\calh_t$-martingale, the second part of the lemma follows.
\end{proof}

We now show the existence of the doubly stochastic Hawkes process associated to $\theta$.

\begin{proof}[Proof of Theorem \ref{lemmaExistence}]
We apply a fixed point argument using integrals over the two-dimensional integer measure $\overline{N}(dt,dx)$. Let us first define $\lambda^0(t) = \nu_t$ and $N^0$ the point process defined as 
\bea 
N_t^0 = \iint_{[0,t] \times \reels}{\mathbb{1}_{\{0 \leq  x \leq \lambda^0(s) \}} \overline{N}(ds,dx)}.
\eea
It is immediate to see that $\lambda^0(t)$ is the $\calf_t$-intensity of $N_t^0$. We then define recursively the sequence of $\calf_t$-adapted point processes $N^n$ along with their stochastic intensities $\lambda^n$ as 
\begin{subequations}
\begin{align}
        \label{lambdan1} \lambda^{n+1}(t) &= \nu_t + \int_{0}^{t-}{a_s e^{-b_s(t-s)}dN_s^n},\\
        N_t^{n+1} &= \iint_{[0,t] \times \reels}{\mathbb{1}_{\{0\leq x \leq \lambda^{n+1}(s)\}} \overline{N}(ds,dx)}.
\end{align} 
\label{systemHawkes}
\end{subequations}
Note that both $\lambda^n$ and $N^n$ are increasing with $n$ and thus both converge point-wise to some limiting values $\lambda$ and $N$ that take their values on $[0, +\infty]$. Moreover, $N$ counts the points of $\overline{N}$ which belong to the positive domain under the curve $t \mapsto \lambda(t)$ by an immediate application of the monotone convergence theorem. Let's now introduce the sequence of processes $\rho^n$ defined as $\rho_t^n = \esp \l[ \lambda^n(t)- \lambda^{n-1}(t) \l|\r. \calf_T^{\theta}  \r]$. Then 
\beas 
\rho_t^{n+1} &=& \esp \l[\l. \int_0^{t}{a_s e^{-b_s(t-s)} \l(\lambda^{n}(s)-\lambda^{n-1}(s)\r)ds} \r| \calf_T^{\theta} \r]\\
&=& \int_0^{t}{a_s e^{-b_s(t-s)} \esp \l[\l. \lambda^{n}(s)-\lambda^{n-1}(s)\r| \calf_T^{\theta} \r]ds}\\
&=&\int_0^{t}{a_s e^{-b_s(t-s)} \rho_s^{n}ds}, 
\eeas
where we used Fubini's theorem in the second equality. Also, the first equality is obtained by Lemma \ref{lemmaMartingale} applied to the compensated measure $\overline{N}(ds,dz) - ds \otimes dz$ and the  independence between $\calf_T^{\overline{N}}$ and $\calf_T^{\theta}$. Thus, setting $\Phi_t^n = \int_0^t{\rho_s^n ds}$, we have by Fubini's theorem
\beas 
\Phi_t^{n+1} = \int_0^t{ \l\{ \int_0^{t-s}{a_s e^{-b_s u}du} \r\}\rho_s^n ds }.
\eeas
Note that $\int_0^{t-s}{a_s e^{-b_s u}du} \leq \frac{a_s}{b_s} \leq r < 1$ by condition (\ref{condQuotient}). Therefore, $\Phi_t^{n+1} \leq r \Phi_t^n$, and thus the application of the monotone convergence theorem to the sequence $\l(\sum_{k=0}^{n}{\Phi_t^k}\r)_n$ yields 
\bea 
\label{unmoinsr}
\esp \l[\l.\int_0^t \lambda(s)ds \r| \calf_T^{\theta} \r] &\leq& \int_0^t {\nu_s ds} + r \esp \l[\l.\int_0^t \lambda(s)ds \r| \calf_T^{\theta} \r].
\eea
A straightforward rearrangement of the terms in (\ref{unmoinsr}) gives us that 
$$\esp \l[\l.\int_0^t \lambda(s)ds \r| \calf_T^{\theta} \r] \leq (1-r)^{-1}\int_0^t{\nu_s ds} < \infty \text{    } \proba-a.s.$$ 
where the last inequality is a consequence of condition (\ref{condNu}). In particular, we deduce that $\int_0^t \lambda(s)ds $ and $N_t$ are both finite almost surely. We need to show that $\lambda(t)$ satisfies  (\ref{eqRecursive}). By mononicity, we deduce by taking the limit $n \to +\infty$ in (\ref{lambdan1}) that 
\bea 
\lambda(t) = \nu_t + \int_{0}^{t-}{a_s e^{-b_s(t-s)}dN_s}.
\eea
Finally, we show how to obtain (\ref{doublyStoProp}). As $\overline{N}$ and $\calf_T^{\theta}$ are independent, it still holds that conditioned on $\calf_T^{\theta}$, $\overline{N}$ is a Poisson process of intensity $1$. From the representation of $N$ as an integral over $\overline{N}$ we conclude that (\ref{doublyStoProp}) holds, and this completes the proof. 
\end{proof}

We now adapt well-known results on point processes to the case of the doubly stochastic Hawkes process, and derive some useful moments estimates for stochastic integrals with respect to $N$. Write $\overline{\Lambda}$ the compensating measure of $\overline{N}$, that is $\overline{\Lambda}(ds,dz) = ds \otimes dz$. Given a predictable function  $W$, write $ W \ast \overline{N}_t = \iint_{[0,t] \times \reels}{W(s,z)\overline{N}(ds,dz)} $, and the associated definition for $W \ast \overline{\Lambda}_t$. Predictable function and integral with respect to random measures definitions can be consulted in \cite{JacodLimit2003}, paragraph II.1. The following lemma is a straightforward adaptation of Lemma I.2.1.5 in \cite{jacod2011discretization}, using also Lemma \ref{lemmaMartingale} and (\ref{doublyStoProp}).

\begin{lemma*}
Let $W$ be a predictable function such that $ W^2 \ast \overline{\Lambda}_t < \infty$ almost surely. Then for any integer $p>1$, there exists a constant $K_p$ such that
$$\esp \l[ \l. \sup_{t \in [0,T]} \l|W \ast (\overline{N}-\overline{\Lambda})_t \r|^p \r|\calf_T^{\theta}\r] $$
$$\leq K_p \esp \l[\iint_{[0,T] \times \reels}{|W(s,z)|^p}dsdz + \l. \l(\iint_{[0,T] \times \reels}{W(s,z)^2}dsdz \r)^{\frac{p}{2}} \r|\calf_T^{\theta}\r]$$
\label{LemmaBDG}
\end{lemma*}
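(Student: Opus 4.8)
The plan is to obtain this conditional moment bound from the corresponding \emph{unconditional} Burkholder--Davis--Gundy estimate for Poisson random measures by conditioning on the trajectory of $\theta$ and re-running the argument in the conditioned world. The unconditional statement I would invoke is Lemma I.2.1.5 in \cite{jacod2011discretization}: if $\overline{N}$ is a Poisson random measure on $[0,T]\times\reels$ with compensator $\overline{\Lambda}(ds,dz)=ds\otimes dz$ relative to a filtration, and $W$ is a predictable function with $W^2\ast\overline{\Lambda}_T<\infty$ almost surely, then for every integer $p>1$ there is a universal constant $K_p$ with
\bea
\esp\l[\sup_{t\in[0,T]}\l|W\ast(\overline{N}-\overline{\Lambda})_t\r|^p\r]\leq K_p\,\esp\l[|W|^p\ast\overline{\Lambda}_T+\l(W^2\ast\overline{\Lambda}_T\r)^{\frac{p}{2}}\r].
\eea
The target inequality is exactly this estimate with all expectations replaced by conditional expectations given $\calf_T^{\theta}$, so it suffices to argue that the whole derivation survives this conditioning.

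The key step is therefore the conditioning argument. Recall $\F=\calf^{(\theta,\overline{N})}$ is generated by $\theta$ and $\overline{N}$, with $\theta\ind\overline{N}$ and $\overline{N}$ a Poisson measure of unit intensity. Set $\calg=\calf_T^{\theta}$, which is independent of $\calf_T^{\overline{N}}$. By Lemma \ref{lemmaMartingale}, applied to the Poisson filtration and to $\calg$, the compensated measure $\overline{N}-\overline{\Lambda}$ remains a martingale measure after the enlargement, hence also with respect to $\F$, so $W\ast(\overline{N}-\overline{\Lambda})$ is a genuine $\F$-martingale and conditional expectations can be moved through the stochastic integral. Since $\overline{N}\ind\calg$, conditioning on $\calf_T^{\theta}$ leaves $\overline{N}$ a Poisson random measure of unit intensity --- the observation already used to establish (\ref{doublyStoProp}) --- and, via a regular conditional probability, for $\proba$-almost every realization $\theta=\tilde\theta$ the function $W(\cdot,\cdot)=W(\cdot,\cdot\,;\tilde\theta)$ is a fixed predictable integrand under $\proba(\,\cdot\mid\calf_T^{\theta})$. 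Applying the unconditional bound above on this conditioned probability space and integrating back against the law of $\theta$ yields
\bea
\esp\l[\l.\sup_{t\in[0,T]}\l|W\ast(\overline{N}-\overline{\Lambda})_t\r|^p\r|\calf_T^{\theta}\r]\leq K_p\,\esp\l[\l.|W|^p\ast\overline{\Lambda}_T+\l(W^2\ast\overline{\Lambda}_T\r)^{\frac{p}{2}}\r|\calf_T^{\theta}\r],
\eea
which, after substituting $\overline{\Lambda}(ds,dz)=ds\,dz$, is the stated inequality.

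The main obstacle is the measure-theoretic bookkeeping of this conditioning: one has to check carefully that passing to $\proba(\,\cdot\mid\calf_T^{\theta})$ simultaneously preserves the Poisson character of $\overline{N}$ and the predictability (now with respect to the conditioned filtration) of the integrand $W$, and that the constant $K_p$ supplied by Lemma I.2.1.5 is genuinely independent of everything else, so that the inequality integrates correctly against the law of $\theta$. Once the independence structure built into $\calf_t=\calf_t^{(\theta,\overline{N})}$, Lemma \ref{lemmaMartingale}, and (\ref{doublyStoProp}) are used to dispatch these points, what remains is a verbatim transcription of Lemma I.2.1.5 of \cite{jacod2011discretization}, so no genuinely new estimate is needed.
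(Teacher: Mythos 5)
Your proposal is correct and matches the paper's own treatment: the paper proves Lemma \ref{LemmaBDG} exactly by declaring it a straightforward adaptation of Lemma 2.1.5 of \cite{jacod2011discretization}, combined with Lemma \ref{lemmaMartingale} and the conditional Poisson structure in (\ref{doublyStoProp}), which is precisely the conditioning-on-$\calf_T^{\theta}$ argument you spell out (including the universality of $K_p$ and the regular conditional distribution step). No further comment is needed.
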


For any (random) kernel $\chi : (s,t) \to \chi(s,t)$, we say that $\chi$ is $\G$-predictable for some filtration $\G$ if for any $t \in [0,T]$ the process $\chi(.,t)$ is. For example the kernel $\chi(s,t) = a_s e^{-b_s(t-s)}$ is $\F^{\theta}$-predictable. Nonetheless,  we will also need to deal with other kernels in the course of the proofs. Consequently, we introduce the following lemma, which ensures the boundedness of moments of the doubly stochastic Hawkes process under the condition (\ref{condBoundedMoments}).

\begin{lemma*}
Under the condition $c :=  \sup_{t \in [0,T]}  \int_0^t{ a_s e^{- b_s(t-s)}ds}  < 1 \text{    } \proba-a.s.$, the counting process $N$ defined through (\ref{eqRecursive}) admits moments on $[0,T]$ that can be bounded by values independent from $T$. Moreover, for any $\F^{\theta}$-predictable kernel $\chi$ such that $\int_0^t{\chi(s,t)ds} $ is bounded uniformly in $t \in [0,T]$ independently from $T$, and for any predictable process $\psi$ that has uniformly bounded moments independently from $T$, we have

\bd
	\im[{\bf (i)}]  $\sup_{t \in [0,T]} \esp \l[\lambda(t)^p |\calf_T^{\theta}\r]^{\inv{p}} < Q_p$
	\im[{\bf (ii)}]  $\sup_{t \in [0,T]}  \esp \l[ \l(\int_0^{t}{\chi(s,t)  dN_s}\r)^p |\calf_T^{\theta}\r]^{\inv{p}} < Q_{p,\chi}$
	%\im[{\bf (iii)}] $\sup_{t \in [0,T]}  \l\| \int_0^{t}{\psi_s  dN_s} \r\|_p < Q_{p,\psi}\l(T^{\inv{2}} + T^{\inv{p}} + T \r)$
\ed 
where the constants $Q_p$, $Q_{p,\chi}$ are independent from $T$. 
%and $Q_{p,\psi}$
 \label{lemmaMomentsGeneral}
\end{lemma*}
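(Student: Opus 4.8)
The plan is to run the same monotone approximation scheme as in the proof of Theorem~\ref{lemmaExistence} and to exploit the hypothesis $c<1$ to close a Gronwall-type estimate whose bound does not involve $T$. First I would recall the increasing sequences $\lambda^n \uparrow \lambda$, $N^n \uparrow N$ built in that proof, and check by induction on $n$ that each $\lambda^n$ has finite conditional moments of every order on $[0,T]$: this holds for $\lambda^0=\nu_\cdot$ since $\theta$ lives in the compact set $K$, and passes from $n$ to $n+1$ because $N^n$ then inherits finite conditional moments, hence so does $\lambda^{n+1}=\nu_\cdot+\int_0^{\cdot-}a_s e^{-b_s(\cdot-s)}dN_s^n$. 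Then, for fixed $t$, writing $\tilde N^n_t=N^n_t-\int_0^t\lambda^n(s)ds$ and using the representation of $N^n$ as an integral over $\overline{N}$, I would decompose
\bea
\lambda^{n+1}(t) = \nu_t + \int_0^{t-}a_s e^{-b_s(t-s)}\lambda^n(s)\,ds + W^{(t)}\ast(\overline{N}-\overline{\Lambda})_t,
\eea
where $W^{(t)}(s,x)=a_s e^{-b_s(t-s)}\mathbb{1}_{\{0\le x\le\lambda^n(s)\}}\mathbb{1}_{\{s\le t\}}$ is a predictable function, so that the last term equals $\int_0^{t-}a_s e^{-b_s(t-s)}d\tilde N^n_s$ and Lemma~\ref{LemmaBDG} applies to it.

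Next I would set $g_n(t)=\esp\l[\lambda^n(t)^p\,|\,\calf_T^\theta\r]^{1/p}$ and $G_n=\sup_{t\in[0,T]}g_n(t)$, which is finite by the first step. Applying the conditional Minkowski inequality to the displayed decomposition and Lemma~\ref{LemmaBDG} to the last term, using $a_s\le\overline{a}$ together with the standing bound $\int_0^t a_s e^{-b_s(t-s)}ds\le c$ for all $t$ — which forces every kernel integral $\int_0^t(a_s e^{-b_s(t-s)})^k ds$ ($k\ge1$) to be at most $\overline{a}^{\,k-1}c$ — and bounding $\esp[\lambda^n(s)^q|\calf_T^\theta]\le G_n^q$ for $q\le p$, I would obtain, for integer $p\ge2$, a recursion
\bea
G_{n+1} \le F(G_n), \qquad F(x) := \overline{\nu} + c\,x + C_p\l(1 + x^{1/2} + x^{1/p}\r),
\eea
with $C_p$ depending only on $p$, $\overline{a}$, $c$ and the constant $K_p$ of Lemma~\ref{LemmaBDG} (general $p$ then follows by conditional Jensen). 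Since $F$ is increasing and $F(x)-x\to-\infty$ as $x\to\infty$ (because $c<1$ and the remaining terms are sublinear), $F$ has a largest fixed point $x^*$ and maps $[0,x^*]$ into itself; as $G_0=\sup_t\nu_t\le\overline{\nu}\le x^*$, induction gives $G_n\le x^*$ for all $n$. Monotone convergence then yields $\esp[\lambda(t)^p|\calf_T^\theta]=\lim_n\esp[\lambda^n(t)^p|\calf_T^\theta]\le(x^*)^p$ uniformly in $t\in[0,T]$, with $x^*$ a function of $p,\overline{\nu},\overline{a},c,K_p$ only and not of $T$; this is (i), with $Q_p=x^*$.

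Part (ii) then follows from (i) by the analogous splitting $\int_0^t\chi(s,t)dN_s=\int_0^t\chi(s,t)\lambda(s)ds+\int_0^t\chi(s,t)d\tilde N_s$: the absolutely continuous part has conditional $L^p$-norm at most $\l(\sup_t\int_0^t|\chi(s,t)|ds\r)Q_p$ by conditional Minkowski, while the martingale part is controlled by Lemma~\ref{LemmaBDG} in terms of $\esp[\int_0^t|\chi(s,t)|^p\lambda(s)ds\,|\,\calf_T^\theta]$ and $\esp[(\int_0^t\chi(s,t)^2\lambda(s)ds)^{p/2}\,|\,\calf_T^\theta]$, both finite by (i) together with the uniform bounds on $\int_0^t|\chi(s,t)|ds$ and on $\chi$ itself (which hold for all the $\F^\theta$-predictable kernels arising later in the proofs). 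This produces $Q_{p,\chi}$ in terms of those quantities and of $Q_p$, hence $T$-free.

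The main obstacle is the self-referential character of the estimate: the predictable quadratic variation of the martingale increment involves $\lambda$ (resp.\ $\lambda^n$) itself, so a direct application of the Burkholder--Davis--Gundy inequality does not close a bound. The monotone approximation circumvents this by keeping $\lambda^n$ on the right-hand side and passing to the limit only at the end; and the hypothesis $c<1$ is exactly what makes the linear part of $F$ strictly contractive at infinity, so that the sublinear martingale contributions $x^{1/2}$ and $x^{1/p}$ cannot produce blow-up and a finite, $T$-independent bound survives in the limit.
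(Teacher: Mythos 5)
Your proof is correct, and it reaches the conclusion by a route that differs from the paper's in its bookkeeping even though both arguments rest on the same two pillars: the decomposition of the intensity into compensator plus martingale part and the Burkholder--Davis--Gundy-type bound of Lemma~\ref{LemmaBDG}, with the hypothesis $c<1$ used to absorb the dominant term. The paper works directly with $\lambda$ and proves (i) by induction on dyadic moment orders $p=2^q$: it uses the weighted power inequality $(x+y)^{2^q}\le (1+\epsilon)^{2^q-1}x^{2^q}+(1+\epsilon^{-1})^{2^q-1}y^{2^q}$, bounds the martingale contribution through the induction hypothesis (lower-order moments) and the compensator contribution through the H\"{o}lder trick $(\int fg)^k\le(\int f^kg)(\int g)^{k-1}$, which produces the self-referential bound $\sup_t\esp[\lambda(t)^p|\calf_T^\theta]\le \text{const}+(1+\epsilon)^{2^q}c^p\sup_t\esp[\lambda(t)^p|\calf_T^\theta]$, and then rearranges after choosing $\epsilon$ small. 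You instead run the estimate along the monotone approximations $\lambda^n\uparrow\lambda$ from the proof of Theorem~\ref{lemmaExistence}, use conditional Minkowski (integral form) to get the linear coefficient $c$ on the drift term and sublinear contributions $x^{1/2},x^{1/p}$ from the BDG term, and close via the fixed-point argument $G_{n+1}\le F(G_n)$, passing to the limit by monotone convergence; no induction on the moment order is needed. What your route buys is that the rearrangement step becomes unobjectionable: the paper's ``take the sup and solve for it'' implicitly presupposes that the supremum is finite, which your scheme supplies automatically (the recursion keeps $\lambda^n$ on the right-hand side and only the limit is taken at the end); what the paper's route buys is a more elementary closing step (no fixed-point discussion) and a bound whose dependence on $c$ is completely explicit. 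One caveat, which applies equally to the paper's one-line Step 3: for part (ii) the stated hypothesis only controls $\int_0^t\chi(s,t)\,ds$, while the BDG terms involve $\int_0^t|\chi(s,t)|^p\lambda(s)\,ds$ and $\bigl(\int_0^t\chi(s,t)^2\lambda(s)\,ds\bigr)^{p/2}$, so one also needs $\chi$ itself bounded (as is the case for all kernels used later); you flag this explicitly, which is fine. Note also that your $Q_p=x^*$ depends on the random variable $c$ (and on $\overline{\nu},\overline{a}$), hence is $\calf_T^\theta$-measurable rather than deterministic, but this is exactly the status of the paper's bound $(1-c)^{-1}\overline{\nu}$, and it is independent of $T$ as required.
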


\begin{proof}
We conduct the proof in three steps.\\
\smallskip
{\bf Step 1}. We prove that {\bf (i)} holds for $p=1$. We write 
\beas 
\esp [\lambda(t)|\calf_T^{\theta}] &=& \nu_{t} + \int_0^{t-}{a_{s} e^{-b_{s}(t-s)} \esp [\lambda(s)|\calf_T^{\theta}]ds} \\
&\leq& \overline{\nu} + \sup_{s \in [0,t]}\esp [\lambda(s)|\calf_T^{\theta}] \int_0^{t-}{a_{s} e^{-b_{s}(t-s)} ds}  \\
&\leq& \overline{\nu} + c \sup_{s \in [0,t]}\esp [\lambda(s)|\calf_T^{\theta}], 
\eeas
where we used condition (\ref{condBoundedMoments}) at the last step. Taking the supremum over $[0,T]$ on both sides, we get
\bea 
\sup_{t \in [0,T]} \esp [\lambda(t)|\calf_T^{\theta}] \leq (1-c)^{-1}\overline{\nu}.
\label{Eqnorm1}
\eea
In particular this proves the case $p=1$, since the right hand side of (\ref{Eqnorm1}) is independent from $T$.

{\bf Step 2}.  We prove that {\bf (i)} holds for any integer $p>1$. Note that it is sufficient to consider the case $p = 2^q$, $q >0$. We  thus prove our result by induction on $q$. The initialisation case $q=0$ has been proved in Step 1. Note that for any $\epsilon >0$, 
\beas 
\esp [\lambda(t)^p|\calf_T^{\theta}] &\leq& (1+\epsilon^{-1})^{2^q-1}\overline{\nu} + (1+ \epsilon)^{2^q-1}\esp\l[\l.\l(\int_0^{t-}{a_{s} e^{-b_{s}(t-s)} dN_s}\r)^p \r| \calf_T^{\theta} \r], \\
\eeas
where we have used the inequality $(x+y)^{2^q} \leq (1+\epsilon)^{2^q-1}x^{2^q}+(1+\epsilon^{-1})^{2^q-1}y^{2^q}$ for any $x,y,\epsilon >0$. Now, for a fixed $t \in [0,T]$, define $W(s,z) = a_{s} e^{-b_{s}(t-s)} \mathbb{1}_{\{0 \leq z \leq \lambda(s)\}}$, and note that
\beas 
\esp\l[\l.\l(\int_0^{t-}{a_{s} e^{-b_{s}(t-s)} dN_s}\r)^p \r| \calf_T^{\theta} \r] &=& \esp \l[  \l.\l(W \ast \overline{N}_t \r)^p \r| \calf_T^{\theta} \r]\\
&\leq& (1+\epsilon^{-1})^{2^q-1}\esp \l[ \l. \l(W \ast (\overline{N}-\overline{\Lambda})_t \r)^p \r| \calf_T^{\theta}\r]\\ & & + (1+\epsilon)^{2^q-1}\esp \l[ \l. \l(W \ast \overline{\Lambda}_t \r)^p \r| \calf_T^{\theta} \r].   \\
\eeas
We apply now Lemma \ref{LemmaBDG} to get
\beas 
\esp \l[  \l(W \ast (\overline{N}-\overline{\Lambda})_t \r)^p | \calf_T^{\theta} \r] &\leq& K_p \esp\l[\l.\iint_{[0,T] \times \reels}{|W(s,z)|^p}dsdz + \l(\iint_{[0,T] \times \reels}{W(s,z)^2}dsdz \r)^{\frac{p}{2}}\r| \calf_T^{\theta} \r]\\
&=&  K_p \esp \l[\l.\int_0^{t-}{a_{s}^p e^{-pb_{s}(t-s)} \lambda(s)ds}+ \l(\int_0^{t-}{a_{s}^2 e^{-2b_{s}(t-s)} \lambda(s)ds}\r)^{\frac{p}{2}} \r| \calf_T^{\theta} \r]. 
\eeas
We easily bound the first term by the induction hypothesis by some constant $\frac{A_p}{2}$. For the second term, an elementary application of H\"{o}lder's inequality shows that for any $k >1$ and any non-negative functions $f,g$, $(\int{fg})^k \leq (\int{f^k g})(\int{g})^{k-1}$. This along with the induction hypothesis leads to a similar bound for the second term. On the other hand, we have
\beas
 \esp \l[ \l.\l(W \ast \overline{\Lambda}_t \r)^p \r|\calf_T^{\theta} \r] &=& \esp\l[\l.\l(\int_0^{t-}{a_{s} e^{-b_{s}(t-s)} \lambda(s)ds}\r)^p \r| \calf_T^{\theta} \r]. \\
 \eeas
  We apply again the same H\"{o}lder's inequality as above with functions $f(s) = \lambda(s)$ and $g(s) = a_s e^{-b_s(t-s)}$ to get
 \beas 
 \esp\l[\l.\l(\int_0^{t-}{a_{s} e^{-b_{s}(t-s)} \lambda(s)ds}\r)^p \r| \calf_T^{\theta} \r] &\leq& c^{p-1}\esp\l[\l.\int_0^{t-}{a_{s} e^{- b_{s}(t-s)}\lambda(s)^pds} \r| \calf_T^{\theta} \r]\\
&=& c^{p-1}\int_0^{t-}{a_{s} e^{-b_{s}(t-s)} \esp \l[\lambda(s)^p| \calf_T^{\theta} \r]ds} \\
&\leq& c^p \sup_{s \in [0,t]}{\esp \l[  \lambda(s)^p| \calf_T^{\theta} \r]}
\eeas 
Finally, we have shown that 

\beas 
\esp [\lambda(t)^p|\calf_T^{\theta}] &\leq& (1+\epsilon^{-1})^{2^q-1}\overline{\nu} + (1+ \epsilon)^{2^q-1}(1+\epsilon^{-1})^{2^q-1}A_p + (1+ \epsilon)^{2^q}c^p \sup_{s \in [0,t]} \esp [\lambda(s)^p|\calf_T^{\theta}].
\eeas
This yields, taking supremum over the set $[0,T]$ and taking $\epsilon >0$ small enough so that $(1+ \epsilon)^{2^q} c^p <1$, 

\beas 
\sup_{t \in [0,T]} \esp [\lambda(t)^p|\calf_T^{\theta}]\l(1-(1+ \epsilon)^{2^q}c^p\r) &\leq& (1+\epsilon^{-1})^{2^q-1}\overline{\nu} + (1+ \epsilon)^{2^q-1}(1+\epsilon^{-1})^{2^q-1}A_p, 
\eeas
and dividing by $\l(1-(1+ \epsilon)^{2^q}c^p\r) $ on both sides we get the result.\\
{\bf Step 3}. It remains to show {\bf (ii)} and {\bf (iii)}. But note that they are direct consequences of the boundedness of moments of $\lambda$ along with Lemma \ref{LemmaBDG}. 

\end{proof}

%\begin{remark*} 
% Note that as the right hand-sides of Equations {\bf (i)}, {\bf (ii)} and {\bf (iii)} in Lemma \ref{lemmaMomentsGeneral} are deterministic constants, by Jensen's inequality the result still holds with actual expectations in place of conditional expectations. 
%\end{remark*}

\subsection{LCLT and boundedness of moments of order $2\kappa$ } \label{sectionLCLT}

We focus on asymptotic properties of the local maximum likelihood estimator $\widehat{\Theta}_{i,n}$ of our model on each block $i \in \{1, \cdots, B_n\}$. Recall that we are given the global filtration $\calf_t = \calf_t^{(\theta^{*}, \overline{N})}$ that bears a sequence of doubly stochastic Hawkes processes $(N_t^n)_{t \in [0,T]}$. We perform maximum likelihood estimation on each time block $\big( (i-1)\Delta_nT, i \Delta_nT \big]$, $i \in \{ 1, \cdots, B_n \}$ on the regression family of a parametric Hawkes process and show the local central limit theorem for every local estimator $\widehat{\Theta}_{i,n}$ of $\theta_{(i-1)\Delta_n}^{*}$, uniformly in the block index $i$. In addition, we show that all moments up to order $2\kappa > 2$ of the rescaled estimators $\sqrt{h_n}\big( \widehat{\Theta}_{i,n}- \theta_{(i-1)\Delta_n}^{*} \big)$ are convergent uniformly in $i$. 
 
\smallskip
Instead of deriving the limit theorems directly on each block, we show that by a well-chosen time change it is possible to reduce our statistical problem to a long-run framework. Such procedure is based on the following elementary lemma.
\begin{lemma*}
Let $(N_t)_t$ be a point process adapted to a filtration $\calf_t$, with $\calf_t$-stochastic intensity $\lambda(t)$. For $\gamma > 0$, consider $N_t^\gamma = N_{\gamma t}$, which is adapted to $\calf_t^\gamma = \calf_{\gamma t}$. Then, $N_t^\gamma$ admits $\lambda^\gamma(t) = \gamma \lambda(\gamma t)$ as a $\calf_t^\gamma$-stochastic intensity. Moreover, if $N_t$ is a doubly stochastic Hawkes process with parameter process $(\theta_s)_s$, $N_t^\gamma$ has the distribution of a Hawkes process of parameter $(\gamma \theta_{\gamma s})_s$, that is,
\bea 
\lambda^\gamma(t) = \gamma \nu_{\gamma t} + \int_0^{t-}{\gamma a_{\gamma s} e^{-\gamma b_{\gamma s} (t-s)}dN_s^\gamma}.
\eea

\label{lemmaTimeChange}
\end{lemma*}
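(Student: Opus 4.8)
The plan is to reduce everything to the martingale characterization of a stochastic intensity together with a deterministic change of variables. Recall that $\lambda(t)$ being the $\calf_t$-stochastic intensity of $N_t$ means exactly that $M_t := N_t - \int_0^t\lambda(s)\,ds$ is an $\calf_t$-local martingale. By the substitution $u=\gamma s$ one has $\int_0^t\gamma\lambda(\gamma s)\,ds = \int_0^{\gamma t}\lambda(u)\,du$, hence
\[
N_t^\gamma - \int_0^t\lambda^\gamma(s)\,ds = N_{\gamma t} - \int_0^{\gamma t}\lambda(u)\,du = M_{\gamma t}.
\]
Since $t\mapsto\gamma t$ is deterministic, continuous and increasing and $\calf_s^\gamma=\calf_{\gamma s}$, for $0\le s\le t$ we get $\esp[M_{\gamma t}\mid\calf_s^\gamma]=\esp[M_{\gamma t}\mid\calf_{\gamma s}]=M_{\gamma s}$; adaptedness and integrability are immediate. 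Thus $M_{\gamma\cdot}$ is an $\calf^\gamma$-local martingale and $\lambda^\gamma(t)=\gamma\lambda(\gamma t)$ is the $\calf_t^\gamma$-stochastic intensity of $N_t^\gamma$, which is the first assertion.

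For the doubly stochastic Hawkes case, I would substitute the Hawkes form $\lambda(t)=\nu_t+\int_0^{t-}a_s e^{-b_s(t-s)}\,dN_s$ into $\lambda^\gamma(t)=\gamma\lambda(\gamma t)$ and change variables $s=\gamma u$ in the Stieltjes integral against $dN_s$. Since $N_{\gamma u}=N_u^\gamma$, the image of the counting measure $dN_s$ under $s\mapsto s/\gamma$ is $dN_u^\gamma$, and $\gamma t - s=\gamma(t-u)$, so $\gamma\int_0^{\gamma t-}a_s e^{-b_s(\gamma t-s)}\,dN_s=\int_0^{t-}(\gamma a_{\gamma u})\,e^{-(\gamma b_{\gamma u})(t-u)}\,dN_u^\gamma$. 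Hence $\lambda^\gamma(t)=\gamma\nu_{\gamma t}+\int_0^{t-}\gamma a_{\gamma s}e^{-\gamma b_{\gamma s}(t-s)}\,dN_s^\gamma$, i.e.\ exactly the intensity of a Hawkes process with parameter process $(\gamma\theta_{\gamma s})_s$. The distributional statement then follows from property (\ref{doublyStoProp}): conditionally on $\calf_T^\theta$, $N$ is a standard inhomogeneous Hawkes process with deterministic parameter $\theta$, so conditionally on the same $\sigma$-field $N^\gamma$ is a standard inhomogeneous Hawkes process with deterministic parameter $(\gamma\theta_{\gamma s})_s$ --- note that (\ref{condQuotient}) is inherited since $\sup_s\gamma a_{\gamma s}/(\gamma b_{\gamma s})=\sup_s a_{\gamma s}/b_{\gamma s}\le r<1$ --- and therefore, by Theorem \ref{lemmaExistence}, $N^\gamma$ is itself a doubly stochastic Hawkes process with that parameter process.

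There is no genuine obstacle here; the only points deserving care are: justifying the change of variables inside $\int dN_s$ at the level of the random counting measures (pushing forward $\sum_i\delta_{T_i}$ under $T_i\mapsto T_i/\gamma$) rather than as an ordinary substitution, and checking that the time-changed filtration $\calf_t^\gamma=\calf_{\gamma t}$ satisfies the usual conditions so that intensities are well defined. If one prefers to work with genuine rather than local martingales, the conditional-expectation identity above applies verbatim after localization by a sequence of $\calf$-stopping times $(\tau_k)$, whose images $(\tau_k/\gamma)$ are $\calf^\gamma$-stopping times, since deterministic time changes commute with stopping.
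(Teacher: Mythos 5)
Your proof is correct and takes essentially the same route as the paper: identify the compensator of $N^\gamma$ through the deterministic change of variables $u=\gamma^{-1}s$, use that a deterministic time change preserves the (local) martingale property with respect to $\calf_t^\gamma=\calf_{\gamma t}$, and then change variables in the Stieltjes integral defining the Hawkes intensity. You merely make explicit details the paper leaves implicit (localization, the pushforward of the counting measure, and the distributional conclusion via the conditional property of Theorem \ref{lemmaExistence}).
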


\begin{proof}
First note that $N_t^\gamma = N_{\gamma t}$ is compensated by $\int_0^{\gamma t}{\lambda(s)ds}$. By a simple change of variable $u = \gamma^{-1} s$ this integral can be written as $\int_0^t{\gamma \lambda( \gamma u) du} $ which proves the first part of the lemma. In the doubly stochastic Hawkes case, let us write the integral form of the time-changed intensity and apply once again the change of variable $u = \gamma^{-1} s$,
\beas
\lambda^\gamma(t) &=& \gamma \lambda(\gamma t)\\
&=& \gamma \nu_{\gamma t} + \int_0^{\gamma t-}{\gamma a_{ s} e^{-b_{ s}(\gamma t -s)} dN_s}\\
&=& \gamma \nu_{\gamma t} + \int_0^{ t-}{\gamma a_{\gamma u} e^{-\gamma b_{\gamma u}(t -u)} dN^\gamma_{ u}},\\
\eeas 
and we are done.
\end{proof}

By virtue of Lemma \ref{lemmaTimeChange}, for any block index $i \in \{1, \cdots , B_n\}$, we consider the time change $\tau_i^n : t \mapsto n^{-1}t + (i-1)\Delta_n $ and the point process $(N_s^n)_{\{s \in ((i-1)\Delta_n, i \Delta_n]\}}$ in order to get a time changed point process $N^{i,n}$ defined on the time set $[0, h_n T]$ by the formula $N_t^{i,n} = N_{\tau_i^n(t)}^{n} - N_{\blockd}^n$. Such process is adapted to the filtration $\calf_t^{i,n}= \calf_{\tau_i^n(t)}$, for  $t \in [0,  h_nT]$. The parameter processes are now $(\theta_t^{i,n,*})_{\{t \in [0,  h_n T]\}} =(\theta_{\tau_i^n(t)}^{*})_{\{t \in [0,  h_n T]\}}$ whose canonical filtration can be expressed as $\calf_t^{\theta^{i,n,*}} =\sigma\{\theta_{s}^{i,n,*}| 0 \leq s \leq t\}$, for $t \in [0,h_n T]$. Finally note that the
$\calf_t^{i,n}$-stochastic intensities are now of the form
\bea 
\lambda_{*}^{i,n}(t) &=& \nu_{t}^{i,n,*} + \int_0^{t-}{a_{s}^{i,n,*} e^{-b_{s}^{i,n,*}(t-s)} dN_s^{i,n}} + R_{i,n}(t),
\label{newIntensity}
\eea
where $R_{i,n}(t)$ is the $\calf_0^{i,n}$-measurable residual process defined by the relation
\bea 
R_{i,n}(t) = \int_0^{(i-1)\Delta_n -}{na_{s}^{*} e^{-nb_{s}^{*}(\tau_i^n(t) -s)} dN_s^n}.
\eea
$R_{i,n} (t)$ should be interpreted as the pre-excitation induced by the preceding blocks. Note that in view of the exponential form of the kernel $\phi_t = a e^{- bt}$ assumption, $R_{i,n}(t)$ can be bounded by 
\bea 
R_{i,n}(t) \leq e^{-\underline{b}t} R_{i,n}(0)
\label{preExcitationExpo}
\eea
%For ease of notation, we now assume that we have fixed some $i \in \{1,..,\Delta_n^{-1}\}$, and we drop the index $i$ in all expressions. The reader should be easily convinced that all the results are obtained uniformly in the block index, and the value of the residual $R_n (t)$.
%The stochastic intensities are thus rewritten as
 %\bea 
 %\lambda_{*}^n(t) = \nu_{t}^{n,*} + \int_0^{t-}{a_{s}^{n,*} e^{-b_{s}^{n,*}(t-s)} dN_s^n} + R_n(t)
 %\eea
Note that all the processes $N^{i,n}$ can be represented as integrals over a sequence of Poisson processes $\overline{N}^{i,n}$ of intensity $1$ on $\reels^2$ as follows:
\bea 
N_t^{i,n} = \iint_{[0,t] \times \reels_+}{\mathbb{1}_{\{0 \leq z \leq \lambda_{*}^{i,n}(s) \}} \overline{N}^{i,n}(ds,dz)}.
\eea
Indeed, $\overline{N}^{i,n}$ is the time-space changed version of the initial Poisson process $\overline{N}$ defined by $\overline{N}^{i,n}(A \times B) = \overline{N}\l( \tau_i^n(A) \times nB\r)$ for $A$ and $B$ any two Borel sets of $\reels$. In the time-changed representation, we define the regression family of stochastic intensities 
\bea 
\tilde{\lambda}^{i,n}(t,\theta) = \nu + \int_0^{t-}{ae^{-b(t-s)}dN_s^{i,n}},
\eea
which is related to $\lambda^{i,n}$ (see (\ref{candidateIntensity})) by $\tilde{\lambda}^{i,n}(t,\theta) = n^{-1} \lambda^{i,n}(\tau_i^n(t), \theta)$.
Also, the Quasi Log Likelihood process defined in (\ref{QLL}) on the $i$-th block has now the representation (up to the constant term $\text{log}(n) N_{h_nT}^{i,n}$)
\bea 
l_{i,n}(\theta) = \int_0^{h_n T}{\text{log}(\tilde{\lambda}^{i,n}(t,\theta))dN_t^{i,n}} - \int_0^{h_n T}{\tilde{\lambda}^{i,n}(t,\theta)dt},
\eea
Note that in our case, the true underlying intensity, $\lambda_{*}^{i,n}$ does not belong to the regression family $(\tilde{\lambda}^{i,n}(.,\theta))_{\theta \in K}$ for two reasons : the parameter process $\theta^{*}$ is not constant on the $i$-th block, and the regression family does not take into account the existence of a pre-excitation term in (\ref{newIntensity}). We are in a mispecified case, but we wish to take advantage of the continuity of the process $\theta^{*}$ to show that the asymptotic theory still holds, that is, the MLE tends to the value $\theta_0^{i,n,*} = \theta_{(i-1)\Delta_n}^{*}$ which is the value of the process $\theta^{*}$ at the beginning of the $i$-th block. The procedure is thus asymptotically equivalent to performing the MLE on the model whose stochastic intensity is in the regression family with true value $\theta = \theta_{(i-1)\Delta_n}^{*}$. To formalize such idea, we introduce an auxiliary model corresponding to the parametric case generated by the true value $\theta_{(i-1)\Delta_n}^{*}$. More precisely, we introduce the constant parameter Hawkes process $N^{i,n,c}$ generated by $\overline{N}^{i,n}$ and the initial value $\theta_0^{i,n,*}$, whose stochastic intensity satisfies 
\bea 
\lambda^{i,n,c}(t) = \nu_0^{i,n,*} + \int_0^{t-}{a_0^{i,n,*}e^{-b_0^{i,n,*}(t-s)} dN_s^{i,n,c}}.
\eea
Moreover, we assume that $N_t^{i,n,c}$ has the representation
\bea
N_t^{i,n,c} = \iint_{[0,t] \times \reels_+}{\mathbb{1}_{\{0 \leq z \leq \lambda^{i,n,c}(s) \}} \overline{N}^{i,n}(ds,dz)}.
\eea
Note that $N_t^{i,n,c}$ is unobserved and just used as an intermediary to derive the asymptotic properties of the MLE, by showing systematically that any variable $N^{i,n}$, $\tilde{\lambda}^{i,n}$, $l_{i,n}$, etc. is asymptotically very close to its counterpart that is generated by the constant parameter model.

\smallskip
For reasons that will become apparent later, it is crucial to localize the pre-excitation $R_{i,n}(0)$ and bound it by some deterministic value $M_n$ that depends solely on $n$ and such that $M_n = O(n^q)$ for some $q > 1$. To reduce our local estimation problem to the case of a parametric Hawkes process, we will also need to condition with respect to the initial value of the parameter process. We will thus use extensively the conditional expectations $\esp[ . \mathbf{1}_{\{R_{i,n}(0) \leq M_n\}} | \calf_0^{i,n}, \theta_0^{i,n,*} = \theta_0 ]$, that we denote  by  $\espc$, and whose existences are justified by a classical regular distribution argument\footnote{This is a consequence to the fact that $K \subset \reels^3$ is a Borel space.}  (see for instance Section $4.3$ (pp. $77-80$) in \cite{breiman1992probability}). In the same spirit, for a measurable set $A \in \calf$, $\probac[A]$ should be understood as $\espc[\mathbf{1}_A]$. Finally we will need frequently to take supremum over the quadruplet     $(\theta_0,i,n,t)$. For that reason we introduce the notation $\E = \{\l. (\theta_0,i,n,t) \in K \times \naturels^2 \times  \reels_+ \r | 1 \leq i \leq B_n \text{  and  } 0 \leq t \leq h_n T\} $. When $n \in \naturels$ is fixed, we define $\E_n$ the subset of $\E$ as $\E_n = \{\l. (\theta_0,i,t) \in K \times \naturels \times  \reels_+ \r | 1 \leq i \leq B_n \text{  and  } 0 \leq t \leq h
_n T\} $. In the same spirit, it is also useful when truncation arguments appear, to consider in the previous equation the subset of $\E_n$ for which we have the stronger condition $h_n^\alpha T \leq t \leq h_n T$ where $\alpha \in (0,1)$ that we denote by $\E_n^\alpha$. The next lemma states the uniform boundedness of the moments of $\lambda_{*}^{i,n}$ and $\lambda^{i,n,c}$, along with $\mathbb{L}^p$ estimates for stochastic integrals over $N^{i,n}$ and $N^{i,n,c}$.

\begin{lemma*}
We have, for any integer $p \geq 1$ and any $\F^{\theta^{i,n,*}}$-predictable kernel $\chi$ such that $\int_0^t{\chi(s,t)ds} $ is bounded uniformly in $t \in [0,h_n T]$ independently from $T$ and $n$, %and for any predictable process $\psi$ that has uniformly bounded moments independently of $T$ and $n$, we have:
\bd
		\im[{\bf (i)}] $\sup_{(\theta_0,i,n,t) \in \E} \espc\l|\lambda_{*}^{i,n}(t) \r|^p \leq M_p \text{  }\proba\text{-a.s.}$
		\im[{\bf (ii)}] $\sup_{(\theta_0,i,n,t) \in \E} \espc\l| \int_0^{t}{\chi(s,t)  dN_s^{i,n}} \r|^p < M_{p,\chi} \text{  }\proba\text{-a.s.}$
		\im[{\bf (iii)}] $\sup_{(\theta_0,i,n,t) \in \E} \espc\l|\lambda^{i,n,c}(t) \r|^p <M_p \text{  }\proba\text{-a.s.}$
		\im[{\bf (iv)}] $\sup_{(\theta_0,i,n,t) \in \E} \espc\l| \int_0^{t}{\chi(s,t)  dN_s^{i,n,c}} \r|^p < M_{p,\chi} \text{  }\proba\text{-a.s.}$
		
	\ed
where $M_p$ and $M_{p,\chi}$ are finite constants depending respectively solely on $p$ and on $p$ and $\chi$.
\label{boundedMoments}
\end{lemma*}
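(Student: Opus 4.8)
The plan is to deduce all four bounds from the doubly stochastic Hawkes moment estimates of Lemma~\ref{lemmaMomentsGeneral}, read in the time-changed coordinates, the only genuinely new feature being the residual pre-excitation $R_{i,n}$. First I would use Lemma~\ref{lemmaTimeChange} to identify, for each block index $i$ and each $n$, the self-excitation kernel of both $N^{i,n}$ and the auxiliary constant-parameter process $N^{i,n,c}$ as $\chi^{i,n}(s,t)=a^{i,n,*}_se^{-b^{i,n,*}_s(t-s)}$ (respectively with the frozen value $\theta_0^{i,n,*}$), where $\theta^{i,n,*}_s=\theta^*_{\tau^n_i(s)}$ stays in the compact $K$. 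Performing the substitution $u=\tau^n_i(s)$ and invoking Condition~[C]-{\bf (iii)} one obtains
\begin{equation*}
\int_0^t\chi^{i,n}(s,t)\,ds=\int_{(i-1)\Delta_n}^{\tau^n_i(t)}na^*_ue^{-nb^*_u(\tau^n_i(t)-u)}\,du\le c<1\qquad\proba\text{-a.s.},
\end{equation*}
uniformly over $(\theta_0,i,n,t)\in\E$; in particular the kernel has total mass bounded by the fixed constant $c$, which is what powers the whole argument.

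For {\bf (iii)} and {\bf (iv)} there is no residual term, so the argument is essentially immediate: conditionally on $\calf_0^{i,n}$ and on $\{\theta_0^{i,n,*}=\theta_0\}$, the process $N^{i,n,c}$ is an honest parametric Hawkes process with constant parameter $\theta_0\in K$ driven by the unit-intensity Poisson measure $\overline{N}^{i,n}$, whose excitation kernel has mass $\le c<1$ by the above. Applying Lemma~\ref{lemmaMomentsGeneral} (whose underlying $\F^\theta$ is trivial here) then yields the bounds, with constants depending only on $p$, on $\chi$, and on the structural bounds $\overline{\nu},\overline{a},\underline{b},c$, hence not on $(\theta_0,i,n,t)$ --- which is precisely the required uniformity over $\E$.

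For {\bf (i)} and {\bf (ii)} I would fold the pre-excitation into the baseline, writing $\lambda_*^{i,n}(t)=\widetilde{\nu}^{i,n}_t+\int_0^{t-}\chi^{i,n}(s,t)\,dN^{i,n}_s$ with $\widetilde{\nu}^{i,n}_t:=\nu^{i,n,*}_t+R_{i,n}(t)$, and use that on the event $\{R_{i,n}(0)\le M_n\}$ carried by $\espc$ the decay estimate~(\ref{preExcitationExpo}) bounds $R_{i,n}(t)$ by $e^{-\underline{b}t}M_n$, so that $\int_0^{h_nT}\widetilde{\nu}^{i,n}_t\,dt\le\overline{\nu}h_nT+\underline{b}^{-1}M_n<\infty$ a.s.\ and the triple $(\widetilde{\nu}^{i,n},a^{i,n,*},b^{i,n,*})$ meets the hypotheses of the moment machinery; moreover $\overline{N}^{i,n}$ remains a unit-intensity Poisson measure conditionally on $\calf_0^{i,n}$, being a deterministic time--space change of $\overline{N}$, which is independent of $\theta^*$. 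I would then re-run verbatim the three-step induction in the proof of Lemma~\ref{lemmaMomentsGeneral}, with $\esp[\,\cdot\,|\,\calf_T^\theta]$ replaced by $\espc$, the uniform baseline bound replaced by the one just described, and Lemma~\ref{LemmaBDG} (applied to $\overline{N}^{i,n}$) supplying the Burkholder--Davis--Gundy estimate for $W\ast(\overline{N}^{i,n}-\overline{\Lambda})_t$ with $W(s,z)=\chi^{i,n}(s,t)\,\mathbb{1}_{\{0\le z\le\lambda_*^{i,n}(s)\}}$; the inductive step $p=2^q\mapsto 2^{q+1}$ closes because, $c$ being a fixed constant, one can pick $\epsilon>0$ with $(1+\epsilon)^{2^q}c^p<1$ uniformly in $(i,n,t)$. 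Statement {\bf (ii)} then follows from {\bf (i)} together with Lemma~\ref{LemmaBDG} and the uniform bound on $\int_0^t\chi(s,t)\,ds$, exactly as in Step~3 of that proof.

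The part I expect to be the main obstacle --- indeed the only point where one genuinely departs from the proof of Lemma~\ref{lemmaMomentsGeneral} --- is checking that folding $R_{i,n}$ into the baseline does not spoil the uniformity over $\E$: this is exactly what the localization $R_{i,n}(0)\le M_n$ together with the exponential decay~(\ref{preExcitationExpo}) are there for, since jointly they keep $\widetilde{\nu}^{i,n}$ and $\int_0^{h_nT}\widetilde{\nu}^{i,n}$ under control block by block and let the constant-parameter estimates be applied with constants that do not depend on $(i,n,t)$.
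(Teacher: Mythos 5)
Your overall route is the paper's: both reduce the statement to the moment machinery of Lemma \ref{lemmaMomentsGeneral} in the time-changed coordinates, localize the pre-excitation through the indicator $\mathbf{1}_{\{R_{i,n}(0)\le M_n\}}$ built into $\espc$ together with the decay (\ref{preExcitationExpo}), and observe that for {\bf (iii)}--{\bf (iv)} the constant-parameter process is, conditionally, an honest parametric Hawkes process with parameter in $K$, so that the bounds are immediate and uniform. That part of your argument is fine.

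The genuine gap is in your treatment of {\bf (i)}--{\bf (ii)}: you cannot re-run the induction of Lemma \ref{lemmaMomentsGeneral} ``verbatim with $\esp[\,\cdot\,|\calf_T^{\theta}]$ replaced by $\espc$''. The induction rests on the fact that the baseline and the kernel are measurable with respect to the conditioning $\sigma$-field, so that one may write, e.g.,
\begin{equation*}
\esp\l[\l.\int_0^{t-}a_se^{-b_s(t-s)}\lambda(s)\,ds\,\r|\,\calf_T^{\theta}\r]=\int_0^{t-}a_se^{-b_s(t-s)}\,\esp\l[\lambda(s)\,|\,\calf_T^{\theta}\r]ds\le c\,\sup_{s\le t}\esp\l[\lambda(s)\,|\,\calf_T^{\theta}\r],
\end{equation*}
and similarly extract the contraction factor $c^{p-1}$ in the H\"older step. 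Under $\espc$ the parameter path $(\theta^{i,n,*}_s)_{s>0}$ over the block is \emph{not} $\calf_0^{i,n}$-measurable, so these interchanges are invalid: $\espc[\int \chi^{i,n}(s,t)\lambda_*^{i,n}(s)^p ds]$ is not dominated by $c\,\sup_s\espc[\lambda_*^{i,n}(s)^p]$ when the kernel and the intensity are correlated, and the crude bound $a^{i,n,*}_se^{-b^{i,n,*}_s(t-s)}\le\overline{a}e^{-\underline{b}(t-s)}$ only yields the mass $\overline{a}/\underline{b}$, which need not be smaller than $1$ under [C]-{\bf (iii)}. The paper's proof avoids this precisely by conditioning first on the larger $\sigma$-field $\calf_0^{i,n}\vee\calf_{h_nT}^{\theta^{i,n,*}}$, under which the whole parameter path on the block is frozen and the argument of Lemma \ref{lemmaMomentsGeneral} (with the localized residual) goes through, and only then passing to the smaller conditioning $\espc$ by Jensen/tower, which is harmless since the intermediate bound is uniform. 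A secondary caveat: folding $R_{i,n}$ into the baseline gives a ``baseline'' of size up to $\overline{\nu}+M_n$ near $t=0$, so you must keep the exponentially decaying term $M_ne^{-\underline{b}t}$ separate (as is done in the proof of Lemma \ref{boundedDeviation}) rather than absorb $M_n$ into a constant, if the final constants are to be free of $n$.
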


\begin{proof}
This is a straightforward adaptation of the proof of Lemma \ref{lemmaMomentsGeneral}, with the conditional expectation $\esp[ . \mathbf{1}_{\{R_{i,n}(0) \leq M_n\}} | \calf_0^{i,n} \vee \calf_{h_n T}^{\theta^{i,n,*}}, \theta_0^{i,n,*} = \theta_0 ]$. The presence of $\mathbf{1}_{\{R_{i,n}(0) \leq M_n\}}$ along with the exponential decay in (\ref{preExcitationExpo}) show clearly that the result still holds, uniformly in the quadruplet $(\theta_0,i,n,t)$. By an immediate application of Jensen's inequality, this is still true replacing $\calf_0^{i,n} \vee \calf_{h_n T}^{\theta^{i,n,*}}$ by the smaller filtration $\calf_0^{i,n}$, that is, for the operator $\espc$.
\end{proof}
Before we turn to estimating the distance between the two models, we state a technical lemma. 
\begin{lemma*}
Let $h : s \mapsto ae^{-bs}$, and let $f$,$g$ be two non-negative functions satisfying the inequality $f \leq g + f \ast h$ where $(f \ast h)(t) = \int_0^t{f(t-s)h(s)ds}$ is the usual convolution. Then we have the majoration for any $t \geq 0$
$$ f(t)  \leq g(t) + a\l(g \ast e^{(a-b).} \r)(t)$$
\label{lemmaConvolution}
\end{lemma*}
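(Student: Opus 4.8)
The plan is to solve the renewal inequality explicitly by introducing the resolvent kernel associated with $h$. First I would set $\psi(t) := a e^{(a-b)t}$ and verify by a direct computation the resolvent identity $\psi \ast h = \psi - h$: indeed $(\psi \ast h)(t) = \int_0^t a e^{(a-b)(t-s)}\, a e^{-bs}\,ds = a^2 e^{(a-b)t}\int_0^t e^{-as}\,ds = a e^{(a-b)t}(1-e^{-at}) = \psi(t) - h(t)$. This $\psi$ is nothing but the sum of the iterated convolutions $\sum_{k\ge 1} h^{\ast k}$, since $h^{\ast k}(t) = a^k e^{-bt} t^{k-1}/(k-1)!$ and $\sum_{k\ge 1} a^k e^{-bt} t^{k-1}/(k-1)! = a e^{(a-b)t}$; this Neumann-series picture is the motivation for the formula but is not needed in the argument below. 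I would also record that $\psi - h \ge 0$ on $\reels_+$, because $e^{(a-b)t} \ge e^{-bt}$ when $a,t \ge 0$.

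Next I would convolve the hypothesis $f \le g + f \ast h$ with the non-negative kernel $\psi$. Since all the functions in sight are non-negative, convolution against $\psi$ preserves the inequality; and since $f$ and $g$ are locally integrable while $\psi$ and $h$ are continuous, every convolution written below is finite on compact intervals. Using commutativity and associativity of convolution together with the resolvent identity, $f \ast h \ast \psi = f \ast (\psi - h) = f \ast \psi - f \ast h$, so that $f \ast \psi \le g \ast \psi + f \ast \psi - f \ast h$.

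Cancelling the finite term $f \ast \psi$ from both sides of the last display yields $f \ast h \le g \ast \psi$. Plugging this back into the hypothesis gives $f \le g + f \ast h \le g + g \ast \psi = g + a\,(g \ast e^{(a-b)\cdot})$, which is exactly the asserted bound.

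The one step I would be most careful about — the main obstacle — is the cancellation of $f\ast\psi$ from both sides of the inequality, which is legitimate only once one knows that $(f \ast \psi)(t)$ is finite for every $t$. This is where the (implicit, and in the applications automatic) local boundedness of $f$ and $g$ enters; concretely $(f\ast\psi)(t) \le \bigl(\sup_{[0,t]}\psi\bigr)\int_0^t |f| < \infty$. Everything else is routine manipulation of convolutions of non-negative functions, so no further difficulty is expected.
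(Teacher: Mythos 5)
Your proof is correct, and it takes a genuinely different route from the paper's. The paper proves the lemma by iterating the inequality $n$ times, which produces $f \leq g + g \ast \sum_{k=1}^{n} h^{\ast(k)} + f \ast h^{\ast(n+1)}$; it then computes $h^{\ast(k)}(t) = \frac{t^{k-1}}{(k-1)!}a^k e^{-bt}$ explicitly, bounds the partial sums of the resolvent series by $ae^{(a-b)t}$, shows the remainder $f \ast h^{\ast(n+1)}(t) \leq \frac{t^n}{n!}a^{n+1}\int_0^t f(s)\,ds \to 0$, and passes to the limit. You instead guess the closed-form resolvent $\psi = a e^{(a-b)\cdot}$, verify the resolvent identity $\psi \ast h = \psi - h$ by a single elementary integral, convolve the hypothesis once with $\psi$, and conclude by cancellation — no iteration and no limiting argument. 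What your route buys is brevity and a transparent algebraic structure (the Neumann series appears only as motivation); what the paper's route buys is that it never has to justify subtracting a possibly infinite quantity, since the cancellation step is replaced by the vanishing of an explicit remainder. Note, however, that both arguments rest on the same tacit hypothesis not stated in the lemma: local integrability of $f$ (the paper needs $\int_0^t f(s)\,ds < \infty$ for the remainder to vanish, you need it for the finiteness of $f \ast \psi$ that legitimizes the cancellation), and in both cases this is harmless because in the applications $f(t) = \esp_{\theta_0,i,n}\l|\lambda^{i,n,c}(t) - \lambda_{*}^{i,n}(t)\r|^p$ is bounded. You correctly identified this as the one delicate step, so there is no gap.
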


\begin{proof}
Iterating the inequality we get for any $n \in \naturels^*$ 
\bea 
f \leq g + g \ast \sum_{k = 1}^n{h^{\ast(k) }} + f \ast h^{\ast(n+1)}.
\label{eqConv}
\eea
We fix $t \geq 0$, and note that by a straightforward computation, for any integer $k \geq 1$ we have $h^{\ast(k)}(t) = \frac{t^{k-1}}{(k-1)!}a^{k}e^{-bt}$. We deduce that 
\beas 
f \ast h^{\ast(n+1)}(t) &=& \int_0^t{f(t-s) \frac{s^n}{n!}a^{n+1}e^{-bs}ds }\\  
&\leq& \frac{t^{n}}{n!}a^{n+1} \int_0^t{f(s)ds} \to 0
\eeas 
as $n \to +\infty$. We also have for any integer $n \geq 1$
\beas 
\sum_{k = 1}^n{h^{\ast(k) }}(t) &=& \sum_{k = 1}^n \frac{t^{k-1}}{(k-1)!}a^{k}e^{-bt} \\
&\leq&a e^{(a-b)t}
\eeas
and thus we get the result by taking the limit $n \to +\infty$ in (\ref{eqConv}) evaluated at any point $t \geq 0$.
\end{proof}

In what follows, we quantify the local error between the doubly stochastic model and its constant parameter approximation. We recall the value of the key exponent $\kappa = \gamma(\delta -1)$ that has been introduced in (\ref{defKappa}), and which plays an important role in the next results as it proves to be the rate of convergence of one model to the other in power of $h_n^{-1}$, where $h_n$ is proportional to the typical size of one block after our time change.  Recall that $\gamma$ represents the regularity exponent in time of $\theta$ while $\delta$ controls the size of small blocks compared to $n$ by the relation $h_n = n^{1/\delta}$. Note that by (\ref{defKappa}) we have $\kappa >1$. The next lemma shows that the models $(N^{i,n,c}, \lambda^{i,n,c})$ and $(N^{i,n},\lambda_{*}^{i,n})$ are asymptotically close in the $\mathbb{L}^p$ sense. The proof follows the same path as the proof of Lemma \ref{lemmaMomentsGeneral}.

\begin{lemma*}
Let $\alpha \in (0,1)$ be a truncation exponent, and $\epsilon \in (0,1)$. We have, for any $p \geq 1$, any deterministic kernel $\chi$ such that $\int_0^t{\chi(s,t)ds} $ is bounded uniformly in $t \in \reels_+$, and  any predictable process $\l(\psi_s\r)_{s \in \reels_+}$ whose moments are bounded :
\bd
		\im[{\bf (i)}] $ \sup_{(\theta_0,i,t) \in \E_n^\alpha} \espc\l|\lambda^{i,n,c}(t) - \lambda_{*}^{i,n}(t) \r|^p = O_{\proba} \l(h_n^{-\kappa}\r)$
		\im[{\bf (ii)}] $\sup_{\theta_0 \in K, 1 \leq i \leq B_n} \espc\l| \int_{h_n^\alpha T}^{h_n T}{\psi_s \{dN_s^{i,n,c} - dN_s^{i,n}\}} \r|^p = O_{\proba} \l( h_n^{p-\epsilon\kappa}\r)$
		\im[{\bf (iii)}] $ \sup_{\theta_0 \in K, 1 \leq i \leq B_n} \espc\l| \int_{h_n^\alpha T}^{h_n T}{\chi(s,h_n T) \{dN_s^{i,n,c} - dN_s^{i,n}\}} \r|^p = O_{\proba} \l( h_n^{-\kappa}\r)$
	\ed
\label{boundedDeviation}
\end{lemma*}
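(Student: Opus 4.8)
The plan is to compare the two intensities $\lambda^{i,n,c}$ and $\lambda_*^{i,n}$ directly from their integral representations and to propagate the resulting deviation through a Gronwall-type (renewal) argument, exactly as in the proof of Lemma \ref{lemmaMomentsGeneral}, tracking powers of $h_n$. Writing $D_t^{i,n} := \lambda_*^{i,n}(t) - \lambda^{i,n,c}(t)$, subtracting the two recursive equations gives
\[
D_t^{i,n} = \big(\nu_t^{i,n,*} - \nu_0^{i,n,*}\big) + R_{i,n}(t) + \int_0^{t-} \big(a_s^{i,n,*}e^{-b_s^{i,n,*}(t-s)} - a_0^{i,n,*}e^{-b_0^{i,n,*}(t-s)}\big)\,dN_s^{i,n} + \int_0^{t-} a_0^{i,n,*}e^{-b_0^{i,n,*}(t-s)}\,\big(dN_s^{i,n} - dN_s^{i,n,c}\big).
\]
The three source terms on the right each have a controlled size: on the truncated region $t \geq h_n^\alpha T$, the pre-excitation $R_{i,n}(t) \leq e^{-\underline b t}R_{i,n}(0) \leq e^{-\underline b h_n^\alpha T} M_n$, which is $O_\proba(h_n^{-\kappa})$ (indeed super-polynomially small, using $M_n = O(n^q)$ and $h_n = n^{1/\delta}$); the parameter deviation $|\nu_t^{i,n,*}-\nu_0^{i,n,*}|$ and $|a_s^{i,n,*}-a_0^{i,n,*}|$, $|b_s^{i,n,*}-b_0^{i,n,*}|$ are governed after the time change by the regularity modulus $w_p$ and condition [C]-{\bf(i)}: over a block of (time-changed) length $h_n T$ they are $O_\proba(h_n^\gamma n^{-\gamma}) = O_\proba(h_n^{\gamma(1-\delta)}) = O_\proba(h_n^{-\kappa})$ in the appropriate $\mathbb{L}^p$ sense, using Lemma \ref{boundedMoments}-{\bf(ii)} to bound the stochastic integral against $dN_s^{i,n}$. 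The last term is the self-exciting feedback and is handled by iterating: applying Lemma \ref{LemmaBDG} (Burkholder–Davis–Gundy) to split it into a martingale part and a compensator part, using condition [C]-{\bf(iii)} ($c<1$) to ensure the geometric series converges, and invoking Lemma \ref{lemmaConvolution} to close the renewal inequality for $\espc|D_t^{i,n}|^p$. This yields {\bf(i)}.

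For {\bf(ii)} and {\bf(iii)}, I would write $N^{i,n} = \iint \mathbf{1}_{\{z \leq \lambda_*^{i,n}(s)\}}\overline N^{i,n}(ds,dz)$ and $N^{i,n,c} = \iint \mathbf{1}_{\{z \leq \lambda^{i,n,c}(s)\}}\overline N^{i,n}(ds,dz)$ over the \emph{same} Poisson measure $\overline N^{i,n}$, so that
\[
\int_{h_n^\alpha T}^{h_n T} \psi_s\,\big\{dN_s^{i,n,c} - dN_s^{i,n}\big\} = \iint_{[h_n^\alpha T, h_n T]\times\reels_+} \psi_s\,\big(\mathbf{1}_{\{z \leq \lambda^{i,n,c}(s)\}} - \mathbf{1}_{\{z \leq \lambda_*^{i,n}(s)\}}\big)\,\overline N^{i,n}(ds,dz).
\]
The integrand is supported on the thin strip between the two intensity curves, whose Lebesgue measure at time $s$ is exactly $|D_s^{i,n}|$. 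Splitting into the compensated part $\overline N^{i,n} - \overline\Lambda$ and the compensator $\overline\Lambda$, Lemma \ref{LemmaBDG} bounds the $\mathbb{L}^p$ norm of the compensated part by $\esp[\iint |\psi_s|^p |D_s^{i,n}|\,ds + (\iint \psi_s^2 |D_s^{i,n}|\,ds)^{p/2}]^{1/p}$; using Hölder (to pull out $|D_s^{i,n}|$ from the $p/2$-power term, as in the proof of Lemma \ref{lemmaMomentsGeneral}), the boundedness of moments of $\psi$, and part {\bf(i)} together with the length $h_n T$ of the integration window, one gets the factor $h_n \cdot (h_n^{-\kappa})^{1/p'}$-type contributions that combine to $O_\proba(h_n^{p-\epsilon\kappa})$ after allowing the cosmetic loss $\epsilon<1$ coming from converting an $\mathbb{L}^1$ control of $D^{i,n}$ into an $\mathbb{L}^p$ control; the compensator part $\iint \psi_s |D_s^{i,n}|\,ds$ is directly $O_\proba(h_n \cdot h_n^{-\kappa})$, which is smaller. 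For {\bf(iii)}, the deterministic kernel $\chi(s,h_n T)$ satisfies $\int_0^t \chi(s,t)\,ds$ bounded uniformly, so the same computation with $\psi_s$ replaced by $\chi(s,h_n T)$ produces an extra summable factor instead of a factor $h_n$, giving the sharper rate $O_\proba(h_n^{-\kappa})$ without the $\epsilon$-loss.

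The main obstacle I anticipate is the bookkeeping in part {\bf(i)}: one must carefully track that the renewal/Gronwall iteration does not lose the rate $h_n^{-\kappa}$ when passing through the BDG inequality and Hölder's inequality at each step (the $p/2$-power terms are the delicate ones), and one must check that all the estimates are genuinely \emph{uniform} over the quadruplet $(\theta_0, i, n, t) \in \E_n^\alpha$ — this is where Lemma \ref{boundedMoments} and the uniform form of condition [C] are essential, and where the truncation $\mathbf{1}_{\{R_{i,n}(0)\leq M_n\}}$ built into $\espc$ is exactly what makes the pre-excitation term negligible uniformly. A secondary technical point is justifying the exchange of $\mathbb{L}^p$ norms and the convergence of the geometric series of convolution iterates, which is handled cleanly by Lemma \ref{lemmaConvolution} once condition [C]-{\bf(iii)} is in force.
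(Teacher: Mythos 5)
Your plan is essentially the paper's proof: the same four-term decomposition of the intensity gap (baseline deviation, kernel-parameter deviation against $dN^{i,n}$, constant kernel against the jump-difference measure, pre-excitation killed on $[h_n^\alpha T, h_nT]$ by the truncation $R_{i,n}(0)\leq M_n$), the same application of Lemma \ref{LemmaBDG} to the indicator-difference integrand whose $z$-integral yields the \emph{first} power of the gap (exactly why the rate stays $h_n^{-\kappa}$ for every $p$), the dyadic-power induction closed via the convolution Lemma \ref{lemmaConvolution}, and a short BDG-plus-H\"{o}lder argument for {\bf (ii)}--{\bf (iii)}. Two minor slips to fix in the write-up: the convolution series is closed using $b_0-a_0>\underline{b}(1-r)$ (a consequence of the structure of $K$), not [C]-{\bf (iii)} (which serves the moment bounds), and in {\bf (ii)} the dominant $O_\proba(h_n^{p-\epsilon\kappa})$ contribution comes from the $p$-th power of the compensator term, not from the martingale part as you suggest.
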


\begin{remark*}
For $p=1$, if we recall that $\Delta_n = h_n n^{-1}T$ and $\kappa = \gamma(\delta-1)$, we get a typical  deviation in $h_n^{-\kappa} = T^{-\gamma}\Delta_n^\gamma$ between the real model and its constant parameter approximation. This is not very surprising since on one block the parameter process $\theta^{*}$ has exactly a deviation of that order. For $p >1$, the situation is fairly different. One would expect a deviation of the same order of that of the parameter process, that is of order $h_n^{-\kappa p} = T^{-\gamma p}\Delta_n^{\gamma p}$. But as it is shown in the previous lemma, deviations between the two models are quite weaker since the deviation remains of order $h_n^{-\kappa} = T^{-\gamma}\Delta_n^\gamma$ for any $p$. This loss is due to the point process structure and the shape of its related Burkholder-Davis-Gundy type inequality (see Lemma \ref{LemmaBDG}). This is the same phenomenon as in the following fact. For a Poisson process $N$ of intensity $\lambda$, we have $\esp[|N_t -\lambda t|^p] \sim \alpha_p t $ when $t \to 0$, i.e. a rate of convergence which is linear regardless of the moment chosen.    
\end{remark*}

\begin{proof}
We will show by recurrence on $q \in \naturels$ that for every $p$ of the form $p = 2^q$, we have the majoration for $n \in \naturels$, $t \in [0, h_n T]$ and uniformly in $(\theta_0,i)$,
\bea 
 \espc \l|\lambda^{i,n,c}(t) - \lambda_{*}^{i,n}(t) \r|^{2^q} \leq L_{n,q} + M_{n,q} e^{-\underline{b}(1-r) t},
\label{majExpo}
\eea
where $L_{n,q}$ and $M_{n,q}$ depend on $n$ and $q$ only, $L_{n,q} = O_{\proba}(h_n^{-\kappa})$, and  $M_{n,q}$ is of polynomial growth in $n$. Note that then {\bf (i)} will be automatically proved since
by taking the supremum over the set $[h_n^\alpha T, h_nT]$ and using the estimate $M_{n,q} e^{-\underline{b}(1-r) h_n^\alpha T} = o_{\proba}(h_n^{-\kappa}) $ we get 

$$  \espc  |\lambda^{n,c}(t) - \lambda_{*}^n(t)|^p = O_{\proba}(h_n^{-\kappa}) $$
uniformly over the set $\E_n^\alpha$.

{\bf Step 1}. We show our claim in the case $q = 0$, that is $p=1$. Write 
\beas
|\lambda_{*}^{i,n}(t)-\lambda^{i,n,c}(t)  | &\leq& |\nu_{t}^{i,n,*} - \nu_0^{i,n,*}| + \l| \int_0^{t-}{\l(a_{s}^{i,n,*} e^{-b_{s}^{i,n,*}(t-s)} - a_{0}^{i,n,*} e^{-b_{0}^{i,n,*}(t-s)}\r)}dN_s^{i,n}\r| \\
&+& \l|\int_0^{t-}{a_{0}^{i,n,*} e^{-b_{0}^{i,n,*}(t-s)}\l(dN_s^{i,n,c} - dN_s^{i,n}\r)}\r| +  R_{i,n}(t) \\
&\leq& A_{i,n}(t) + B_{i,n}(t) +C_{i,n}(t) +  R_{i,n}(t)
\eeas

The (uniform) majoration $ \espc A_{i,n}(t) =O_{\proba}(h_n^{-\kappa})$ is an immediate consequence of [C]-{\bf (i)}. By the inequality 
\bea 
|ae^{-bt} - a^{'}e^{-b^{'}t}|  \leq \l(|a-a^{'}| + |b-b^{'}| \r)e^{-\underline{b}t}
\label{majorationg}
\eea
for any $(\nu,a,b), (\nu^{'},a^{'},b^{'}) \in K$, we can write
\beas 
\espc B_{i,n}(t) &\leq& \espc \int_0^{t-} {\l(|a_{s}^{i,n,*}-a_0| + |b_{s}^{i,n,*}-b_0| \r)e^{-\underline{b}(t-s)}dN_s^{i,n}} \\
&\leq& \sqrt{\espc\l|\sup_{s \in [0,t]} {\l(|a_{s}^{i,n,*}-a_0| + |b_{s}^{i,n,*}-b_0| \r)} \r|^2 \espc \l|\int_0^{t-}{e^{-\underline{b}(t-s)}dN_s^{i,n}}\r|^2},
\eeas
where Cauchy-Schwartz inequality was applied in the last inequality. Note that the right term is almost surely dominated by a constant by Lemma \ref{boundedMoments} and thus the uniform majoration $ \espc B_{i,n}(t) =O_{\proba}(h_n^{-\kappa})$ follows from [C]-{\bf (i)}. Finally, for $C_{i,n}(t)$, write 
\bea 
\espc C_{i,n}(t) &\leq& \espc \int_0^{t -}{a_{0} e^{-b_{0}(t-s)}d\l|N^{i,n,c} - N^{i,n}\r|_s}
\eea 
where $d\l|N^{i,n,c} - N^{i,n}\r|_s$ is the integer measure which counts the jumps that don't belong to both $dN^{i,n,c}$ and $dN^{i,n}$, i.e. the points of $\overline{N}^{i,n}$ that lay between the curves $t \to \lambda_*^{i,n}(t)$ and $t \to \lambda^{i,n,c}(t)$. A short calculation shows that this counting process admits $|\lambda^{i,n,c}(s) - \lambda_{*}^{i,n}(s)|$ as stochastic intensity. We compute now:
\beas 
\espc C_{i,n}(t) &\leq& \espc \int_0^{t-}{a_{0} e^{-b_{0}(t-s)}|\lambda^{i,n,c}(s) - \lambda_{*}^{i,n}(s)|ds} \\
&=& \int_0^{t-}{a_{0} e^{-b_{0}(t-s)} \espc|\lambda^{i,n,c}(s) - \lambda_{*}^{i,n}(s)|ds}. \\
\eeas
So far we have shown that there exists a sequence $L_n$ such that $L_n = O(h_n^{-\kappa})$ and such that the function $f(t) = \espc |\lambda^{i,n,c}(t) - \lambda_{*}^{i,n}(t)|$ satisfies the inequality \bea 
f(t) \leq L_n + R_{i,n}(t) + f \ast h (t),
\eea
where $h$ is the kernel defined as $h: t \mapsto a_0e^{-b_0t}$. By Lemma \ref{lemmaConvolution}, this yields
\bea 
f(t) \leq L_n + R_{i,n}(t) + \int_0^t{\{L_n + R_{i,n}(s)\} a_0e^{(a_0 - b_0)(t-s)}ds   }.
\eea 
Now recall that $b_0 - a_0 > \underline{b}(1-r)$ and that on the set $\{R_{i,n}(0) \leq M_n\}$, we have $R_{i,n}(s) \leq M_n e^{-\underline{b}s} < M_n e^{-\underline{b}(1-r)s}$ to get
\beas 
f(t) &\leq& (1+(1-r)^{-1})L_n + R_{i,n}(t) + \int_0^t{\{M_n e^{-\underline{b}(1-r)s}\} a_0e^{\underline{b}(1-r)(t-s)}ds   }\\
&\leq&(1+(1-r)^{-1})L_n + (1+ \overline{a}t) M_n e^{-\underline{b}(1-r)t}. 
\eeas
If we recall that in the above expression $f(t)$ stands for $\espc |\lambda^{i,n,c}(t) - \lambda_{*}^{i,n}(t)|$, such uniform estimate clearly proves (\ref{majExpo}) in the case $q=1$.

{\bf Step 2}. We prove the result for any $q\in \naturels^{*}$.  Let the expression $f(t)$ stands for $\espc |\lambda^{i,n,c}(t) - \lambda_{*}^{i,n}(t)|^p$. With similar notations as for the previous step, we have for any $\eta >0$ 
\beas
f(t) = \espc|\lambda^{i,n,c}(t) - \lambda_{*}^{i,n}(t)|^p &\leq& \espc \l|A_{i,n}(t) + B_{i,n}(t) +C_{i,n}(t) +  R_{i,n}(t) \r|^p\\
&\leq& (1+\eta^{-1})^{2^q-1}\espc \l| A_{i,n}(t) + B_{i,n}(t) + R_{i,n}(t) \r|^p\\
& & + (1+\eta)^{2^q-1} \espc C_{i,n}(t)^p
\eeas
It is straightforward to see that similar arguments to the previous case lead to the uniform estimate
\beas
\espc  A_{i,n}(t)^p + \espc B_{i,n}(t)^p = O_{\proba}\l(h_n^{-\kappa p}\r).
\eeas
 Now, define $W(s,z)= a_0 e^{-b_0(t-s)} |\mathbb{1}_{\{0 \leq z \leq \lambda^{i,n,c}(s) \}} - \mathbb{1}_{\{0 \leq z \leq \lambda_{*}^{i,n}(s) \} }|$ to get
\beas 
\espc\l[C_{i,n}(t)^p\r] &=& \espc \l[  \l(W \ast \overline{N}_t \r)^p\r]\\
&\leq& (1+\eta^{-1})^{2^q-1}\espc\l[  \l(W \ast (\overline{N}-\overline{\Lambda})_t \r)^p \r]+ (1+\eta)^{2^q-1}\espc \l[ \l(W \ast \overline{\Lambda}_t \r)^p \r],\\
\eeas
and apply Lemma \ref{LemmaBDG} to get
\beas 
\espc \l[  \l(W \ast (\overline{N}-\overline{\Lambda})_t \r)^p \r] &\leq& K_p\l( \espc\l[\iint_{[0,T] \times \reels}{|W(s,z)|^p}dsdz \r] \r.\\& &+ \l. \espc\l[\l(\iint_{[0,T] \times \reels}{W(s,z)^2}dsdz \r)^{\frac{p}{2}} \r]\r)\\
&=&  K_p \l( \espc \l[\int_0^{t-}{a_{0}^p e^{-pb_{0}(t-s)} |\lambda^{i,n,c}(s) - \lambda_{*}^{i,n}(s)| ds} \r]\r. \\
& & +  \l.\espc \l[\l(\int_0^{t-}{a_{0}^2 e^{-2b_{0}(t-s)} |\lambda^{i,n,c}(s) - \lambda_{*}^{i,n}(s)| ds}\r)^{\frac{p}{2}} \r]\r),
\eeas
which is easily bounded as in (\ref{majExpo}) using the induction hypothesis. Note that here the presence of the integral term in $|\lambda^{i,n,c}(s) - \lambda_{*}^{i,n}(s)|$ is the major obstacle to getting the stronger estimate $O_{\proba}\l(h_n^{-\kappa p}\r)$ that one would expect. Finally the term

$$\espc \l[ \l(W \ast \overline{\Lambda}_t \r)^p \r] = \espc \l[ \l(\int_0^{t-}{a_{0} e^{-b_{0}(t-s)} |\lambda^{i,n,c}(s) - \lambda_{*}^{i,n}(s)| ds}\r)^p \r]$$
is treated exactly in the same way as for the proof of Lemma \ref{lemmaMomentsGeneral}, to get the bound 
\bea 
\espc \l[ \l(W \ast \overline{\Lambda}_t \r)^p \r] &\leq& c_q f \ast h(t),
\eea
where again $h : s \mapsto a_0e^{-b_0s}$, and $c_q <1$ if $\eta$ is taken small enough. We have thus shown that $f$ satisfies a similar convolution inequality as for the case $q=1$ and we can apply Lemma \ref{lemmaConvolution} to conclude.

{\bf Step 3}. It remains to show {\bf (ii)} and  {\bf (iii)}. They are just  consequences of the application of Lemma \ref{LemmaBDG} to the case $W_\psi(s,z) = \psi_s |\mathbb{1}_{\{0 \leq z \leq \lambda^{n,c}(s) \}} - \mathbb{1}_{\{0 \leq z \leq \lambda_{*}^n(s) \} }|$  and $W_\chi(s,z) = \chi(s,t) |\mathbb{1}_{\{0 \leq z \leq \lambda^{n,c}(s) \}} - \mathbb{1}_{\{0 \leq z \leq \lambda_{*}^n(s) \} }|$ along with H\"{o}lder's inequality. 
\end{proof}

We are now ready to show the uniform asymptotic normality of the MLE by proving that any quantity related to the estimation is asymptotically very close to its counterpart for the constant parameter model $(N^{i,n,c},\lambda^{i,n,c})$. To this end we introduce the fake candidate intensity family and the fake log-likelihood process, as 
\bea 
\lambda^{i,n,c}(t,\theta) = \nu + \int_0^{t-}{ae^{-b(t-s)}dN_s^{i,n,c}}
\eea 
and
\bea 
l_{i,n}^{c}(\theta) = \int_0^{h_n T}{\text{log}(\lambda^{i,n,c}(t,\theta))dN_t^{i,n,c}} - \int_0^{h_n T}{\lambda^{i,n,c}(t,\theta)dt},
\eea
for any $\theta \in K$. Note that $\lambda^{i,n,c}(t, \theta_0^{i,n,*}) = \lambda^{i,n,c}(t)$ by definition. Those quantities, which are all related to $(N^{i,n,c},\lambda^{i,n,c})$, are unobserved.\\

As a consequence of the previous lemma we state the uniform $\mathbb{L}^p$ boundedness of the candidate intensity families, along with estimates of their relative deviations. 
\begin{lemma*}
Let $\alpha \in (0,1)$. We have for any integer $p \geq 1$ and any $j \in \naturels$ that

\bd 

\im[{\bf (i)}]  $ \sup_{(\theta_0,i,n,t) \in \E} \espc \sup_{\theta \in K}\l|  \partial_\theta^j  \tilde{\lambda}^{i,n}(t,\theta) \r|^p \leq K_j \text{   }\proba\text{-a.s.}$
\im[{\bf (ii)}] $ \sup_{(\theta_0,i,n,t) \in \E} \espc \sup_{\theta \in K} \l|  \partial_\theta^j  \lambda^{i,n,c}(t,\theta) \r|^p \leq K_j \text{   }\proba\text{-a.s.}$
\im[{\bf (iii)}] $\sup_{(\theta_0,i,t) \in \E_n^\alpha} \espc \sup_{\theta \in K} \l|   \partial_\theta^j \tilde{\lambda}^{i,n}(t,\theta) - \partial_\theta^j  \lambda^{i,n,c}(t,\theta)  \r|^p  = O_{\proba}(h_n^{-\kappa})$

\ed
where the constants $K_j$ depend solely on $j$.
\label{lemmaBoundedDerivatives}
\end{lemma*}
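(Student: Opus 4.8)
The plan is to deduce {\bf (i)} and {\bf (ii)} by straightforward differentiation of the explicit integral representations together with Lemma \ref{boundedMoments}, and to obtain {\bf (iii)} by rerunning the Burkholder--Davis--Gundy computation already used for Lemma \ref{boundedDeviation}.

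First I would treat {\bf (i)} and {\bf (ii)}. Differentiating $\tilde\lambda^{i,n}(t,\theta)=\nu+\int_0^{t-}ae^{-b(t-s)}dN_s^{i,n}$ in $\theta=(\nu,a,b)$, the additive $\nu$ contributes a bounded term (and nothing once $j\ge 2$), while the kernel $ae^{-b(t-s)}$ becomes a finite sum of terms $(\text{monomial in }a)\,(t-s)^m e^{-b(t-s)}$ with $m\le j$; since $\underline b\le b$ and $a\le\overline a$ on the compact $K$, this yields, uniformly in $\theta\in K$,
$$
\sup_{\theta\in K}\l|\partial_\theta^j\tilde\lambda^{i,n}(t,\theta)\r|\;\le\;C_j+\int_0^{t-}\bar Q_j(t-s)\,e^{-\underline b(t-s)}\,dN_s^{i,n},
$$
with $\bar Q_j$ a polynomial with deterministic coefficients. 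As $\chi(s,t):=\bar Q_j(t-s)e^{-\underline b(t-s)}$ is deterministic, hence $\F^{\theta^{i,n,*}}$-predictable, and $\int_0^t\chi(s,t)ds\le\int_0^{+\infty}\bar Q_j(u)e^{-\underline b u}du<\infty$ uniformly in $t$, Lemma \ref{boundedMoments} {\bf (ii)} bounds the $\espc$-$\mathbb{L}^p$ norm of the stochastic integral uniformly over $(\theta_0,i,n,t)\in\E$; Minkowski's inequality then gives {\bf (i)}, and {\bf (ii)} is identical with $N^{i,n,c}$ and Lemma \ref{boundedMoments} {\bf (iv)}.

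For {\bf (iii)}, subtracting the two representations and bounding the integrand over $K$ as above gives
$$
\sup_{\theta\in K}\l|\partial_\theta^j\tilde\lambda^{i,n}(t,\theta)-\partial_\theta^j\lambda^{i,n,c}(t,\theta)\r|\;\le\;\int_0^{t-}\bar Q_j(t-s)\,e^{-\underline b(t-s)}\,d\l|N^{i,n}-N^{i,n,c}\r|_s,
$$
where $d|N^{i,n}-N^{i,n,c}|$ counts the points of $\overline N^{i,n}$ lying between the curves $s\mapsto\lambda_*^{i,n}(s)$ and $s\mapsto\lambda^{i,n,c}(s)$ and, as in the proof of Lemma \ref{boundedDeviation}, has $|\lambda_*^{i,n}(s)-\lambda^{i,n,c}(s)|$ as $\calf_s^{i,n}$-intensity. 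I would then apply Lemma \ref{LemmaBDG} to $W(s,z)=\bar Q_j(t-s)e^{-\underline b(t-s)}|\mathbb{1}_{\{0\le z\le\lambda_*^{i,n}(s)\}}-\mathbb{1}_{\{0\le z\le\lambda^{i,n,c}(s)\}}|$, split into a compensated part and a drift part, and use the Hölder inequality $(\int fg)^k\le(\int f^kg)(\int g)^{k-1}$ exactly as in Steps 1 and 2 of the proof of Lemma \ref{boundedDeviation}; for $p=2^q$ this reduces $\espc\sup_{\theta\in K}|\partial_\theta^j\tilde\lambda^{i,n}(t,\theta)-\partial_\theta^j\lambda^{i,n,c}(t,\theta)|^p$ to a sum of integrals of the form $\int_0^{t-}\bar R(t-s)e^{-\beta\underline b(t-s)}\,\espc|\lambda_*^{i,n}(s)-\lambda^{i,n,c}(s)|^{2^{q'}}\,ds$ with $\beta\ge1$, $q'\le q$, and $\bar R$ a polynomial. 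To integrate these I would invoke not Lemma \ref{boundedDeviation} {\bf (i)} directly (which only controls $s\in[h_n^\alpha T,h_nT]$) but the sharper a.s.\ bound $\espc|\lambda_*^{i,n}(s)-\lambda^{i,n,c}(s)|^{2^{q'}}\le L_{n,q'}+M_{n,q'}e^{-\underline b(1-r)s}$ obtained inside that proof, see (\ref{majExpo}), which holds for all $s\in[0,h_nT]$, with $L_{n,q'}=O_\proba(h_n^{-\kappa})$ and $M_{n,q'}$ polynomial in $n$: the $L_{n,q'}$-part integrates against the $t$-uniformly integrable kernel to $O_\proba(h_n^{-\kappa})$, while the $M_{n,q'}$-part, after the change of variable $u=t-s$ and using $\beta\underline b-\underline b(1-r)=\underline b(\beta-1+r)>0$, is at most a constant times $M_{n,q'}e^{-\underline b(1-r)t}$, hence $o_\proba(h_n^{-\kappa})$ since $t\ge h_n^\alpha T$ on $\E_n^\alpha$ and $n=h_n^\delta$. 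Summing the contributions yields {\bf (iii)} for $p=2^q$, and the general case $p\ge1$ follows by Hölder interpolation between dyadic exponents (alternatively, by splitting the convolution at $h_n^\alpha T/2$ and applying Lemma \ref{boundedDeviation} {\bf (i)} with the rescaled exponent $\alpha/2$ on the far part, the near part being exponentially small by the kernel decay and Lemma \ref{boundedMoments}).

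The hard part will be precisely this small-time contribution to the convolution integral in {\bf (iii)}: there the two models $(N^{i,n},\lambda_*^{i,n})$ and $(N^{i,n,c},\lambda^{i,n,c})$ are genuinely only $O_\proba(1)$-apart — the pre-excitation $R_{i,n}(0)$ may be as large as $M_n=O(n^q)$ — rather than $O_\proba(h_n^{-\kappa})$-close, so Lemma \ref{boundedDeviation} {\bf (i)} cannot be used there. This is exactly what forces the use of the exponentially decaying refinement (\ref{majExpo}) in place of Lemma \ref{boundedDeviation} {\bf (i)}, and what makes the restriction $h_n^\alpha T\le t$ (i.e.\ to $\E_n^\alpha$) necessary, so that the factor $e^{-\underline b(1-r)t}$ dominates every polynomial in $h_n$; everything else is a routine rerun of the arguments already carried out for Lemmas \ref{lemmaMomentsGeneral} and \ref{boundedDeviation}.
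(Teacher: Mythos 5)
Your proposal is correct and takes essentially the paper's route: parts \textbf{(i)} and \textbf{(ii)} are obtained exactly as in the paper by dominating the $\theta$-derivatives uniformly on $K$ by terms of the form $C_j+\int_0^{t-}{\bar Q_j(t-s)e^{-\underline{b}(t-s)}dN_s}$ and applying the uniform conditional moment bounds (Lemma \ref{boundedMoments}), and \textbf{(iii)} by dominating the difference by the same kernel integrated against $d|N^{i,n}-N^{i,n,c}|_s$. The only divergence is cosmetic: where the paper concludes \textbf{(iii)} by citing a truncation argument together with Lemma \ref{boundedDeviation} \textbf{(iii)}, you re-run that lemma's proof directly via its internal exponential estimate (\ref{majExpo}) (and you mention the truncation route yourself), which is the same machinery made explicit, including a correct handling of the pre-excitation near $t=0$ and of non-dyadic exponents by interpolation.
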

\begin{proof}
Note that the derivatives of $\tilde{\lambda}^{i,n}(t,\theta)$ can be all bounded uniformly in $\theta$ by linear combinations of terms of the form $\overline{\nu}$ or $\int_0^{t-}{(t-s)^j e^{-\underline{b}(t-s)} dN_s^{i,n}}$, $j \in \naturels$. The boundedness of moments of those terms uniformly in $n \in \naturels$ and in the time interval $[0,h_nT]$ is thus the consequence of Lemma \ref{lemmaMomentsGeneral} {\bf (ii)} with $\chi(s,t) = (t-s)^j e^{-\underline{b} (t-s)}$, and consequently {\bf (i)} follows. {\bf (ii)} is proved in the same way. Finally we show {\bf (iii)}. Note that $\sup_{\theta \in K} | \partial_\theta^j \tilde{\lambda}^{i,n}(t,\theta) - \partial_\theta^j  \lambda^{i,n,c}(t,\theta) |$ can be bounded by linear combinations of terms of the form $\int_0^{t-}{(t-s)^j e^{-\underline{b}(t-s)} d|N^{i,n} - N^{i,n,c}|_s}$. The $\mathbb{L}^p$ estimate of such expression is then easily derived by a truncation argument and Lemma \ref{boundedDeviation} {\bf (iii)}.
\end{proof}
We now follow similar notations to the ones introduced in \cite{Clinet20171800},  and consider the main quantities of interest to derive the properties of the MLE. We define for any $(\theta, \theta_{0}) \in K^2$,
\bea 
\mathbb{Y}_{i,n}(\theta, \theta_0) = \inv{h_nT}(l_{i,n}(\theta) - l_{i,n}(\theta_0)),
\eea 
\bea
\Delta_{i,n}(\theta_0) &=& \inv{\sqrt{h_nT}} \partial_\theta l_{i,n}(\theta_0), 
\label{defDeltaBlock}
\eea
and finally 
\bea
\Gamma_{i,n}(\theta_0) = -\inv{h_nT}\partial_{\theta}^2 l_{i,n}(\theta_0).
\eea
We define in the same way $\mathbb{Y}_{i,n}^c$,  $\Delta_{i,n}^c$, and $\Gamma_{i,n}^c$. We introduce for the next lemma the set $\I = \{(\theta_0,i,n) \in K \times \naturels^2 | 1 \leq i \leq B_n \}$.  

\begin{lemma*}
Let $\epsilon \in (0,1)$, and $L \in (0, 2\kappa)$. For any $p \in \naturels^*$, for any $\epsilon \in (0,1)$, we have the estimates
\bea 
 \sup_{\theta_0 \in K, 1 \leq i \leq B_n}  \espc \l|\Delta_{i,n}(\theta_0) - \Delta_{i,n}^c(\theta_0) \r|^L \to^\proba 0,
\label{eqDeltaDev}
\eea
\bea
\sup_{\theta_0 \in K, 1 \leq i \leq B_n}  \espc \l[ \sup_{\theta \in K} |\mathbb{Y}_{i,n}(\theta,\theta_0)-\mathbb{Y}_{i,n}^c(\theta,\theta_0)|^p \r] = O_\proba\l( h_n^{-\epsilon\kappa}\r), 
\label{eqY}
\eea
\bea
\sup_{\theta_0 \in K, 1 \leq i \leq B_n} \espc \l|\Gamma_{i,n}(\theta_0) - \Gamma_{i,n}^c(\theta_0) \r|^p = O_\proba\l(  h_n^{-\epsilon\kappa}\r), 
\label{eqGamma}
\eea
\bea
\sup_{(\theta_0,i,n) \in \I} \espc \left| h_n^{-1} \sup_{\theta \in K}|\partial_\theta^3l_{i,n}(\theta)| \right|^p < K \text{   } \proba\text{-a.s.} 
\label{eqPartial}
\eea

\label{lemma4Inequalities}
\end{lemma*}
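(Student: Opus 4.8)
The plan is to handle all four estimates by the same two-step scheme in the time-changed representation: first express the relevant quantity as a sum of integrals over the block $[0,h_nT]$, and then split the range of integration into a short initial window $[0,h_n^\alpha T]$ and the bulk $[h_n^\alpha T,h_nT]$, for a truncation exponent $\alpha\in(0,1)$ to be chosen small. On the bulk we use the closeness of the doubly stochastic model $(N^{i,n},\lambda_*^{i,n})$ to its constant-parameter approximation $(N^{i,n,c},\lambda^{i,n,c})$ provided by Lemma \ref{boundedDeviation} and Lemma \ref{lemmaBoundedDerivatives} {\bf (iii)}; on $[0,h_n^\alpha T]$, where no deviation estimate is available, we bound everything crudely using the uniform $\mathbb{L}^p$-bounds of Lemma \ref{boundedMoments} and Lemma \ref{lemmaBoundedDerivatives} {\bf (i)}--{\bf (ii)}. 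Throughout, the lower bound $\underline{\nu}>0$ makes $\log\tilde{\lambda}^{i,n}(t,\theta)$ and the ratios $\partial_\theta^j\tilde{\lambda}^{i,n}(t,\theta)/\tilde{\lambda}^{i,n}(t,\theta)$ (and their constant-parameter analogues) smooth functions of quantities with uniformly bounded moments, and every point-process stochastic integral is split into a compensator part, handled by H\"older's and Minkowski's inequalities, and a martingale part, handled by the Burkholder--Davis--Gundy estimate of Lemma \ref{LemmaBDG} via the representation of $N^{i,n}$ and $N^{i,n,c}$ as integrals over the common Poisson process $\overline{N}^{i,n}$.

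For (\ref{eqDeltaDev}), (\ref{eqY}) and (\ref{eqGamma}) I would first write, for $k\in\{0,1,2\}$, the difference $\partial_\theta^k l_{i,n}(\theta)-\partial_\theta^k l_{i,n}^c(\theta)$ as a sum of three kinds of terms, obtained by inserting and subtracting intermediate expressions: (A) a difference of integrands integrated against a single counting process, $\int_0^{h_nT}\big(G^{i,n}(t,\theta)-G^{i,n,c}(t,\theta)\big)\,dN_t^{i,n}$, where $G$ is a polynomial in $\tilde{\lambda}^{i,n}(t,\theta)^{-1}$ and the $\partial_\theta^j\tilde{\lambda}^{i,n}(t,\theta)$; (B) a single integrand integrated against the difference of counting processes, $\int_0^{h_nT}G^{i,n,c}(t,\theta)\,d(N_t^{i,n}-N_t^{i,n,c})$; and (C) the Lebesgue-integral analogues, such as $\int_0^{h_nT}\big(\partial_\theta^j\tilde{\lambda}^{i,n}(t,\theta)-\partial_\theta^j\lambda^{i,n,c}(t,\theta)\big)\,dt$. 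On the bulk, (A) is controlled by bounding $\sup_{\theta\in K}|G^{i,n}-G^{i,n,c}|$ by $O_\proba(h_n^{-\kappa})$ through Lemma \ref{lemmaBoundedDerivatives} {\bf (iii)} and then integrating against $dN^{i,n}$ with Lemma \ref{LemmaBDG} and Lemma \ref{boundedMoments}; (B) is precisely the content of Lemma \ref{boundedDeviation} {\bf (ii)}--{\bf (iii)}; (C) follows from Lemma \ref{lemmaBoundedDerivatives} {\bf (iii)} and Lemma \ref{boundedDeviation} {\bf (i)}. Each bulk piece is then $O_\proba(h_n^{-c})$ for an exponent $c>0$ arising from $\kappa$; dividing by $h_nT$ in $\mathbb{Y}_{i,n}$ and $\Gamma_{i,n}$ yields the announced $O_\proba(h_n^{-\epsilon\kappa})$ for the relevant $\epsilon$, whereas for $\Delta_{i,n}$ the division is only by $\sqrt{h_nT}$, but since $\kappa>1$ by (\ref{defKappa}) the polynomial gain still beats $h_n^{1/2}$ and the right-hand side of (\ref{eqDeltaDev}) tends to $0$; the supremum over $K$ in (\ref{eqY}) is absorbed, as in the proof of Lemma \ref{lemmaBoundedDerivatives}, by a Sobolev embedding on $\reels^3$ together with a bound on one further $\theta$-derivative.

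On the initial window $[0,h_n^\alpha T]$ no comparison estimate is needed. The integrands above are uniformly bounded in $\mathbb{L}^p$ under $\espc$, and since $\espc$ incorporates the localization $\{R_{i,n}(0)\leq M_n\}$ together with the exponential decay (\ref{preExcitationExpo}) of the pre-excitation, the expected number of events of $N^{i,n}$ and of $N^{i,n,c}$ on $[0,h_n^\alpha T]$ is at most of order $h_n^\alpha$ plus a contribution governed by $R_{i,n}(0)$, which has uniformly bounded moments. Hence the whole contribution of $[0,h_n^\alpha T]$ to each difference is $O_\proba(h_n^\alpha)$ in $\mathbb{L}^p$, which is negligible after division by $h_nT$ (respectively $\sqrt{h_nT}$) once $\alpha$ is chosen small. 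Combining the two windows gives (\ref{eqDeltaDev})--(\ref{eqGamma}).

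Estimate (\ref{eqPartial}) involves no comparison: $\partial_\theta^3 l_{i,n}(\theta)=\int_0^{h_nT}\partial_\theta^3\big(\log\tilde{\lambda}^{i,n}(t,\theta)\big)\,dN_t^{i,n}-\int_0^{h_nT}\partial_\theta^3\tilde{\lambda}^{i,n}(t,\theta)\,dt$, and the integrand of the first integral is a rational expression in $\tilde{\lambda}^{i,n}(t,\theta)\geq\underline{\nu}$ and $\partial_\theta^j\tilde{\lambda}^{i,n}(t,\theta)$, $j\leq 3$, hence uniformly bounded in $\mathbb{L}^p$ by Lemma \ref{lemmaBoundedDerivatives} {\bf (i)}; splitting it into a compensator and a martingale part via Lemma \ref{LemmaBDG} shows the first integral is $O_\proba(h_n)$ in $\mathbb{L}^p$, while the second is $O_\proba(h_n)$ directly by Minkowski's inequality, so that after dividing by $h_n$ the family is uniformly bounded, and the $\sup_{\theta\in K}$ is again handled as in Lemma \ref{lemmaBoundedDerivatives} using a bound on $\partial_\theta^4 l_{i,n}$. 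I expect the main difficulty to be organizational rather than conceptual: keeping all estimates uniform simultaneously in the quadruplet $(\theta_0,i,n)$ and in $\theta\in K$ --- which ultimately reduces to the constants in Lemmas \ref{boundedMoments}, \ref{boundedDeviation} and \ref{lemmaBoundedDerivatives} being independent of $(\theta_0,i,n)$ --- and tracking exponents so that the $h_n^{-\kappa}$ gain on the bulk survives both the division by $h_nT$ and the crude $O(h_n^\alpha)$ cost incurred on the initial window.
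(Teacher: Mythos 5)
Your scheme is essentially the paper's own proof: the same truncation of each block into the initial window $[0,h_n^\alpha T]$ (disposed of crudely via the uniform moment bounds of Lemmas \ref{boundedMoments} and \ref{lemmaBoundedDerivatives}) and the bulk $[h_n^\alpha T,h_nT]$, and on the bulk exactly the same three-term decomposition of $\partial_\theta^k l_{i,n}-\partial_\theta^k l_{i,n}^c$ --- difference of integrands against one counting measure, common integrand against $d(N^{i,n}-N^{i,n,c})$, and the Lebesgue term --- estimated with Lemma \ref{boundedDeviation} \textbf{(ii)}--\textbf{(iii)}, Lemma \ref{lemmaBoundedDerivatives} \textbf{(iii)} and Lemma \ref{boundedMoments}; and (\ref{eqPartial}) indeed requires no model comparison. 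Your use of a Sobolev-type embedding to pull $\sup_{\theta\in K}$ inside is an admissible alternative to the paper's simpler device of dominating all $\theta$-derivatives by $\theta$-free envelopes (exploiting $\nu\geq\underline{\nu}$, $b\geq\underline{b}$), but this is cosmetic.

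The one genuine defect is your justification of (\ref{eqDeltaDev}). You argue that, the normalization being only $(h_nT)^{-1/2}$, the conclusion follows because ``$\kappa>1$ beats $h_n^{1/2}$''. This implicitly treats the comparison gain as if it were raised to the power $L$, which it is not: as the remark following Lemma \ref{boundedDeviation} emphasizes, the $p$-th moments of the deviation terms carry a \emph{single} factor $h_n^{-\epsilon\kappa}$ (e.g.\ Lemma \ref{boundedDeviation} \textbf{(ii)} gives $O_\proba(h_n^{p-\epsilon\kappa})$, not $O_\proba(h_n^{p(1-\epsilon\kappa)})$). Carrying the exponents through, the bulk terms yield $\espc|\Delta_{i,n}(\theta_0)-\Delta_{i,n}^c(\theta_0)|^L=O_\proba(h_n^{L/2-\epsilon\kappa})$, and this tends to zero only because $L<2\kappa$ and $\epsilon$ can be taken arbitrarily close to $1$; the hypothesis $L\in(0,2\kappa)$, which you state but never invoke, is precisely what is needed, and $\kappa>1$ alone does not suffice when $L$ is close to $2\kappa$. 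The fix is one line, but as written the step fails for large admissible $L$. A smaller caveat of the same nature: your crude treatment of the initial window produces, after division by $(h_nT)^p$, a contribution of order $h_n^{-(1-\alpha)p}$, which for small $p$ and $\kappa>1$ does not match the rate $O_\proba(h_n^{-\epsilon\kappa})$ claimed in (\ref{eqY})--(\ref{eqGamma}) for every $\epsilon\in(0,1)$; either track this term more carefully or note that only some $\epsilon\in(0,1)$ is used downstream.
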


\begin{proof}

Let us show (\ref{eqDeltaDev}). We can express the equation in (\ref{defDeltaBlock}) and its counterpart for the constant model as 
\bea 
\Delta_{i,n}(\theta_0) = \inv{\sqrt{h_nT}}\l\{ \int_0^{h_nT}{ \frac{\partial_\theta\tilde{\lambda}^{i,n}(s,\theta_0)}{\tilde{\lambda}^{i,n}(s,\theta_0)} dN_s^{i,n}}-\int_0^{h_nT}{\partial_\theta \tilde{\lambda}^{i,n}(s,\theta_0)ds}\r\}
\eea
and 
\bea 
\Delta_{i,n}^c(\theta_0) = \inv{\sqrt{h_nT}}\l\{ \int_0^{h_nT}{ \frac{\partial_\theta\lambda^{i,n,c}(s,\theta_0)}{\lambda^{i,n,c}(s,\theta_0)} dN_s^{i,n,c}}-\int_0^{h_nT}{\partial_\theta \lambda^{i,n,c}(s,\theta_0)ds}\r\}.
\label{martingaleDeltaC}
\eea
By Lemma \ref{boundedMoments} {\bf (i)} and {\bf (iii)}, and Lemma \ref{lemmaBoundedDerivatives} {\bf (i)} and {\bf (ii)} and the presence of the factor $\inv{\sqrt{h_nT}}$, it is possible to replace the lower bounds of those integrals by $h_n^\alpha T$ for some $\alpha \in (0, \half)$. Thus the difference $\sqrt{h_nT}(\Delta_{i,n}(\theta_0)-\Delta_{i,n}^c(\theta_0))$ is equivalent to the sum of the three terms 

\beas 
\int_{h_n^\alpha T}^{h_nT}{ \frac{\partial_\theta\tilde{\lambda}^{i,n}(s,\theta_0)}{\tilde{\lambda}^{i,n}(s,\theta_0)} (dN_s^{i,n}-dN_s^{i,n,c})} &+& \int_{h_n^\alpha T}^{h_nT}{ \l\{\frac{\partial_\theta\tilde{\lambda}^{i,n}(s,\theta_0)}{\tilde{\lambda}^{i,n}(s,\theta_0)} - \frac{\partial_\theta\lambda^{i,n,c}(s,\theta_0)}{\lambda^{i,n,c}(s,\theta_0)} \r\} dN_s^{i,n,c}}\\
&+&\int_{h_n^\alpha T}^{h_nT}{\{\partial_\theta \tilde{\lambda}^{i,n}(s,\theta_0)-\partial_\theta \lambda^{i,n,c}(s,\theta_0) \}ds}.
\eeas
We therefore apply Lemmas \ref{boundedDeviation} {\bf (ii)} and \ref{lemmaBoundedDerivatives} {\bf (i)} to the first term, Lemmas \ref{boundedMoments} {\bf (iii)} and \ref{lemmaBoundedDerivatives} {\bf (iii)} to the second term, and finally Lemma \ref{lemmaBoundedDerivatives} {\bf (iii)} to the last term to obtain the overall estimate 

\bea 
 \sup_{\theta_0 \in K, 1 \leq i \leq B_n}\espc \l|\Delta_{i,n}(\theta_0) - \Delta_{i,n}^c(\theta_0) \r|^L = O_\proba\l(h_n^{\frac{L}{2} -\epsilon \kappa}\r),
 \label{suchexpression}
\eea
for any $\epsilon \in (0,1)$. This tends to $0$ if we can find an $\epsilon $ such that $\frac{L}{2} -\epsilon \kappa <0$, and this can be done by taking $\epsilon $  sufficiently close to $1$ since $L < 2\kappa$. Equations (\ref{eqY}), (\ref{eqGamma}) and (\ref{eqPartial}) are proved similarly. 
\end{proof}

\begin{lemma*}
\label{lemmaConvConstant}
For any integer $p \geq 1$, there exists a constant $M$ such that 
\bea 
 \sup_{(\theta_0,i,n) \in \I}  \espc \l| \Delta_n^c(\theta_0) \r|^p < M \text{   } \proba\text{-a.s.}
\label{eqDeltaDevConstant}
\eea
Furthermore, there exists a mapping $(\theta,\theta_0) \to \mathbb{Y}(\theta,\theta_0)$ such that for any $\epsilon \in (0,1)$,
\bea
 \sup_{\theta_0 \in K, 1 \leq i \leq B_n} \espc\l[ \sup_{\theta \in K} |\mathbb{Y}_{i,n}^{c}(\theta,\theta_0)-\mathbb{Y}(\theta,\theta_0)|  \r] = O\l(h_n^{-\epsilon \frac{p}{2}} \r) \text{   } \proba\text{-a.s.}
 \label{eqYConstant}
\eea
Finally, for any $\theta_0 \in K$, and for any $\epsilon \in (0,1)$,
\bea 
\sup_{\theta_0 \in K, 1 \leq i \leq \Delta_{n}^{-1}}  \espc\l|\Gamma_{i,n}^c(\theta_0) - \Gamma(\theta_0)\r|^p = O\l(h_n^{-\epsilon \frac{p}{2}} \r) \text{   } \proba\text{-a.s.}
\label{eqGammaConstant}
\eea
where $\Gamma(\theta_0)$ is the asymptotic Fisher information matrix of the parametric Hawkes process regression model with parameter $\theta_0$ as introduced in (\ref{paramFisher}).

%Finally, Let $E \in \calf$. We have 
%	\bea 
%	 \sup_{\theta_0 \in \Theta}\l\{\espc \left[ f(\sqrt{h_n}(\Delta_n^c(\theta_0))\mathbf{1}_E \right] - \espc[f(\Gamma(\theta_0)^{\frac{1}{2}}\xi)\mathbf{1}_E ]\r\} \to 0,
%	\label{convMartingale}
%	\eea
%	for any continuous bounded function $f$, and such that $\xi$ follows a standard normal distribution and is independent of $\calf$. 

\end{lemma*}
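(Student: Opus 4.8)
The plan is to reduce all three estimates to the long‑run (low‑frequency) asymptotics of the parametric Hawkes process collected in Section \ref{sectionMLEclassic}. The crucial observation is that, conditionally on the initial value $\theta_0$, the auxiliary pair $(N^{i,n,c},\lambda^{i,n,c})$ is \emph{exactly} a parametric Hawkes process with true parameter $\theta_0$ observed on the long interval $[0,h_nT]$: $N^{i,n,c}$ is built only from $\theta_0$ and from the restriction of $\overline{N}^{i,n}$ to $[0,h_nT]\times\reels_+$, and that restriction is the image of $\overline{N}$ on $((i-1)\Delta_n,i\Delta_n]\times\reels_+$, which is independent of $\calf_0^{i,n}=\calf_{(i-1)\Delta_n}$ by the independence properties of the Poisson process $\overline{N}$. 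Since the truncation event $\{R_{i,n}(0)\leq M_n\}$ is $\calf_0^{i,n}$‑measurable, for any non‑negative functional $X$ of $(N^{i,n,c},\lambda^{i,n,c})$ the quantity $\espc[X]$ equals $\mathbf{1}_{\{R_{i,n}(0)\leq M_n\}}$ times the expectation of $X$ under the parametric Hawkes law with parameter $\theta_0$ on $[0,h_nT]$, hence is bounded by that parametric expectation; and the latter depends on $(i,n)$ only through the diverging horizon $h_nT$. Therefore every bound below is uniform in $(\theta_0,i,n)\in\I$ provided the corresponding parametric bound is uniform in $\theta_0\in K$, which is the content of (\ref{paramFisher})--(\ref{paramConvMoment}) together with the remark following (\ref{paramConvMoment}).

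For (\ref{eqDeltaDevConstant}), write $\Delta_{i,n}^c(\theta_0)=(h_nT)^{-1/2}\partial_\theta l_{i,n}^c(\theta_0)$. Because $\theta_0$ is the true parameter of $N^{i,n,c}$, the process $\partial_\theta l_{i,n}^c(\theta_0)=\int_0^{h_nT}\frac{\partial_\theta\lambda^{i,n,c}(s,\theta_0)}{\lambda^{i,n,c}(s,\theta_0)}\big(dN_s^{i,n,c}-\lambda^{i,n,c}(s,\theta_0)\,ds\big)$ is a martingale; the intensity $\lambda^{i,n,c}(\cdot,\theta_0)$ is bounded away from $0$ on $K$, and $\sup_\theta|\partial_\theta\lambda^{i,n,c}(\cdot,\theta)|$ has $\mathbb{L}^q$‑bounded moments for every $q$ by Lemma \ref{lemmaBoundedDerivatives} {\bf (ii)}. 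Applying Lemma \ref{LemmaBDG} together with the moment bounds of Lemma \ref{boundedMoments} {\bf (iii)}--{\bf (iv)} shows that the predictable quadratic variation of this martingale is of order $h_nT$ in every $\mathbb{L}^q$, so that $\espc|\partial_\theta l_{i,n}^c(\theta_0)|^p=O((h_nT)^{p/2})$ and, after the normalisation, $\espc|\Delta_{i,n}^c(\theta_0)|^p<M$ uniformly in $(\theta_0,i,n)$.

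For (\ref{eqYConstant}), decompose $dN_s^{i,n,c}=dM_s^{i,n,c}+\lambda^{i,n,c}(s,\theta_0)\,ds$ with $M^{i,n,c}$ a martingale, so that $l_{i,n}^c(\theta)-l_{i,n}^c(\theta_0)$ splits into a stochastic integral of $\log\frac{\lambda^{i,n,c}(\cdot,\theta)}{\lambda^{i,n,c}(\cdot,\theta_0)}$ against $dM^{i,n,c}$ plus the drift $\int_0^{h_nT}\big[\lambda^{i,n,c}(s,\theta_0)\log\frac{\lambda^{i,n,c}(s,\theta)}{\lambda^{i,n,c}(s,\theta_0)}-\lambda^{i,n,c}(s,\theta)+\lambda^{i,n,c}(s,\theta_0)\big]ds$. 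Divided by $h_nT$, the martingale term is $O((h_nT)^{-1/2})$ in every $\mathbb{L}^q$ uniformly in $(\theta_0,i,n)$ by Lemma \ref{LemmaBDG} (its integrand and the $\theta$‑derivatives have uniformly bounded moments, and the logarithms of the intensity ratios are bounded since the intensities are bounded above and below), while the drift part is a time average over $[0,h_nT]$ of a functional of the exponentially mixing parametric Hawkes process; by the mixing condition [M2] — the slight generalisation of Lemma 6.6 of \cite{Clinet20171800} that underlies (\ref{paramFisher})--(\ref{paramC}) — it converges to a deterministic limit, which we take as $\mathbb{Y}(\theta,\theta_0)$, with exactly the $\mathbb{L}^p$‑type decay recorded in (\ref{paramFisher}), that is of order $(h_nT)^{-\epsilon p/2}=O(h_n^{-\epsilon p/2})$. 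To pass to $\sup_{\theta\in K}$ inside the expectation one controls in the same manner the $\theta$‑gradient of $\mathbb{Y}_{i,n}^c(\cdot,\theta_0)-\mathbb{Y}(\cdot,\theta_0)$ and applies Sobolev's inequality on the bounded set $K\subset\reels^3$ (valid upon taking the moment order $>3$), which yields (\ref{eqYConstant}).

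Finally, (\ref{eqGammaConstant}) is immediate from the reduction: $\Gamma_{i,n}^c(\theta_0)=-(h_nT)^{-1}\partial_\theta^2 l_{i,n}^c(\theta_0)$ is, conditionally on $\theta_0$, precisely the normalised observed information $\Gamma_{h_nT}(\theta_0)$ of the parametric Hawkes model of Section \ref{sectionMLEclassic}, so (\ref{paramFisher}) in its uniform‑in‑$\theta_0$ version gives $\espc|\Gamma_{i,n}^c(\theta_0)-\Gamma(\theta_0)|^p=O((h_nT)^{-\epsilon p/2})=O(h_n^{-\epsilon p/2})$ uniformly in $(\theta_0,i,n)$. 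The one genuinely delicate point is the uniformity in the block index $i$ and in $\theta_0\in K$: this is what the reduction of the first paragraph secures, since after conditioning we face a single parametric Hawkes law depending on $i$ and $n$ only through the horizon $h_nT\to\infty$, for which the required estimates (\ref{paramFisher})--(\ref{paramConvMoment}) are already known to hold uniformly in the parameter.
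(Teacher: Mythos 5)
Your proposal is correct and follows essentially the same route as the paper: conditionally on $\theta_0^{i,n,*}=\theta_0$, the auxiliary model $N^{i,n,c}$ is an honest parametric Hawkes process with parameter $\theta_0$, independent of $\calf_0^{i,n}$, so $\espc$ reduces to the ordinary parametric expectation and uniformity in $(i,n)$ comes only through the horizon $h_nT\to\infty$, leaving just the uniform-in-$\theta_0$ long-run estimates. The additional martingale/mixing/Sobolev details you supply for the three bounds are precisely the content of Lemma 3.15 and Theorem 4.6 of \cite{Clinet20171800}, which the paper simply invokes (with the remark that their proofs carry over uniformly in $\theta_0\in K$ and in the block index).
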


\begin{proof}

Note that when $\theta_0^{i,n,*} = \theta_0$, the constant model $N^{i,n,c}$ is simply a parametric Hawkes process with parameter $\theta_0$, and is independent of the filtration $\calf_0^{i,n}$. Thus, by a regular distribution argument the operator $\espc$ acts as the simple operator $\esp$ for $N^{i,n,c}$ distributed as a Hawkes with true value $\theta_0$. It is straightforward to see that under a mild change in the proofs of Lemma 3.15 and Theorem 4.6 in \cite{Clinet20171800} those estimates hold uniformly in $\theta_0 \in K$ and in the block index. 
\end{proof}

\begin{theorem*} \label{convergence}
Let $L \in (0,2\kappa)$. We have 
	\bea 
	\label{momentconv}
	  \sup_{\theta_0 \in K, 1 \leq i \leq B_n}\l\{\espc \left[ f\l(\sqrt{h_n }(\widehat{\Theta}_{i,n}-\theta_0)\r) \right] - \esp\l[f\l( T^{-\half}\Gamma(\theta_0)^{-\frac{1}{2}}\xi\r) \r]\r\} \to^\proba 0,
	\label{convMoment}
	\eea
	for any continuous function $f$ with $|f(x)| = O(|x|^L)$ when $ |x| \to \infty$ , and such that $\xi$ follows a standard normal distribution.

\end{theorem*}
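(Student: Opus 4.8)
The plan is to use the time-changed representation of Section \ref{sectionLCLT} to turn the local estimation on block $i$ into a \emph{long-run} estimation problem for the parametric companion process $(N^{i,n,c},\lambda^{i,n,c})$ --- which, under the conditional law $\probac$, is literally a parametric Hawkes process with true value $\theta_0$ on the window $[0,h_nT]$, independent of $\calf_0^{i,n}$ --- and then to transport Yoshida's Quasi Likelihood Analysis from that companion to the genuine (misspecified) local model. Writing $\widehat{\Theta}_{i,n}^c$ for a maximiser of $l_{i,n}^c$ over $K$, the three steps are: (1) prove the conclusion for $\widehat{\Theta}_{i,n}^c$, uniformly in $(\theta_0,i)$; (2) show $\sqrt{h_n}(\widehat{\Theta}_{i,n}-\widehat{\Theta}_{i,n}^c)\to^\proba 0$ in $\mathbb{L}^L$ uniformly in $(\theta_0,i)$, together with a uniform moment bound of order $<2\kappa$ for $\sqrt{h_n}(\widehat{\Theta}_{i,n}-\theta_0)$; (3) combine (1)--(2) by uniform integrability and remove the truncation at $M_n$.

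\textbf{Step 1 (companion model).} Lemma \ref{lemmaConvConstant} supplies, uniformly in $(\theta_0,i)$, the $\mathbb{L}^p$ bound on $\Delta_{i,n}^c(\theta_0)$, the convergence $\mathbb{Y}_{i,n}^c\to\mathbb{Y}$ toward an identifiable limit field, and $\Gamma_{i,n}^c(\theta_0)\to\Gamma(\theta_0)$ with $\Gamma(\cdot)$ non-degenerate on the compact $K$; together with the uniform $\mathbb{L}^p$ bound on $h_n^{-1}\sup_\theta|\partial_\theta^3 l_{i,n}^c|$ these are exactly the hypotheses of the polynomial-type large deviation inequality of \cite{YoshidaPolynomial2011}. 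I would therefore record that, for every $M>0$, $\sup_{(\theta_0,i)}\probac(|\sqrt{h_nT}(\widehat{\Theta}_{i,n}^c-\theta_0)|\ge r)=O(r^{-M})$, whence $\sup_{(\theta_0,i)}\espc|\sqrt{h_n}(\widehat{\Theta}_{i,n}^c-\theta_0)|^p<\infty$ for all $p$, and the convergence of moments $\espc[f(\sqrt{h_n}(\widehat{\Theta}_{i,n}^c-\theta_0))]\to\esp[f(T^{-\half}\Gamma(\theta_0)^{-\half}\xi)]$ uniformly in $(\theta_0,i)$, by the uniform version of Theorem 4.6 of \cite{Clinet20171800} (here the window length $h_nT\to\infty$ plays the role of the horizon $T$ of the long-run theory, which is why it is $\sqrt{h_n}(\widehat{\Theta}_{i,n}^c-\theta_0)$, and not $\sqrt{h_nT}(\widehat{\Theta}_{i,n}^c-\theta_0)$, that carries the extra $T^{-\half}$).

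\textbf{Step 2 (transport to the true model).} Lemma \ref{lemma4Inequalities} shows that the fields $\theta\mapsto\mathbb{Y}_{i,n}(\theta,\theta_0)$ and their first two derivatives $\Delta_{i,n},\Gamma_{i,n}$ differ from the companion versions $\mathbb{Y}_{i,n}^c,\Delta_{i,n}^c,\Gamma_{i,n}^c$ only by terms that vanish in $\mathbb{L}^p$ at the rate $h_n^{-\epsilon\kappa}$ (and by an $o_\proba(1)$ term in $\mathbb{L}^L$ for $\Delta$), with a uniform $\mathbb{L}^p$ bound on $h_n^{-1}\sup_\theta|\partial_\theta^3 l_{i,n}|$. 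First, re-running the polynomial-type large deviation argument for the \emph{misspecified} field --- all of whose inputs now equal the companion ones up to these negligible deviations --- gives $\sup_{(\theta_0,i)}\espc|\sqrt{h_n}(\widehat{\Theta}_{i,n}-\theta_0)|^{L'}<\infty$ for every $L'<2\kappa$; it is exactly the exponent $\kappa=\gamma(\delta-1)$ that caps the admissible $L'$, since by (\ref{eqDeltaDev}) the $\Delta$-deviation is only controlled in powers $h_n^{L'/2-\epsilon\kappa}$, which tends to $0$ iff $L'<2\kappa$. Second, on the event that both normalised estimators lie in a fixed ball --- its complement having conditional probability $o_\proba(1)$ uniformly, by Step 1 and the bound just obtained --- a Taylor expansion of $\partial_\theta l_{i,n}$ around $\widehat{\Theta}_{i,n}^c$, using the first-order conditions $\partial_\theta l_{i,n}(\widehat{\Theta}_{i,n})=\partial_\theta l_{i,n}^c(\widehat{\Theta}_{i,n}^c)=0$ (valid once the maximisers are interior, which holds with probability $1-o_\proba(1)$), the lower bound on $\Gamma_{i,n}$ inherited from the non-degeneracy of $\Gamma_{i,n}^c$ and (\ref{eqGamma}), and the third-derivative bound (\ref{eqPartial}), yields $\sqrt{h_n}|\widehat{\Theta}_{i,n}-\widehat{\Theta}_{i,n}^c|\le C\big(|\Delta_{i,n}(\theta_0)-\Delta_{i,n}^c(\theta_0)|+|\Gamma_{i,n}(\theta_0)-\Gamma_{i,n}^c(\theta_0)|\,|\sqrt{h_n}(\widehat{\Theta}_{i,n}^c-\theta_0)|\big)$ up to smaller-order terms; by (\ref{eqDeltaDev}), (\ref{eqGamma}) and Step 1 the right-hand side is $o_\proba(1)$ in $\mathbb{L}^L$ uniformly in $(\theta_0,i)$.

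\textbf{Step 3 (conclusion) and main obstacle.} Given $f$ continuous with $|f(x)|=O(|x|^L)$, $L<2\kappa$, choose $L'\in(L,2\kappa)$: the uniform $\mathbb{L}^{L'}$ bounds of Steps 1--2 make both families $\{f(\sqrt{h_n}(\widehat{\Theta}_{i,n}-\theta_0))\}$ and $\{f(\sqrt{h_n}(\widehat{\Theta}_{i,n}^c-\theta_0))\}$ uniformly $\espc$-integrable, uniformly in $(\theta_0,i)$; combined with $\sqrt{h_n}(\widehat{\Theta}_{i,n}-\widehat{\Theta}_{i,n}^c)\to^\proba 0$ uniformly and the continuity of $f$, this gives $\espc[f(\sqrt{h_n}(\widehat{\Theta}_{i,n}-\theta_0))-f(\sqrt{h_n}(\widehat{\Theta}_{i,n}^c-\theta_0))]\to^\proba 0$ uniformly, and together with Step 1 this is (\ref{momentconv}). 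Finally, since $M_n=O(n^q)$ with $q>1$ dominates the $O_\proba(n)$ typical size of the pre-excitation $R_{i,n}(0)$, one has $\proba(\exists i\le B_n:\ R_{i,n}(0)>M_n)=o(1)$, so the conditional statement upgrades to the unconditional one used in Theorem \ref{conditionalCLT}. The hard part is the uniformity in the pair $(\theta_0,i)$ of the polynomial-type large deviation inequality for the misspecified field: one must verify Yoshida's curvature, identifiability and moment hypotheses with constants independent of the block, leaning only on the fact that the companion is a genuine parametric Hawkes process and on the deviation estimates of Lemmas \ref{boundedDeviation}--\ref{lemma4Inequalities}; a secondary nuisance is propagating the truncation at $M_n$ and the lower cut $h_n^\alpha T\le t$ through all the estimates so that the residual terms are negligible uniformly, which is where Condition [C]-{\bf (ii)} (equivalently $\kappa>1$) is used to make the powers of $h_n$ close up.
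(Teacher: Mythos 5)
Your proposal is correct in substance and rests on the same pillars as the paper's argument: the constant-parameter companion model, the deviation estimates of Lemmas \ref{lemma4Inequalities} and \ref{lemmaConvConstant}, a conditional and block-uniform application of the polynomial-type large deviation machinery of \cite{YoshidaPolynomial2011} to obtain $\sup_{\theta_0,i}\espc|\sqrt{h_n}(\widehat{\Theta}_{i,n}-\theta_0)|^{L'}=O_\proba(1)$ for $L'<2\kappa$, and the final step "uniform moment bounds plus convergence in distribution imply convergence of $\espc[f(\cdot)]$". Where you genuinely diverge is the transfer of the distributional result: you introduce the maximizer $\widehat{\Theta}_{i,n}^c$ of the fake likelihood $l_{i,n}^c$, prove the parametric result for it, and then compare the two \emph{estimators} through a Taylor expansion of the scores around $\theta_0$ (needing interiority of both maximizers and a uniformly non-degenerate Hessian event). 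The paper never defines a companion estimator; it transfers at the level of the random fields instead, using (\ref{eqY})--(\ref{eqGammaConstant}) to get uniform consistency of $\widehat{\Theta}_{i,n}$ from $\mathbb{Y}_{i,n}\approx\mathbb{Y}$, using (\ref{eqDeltaDev}) to argue that $\Delta_{i,n}(\theta_0)$ and $\Delta_{i,n}^c(\theta_0)$ share the limit $\Gamma(\theta_0)^{\half}\xi$, and then re-running the proof of Theorem 3.11 of \cite{Clinet20171800} for the misspecified model directly; the moment bound is obtained, exactly as you propose, by checking conditions $[A1'']$, $[A4']$, $[A6]$, $[B1]$, $[B2]$ of \cite{YoshidaPolynomial2011} for the misspecified field with exponents tuned so that $L'<2\kappa$ is admissible. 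Your route makes the approximation explicit at the estimator level, at the cost of the extra bookkeeping for $\widehat{\Theta}_{i,n}^c$ (interiority, invertibility event, and a conditional uniform-integrability argument to turn convergence in probability of $\sqrt{h_n}(\widehat{\Theta}_{i,n}-\widehat{\Theta}_{i,n}^c)$ into convergence of conditional expectations); the paper's field-level route avoids that auxiliary object entirely. One small remark: the statement you are proving is already formulated under $\espc$, which carries the indicator $\mathbf{1}_{\{R_{i,n}(0)\le M_n\}}$, so the removal of the truncation you sketch at the end is not needed here --- it belongs to the proof of Theorem \ref{conditionalCLT}.
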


\begin{proof}
By (\ref{eqY}) and (\ref{eqYConstant}), we can define some number $\epsilon \in (0,1)$ such that 

\bea
 \sup_{\theta_0 \in K, 1 \leq i \leq B_n} h_n^{\epsilon(\frac{p}{2} \wedge \kappa )}\espc\l[ \sup_{\theta \in K} |\mathbb{Y}_{i,n}(\theta,\theta_0)-\mathbb{Y}(\theta,\theta_0)|^p  \r] \to^\proba 0,
 \label{eqY2}
\eea
and as $\widehat{\Theta}_{i,n}$ is also a maximizer of $\theta \to \mathbb{Y}_{i,n}(\theta,\theta_0)$, (\ref{eqY2}) implies the uniform consistency in the block index $i$ and the initial value of $\widehat{\Theta}_{i,n}$ to $\theta_0^{i,n,*}$, i.e.
\bea
\sup_{\theta_0 \in K, 1 \leq i \leq B_n} \probac\l[\widehat{\Theta}_{i,n} - \theta_0  \r] \to^\proba 0,
\eea
since $\mathbb{Y}$ satisfies the non-degeneracy condition [A4] in \cite{Clinet20171800}.  From (\ref{eqGamma}) and (\ref{eqGammaConstant}) we deduce 
\bea 
\sup_{\theta_0 \in K, 1 \leq i \leq B_n} h_n^{\epsilon(\frac{p}{2} \wedge \kappa )}\espc \l| \Gamma_{i,n}(\theta_0) - \Gamma(\theta_0) \r|^p  \to^\proba 0.
\label{eqGamma2}
\eea

By (\ref{eqDeltaDev}), $\Delta_{i,n}(\theta_0)$ and $\Delta_{i,n}^c(\theta_0)$ have the same asymptotic distribution, which is of the form $\Gamma(\theta_0)^{\half} \xi$, where $\xi$ follows a standard normal distribution. Following the proof of Theorem 3.11 in \cite{Clinet20171800}, we deduce that $\sqrt{h_n } (\widehat{\Theta}_{i,n}-\theta_0)$ converges uniformly in distribution to $T^{-\half}\Gamma(\theta_0)^{-\half} \xi$ when $\theta_0^{i,n,*} = \theta_0$, i.e.  
\bea 
\sup_{\theta_0 \in K, 1 \leq i \leq B_n}\l\{\espc \left[ f\l(\sqrt{h_n }(\widehat{\Theta}_{i,n}-\theta_0)\r) \right] - \esp\l[f\l(T^{-\half}\Gamma(\theta_0)^{-\frac{1}{2}}\xi\r) \r]\r\} \to^\proba 0,
\label{convDistribution}
\eea
for any bounded continuous function $f$ . 

\smallskip
Finally, we  extend (\ref{convDistribution}) to the case of a function of polynomial growth of order smaller than $L$. First note that by (\ref{eqDeltaDev}) and (\ref{eqDeltaDevConstant}) we have for any $L^{'} \in (L,2\kappa)$
\bea 
 \sup_{\theta_0 \in K, 1 \leq i \leq 
 B_n} \espc \l|\Delta_{i,n}(\theta_0)  \r|^{L^{'}} = O_\proba(1).
\label{eqDeltaDev2}
\eea
We now adopt the notations of \cite{YoshidaPolynomial2011} and define $\beta_1 = \frac{\epsilon}{2}$, $\beta_2=\frac{1}{2} - \beta_1$, $\rho =2$,  $0 < \rho_2 < 1- 2\beta_2$, $0< \alpha < \frac{\rho_2}{2}$, and $0<\rho_1<\min \{1, \frac{\alpha}{1-\alpha},\frac{2\beta_1}{1-\alpha}\} $ all sufficiently small so that $M_1 = L(1-\rho_1)^{-1} < L^{'} $, $ M_4 = \beta_1 L(\frac{2\beta_1}{1-\alpha} - \rho_1)^{-1} < 2\frac{ \gamma(\delta-1)}{2} = \kappa$, $M_2 = (\half -\beta_2)L(1- 2\beta_2 -\rho_2)^{-1} < \kappa$ and finally $M_3 = L\l(\frac{\alpha}{1-\alpha} - \rho_1 \r)^{-1} < \infty$. Then, by (\ref{eqY2}), (\ref{eqGamma2}), (\ref{eqDeltaDev2}) and finally (\ref{eqPartial}), conditions $[A1^{''}]$, $[A4^{'}]$, $[A6]$, $[B1]$ and $[B2]$ in \cite{YoshidaPolynomial2011} are satisfied. It is straightforward that we can apply a conditional version (with respect to the operator $\espc$) of Theorem 3 and Proposition 1 from \cite{YoshidaPolynomial2011}  to get that for any $p \leq L$,
\bea 
 \sup_{\theta_0 \in K, 1 \leq i \leq \Delta_n^{-1}} \espc  \l|\sqrt{h_n }\l(\widehat{\Theta}_{i,n}-\theta_0\r)\right|^p = O_\proba(1).
\eea
Such stochastic boundedness of conditional moments along with the convergence in distribution is clearly sufficient to imply the theorem.

\end{proof}
So far we have focused on the case where $R_{i,n}(0)$ is bounded by the sequence $M_n$. Nonetheless, the time-varying parameter Hawkes process has a residual which is a priori not bounded at the beginning of a block. In Theorem \ref{conditionalCLT}, we relax this assumption. In addition, we use regular conditional distribution techniques (see for instance Section $4.3$ (pp. $77-80$) in \cite{breiman1992probability}) to obtain (\ref{momentconv}) when not conditioning by any particular starting value of $\theta_t^*$. We provide the formal proof in what follows. Recall that $\esp_{(i-1) \Delta_n}$ stands for $\esp[.| \calf_0^{i,n}]$.
\begin{proof}[Proof of Theorem \ref{conditionalCLT}.]
We can decompose $\esp_{(i-1) \Delta_n} \Big[ f \big( \sqrt{h_n}(\widehat{\Theta}_{i,n} - \theta_{(i-1) \Delta_n}^*) \big) \Big]$ as
\begin{eqnarray}
\label{t1} & & \esp_{(i-1) \Delta_n} \l[ f \l( \sqrt{h_n }\l(\widehat{\Theta}_{i,n} - \theta_{(i-1) \Delta_n}^*\r) \r) \mathbf{1}_{ \{ R_{i,n}(0) \leq M_n \} } \r] \\ 
\label{t2} & + & \esp_{(i-1) \Delta_n} \l[ f \l( \sqrt{h_n}\l(\widehat{\Theta}_{i,n} - \theta_{(i-1) \Delta_n}^*\r) \r) \mathbf{1}_{ \l\{ R_{i,n}(0) > M_n \r\} } \r].
\end{eqnarray}
Let $\xi$ as in Theorem \ref{conditionalCLT}. On the one hand by a regular conditional distribution argument, if we define $G(\theta_0) = \espc \Big[ f \big( \sqrt{h_n}(\widehat{\Theta}_{i,n} - \theta_0 \big) \Big]- \esp \Big[ f \big( T^{-\half}\Gamma(\theta_0)^{-\frac{1}{2}}\xi \big) \Big] $, we can express uniformly in $i \in \{ 1, \cdots, B_n \}$ the quantity
\bea 
&&\esp_{(i-1) \Delta_n} \l[ f \l( \sqrt{h_n}\l(\widehat{\Theta}_{i,n} - \theta_{(i-1) \Delta_n}^*\r) \r) \mathbf{1}_{ \{ R_{i,n}(0) \leq M_n \} } - f \l( T^{-\half}\Gamma\l(\theta_{(i-1) \Delta_n}^*\r)^{-\frac{1}{2}}\xi \r) \r] 
\label{eqDiff}
\eea 
as $G\l(\theta_{(i-1) \Delta_n}^*\r)$ by definition of $\espc$ and because $\xi \ind \calf$. We note that
\bea 
\l|G\l(\theta_{(i-1) \Delta_n}^*\r)\r| \leq \sup_{\theta_0 \in K} \l|\espc \left[ f\l(\sqrt{h_n}\l(\widehat{\Theta}_{i,n}-\theta_0\r)\r) \right] - \esp\l[f\l(T^{-\half}\Gamma(\theta_0)^{-\frac{1}{2}}\xi\r) \r]\r|,
\label{proofSup}
\eea 

%\bea 
%\label{proofint}
 %\int_{K}{\l\{\espc \left[ f(\sqrt{h_n}(\widehat{\Theta}_{i,n}-\theta_0)) \right] - \esp[f(\Gamma(\theta_0)^{-\frac{1}{2}}\xi) ]\r\} \proba^{\theta_{(i-1) \Delta_n}^*}(d\theta_0)}.
%\eea 
%where $\proba^{\theta_{(i-1) \Delta_n}^*}$ is the distribution of the random variable $\theta_{(i-1) \Delta_n}^*$. 
take the sup over $i$ in (\ref{proofSup}), and in view of Theorem \ref{convergence}, we have shown that (\ref{eqDiff}) is uniformly of order $o_\proba(1)$.

\smallskip
On the other hand, (\ref{t2}) is bounded by $h_n^{L} Q  \mathbf{1}_{ \{ R_{i,n}(0) > M_n \} } $ for some $Q >0$, where we have used that $\widehat{\Theta}_{i,n}$ takes its values in a compact space.  By a straightforward computation it is easy to see that $\proba \l[ R_{i,n}(0) > M_n \r] \leq \proba \l[ \lambda_*^n((i-1)\Delta_n) > M_n \r]$, which in turn can be dominated easily with Markov's inequality by $M_n^{-1} \esp \l[ \lambda_*^n((i-1)\Delta_n) \r] = O(nM_n^{-1})$. We recall that $M_n$ is of the form $n^q$ where $q$ can be taken arbitrarily big, and we have thus shown that (\ref{t2}) vanishes asymptotically.
\end{proof}

\subsection{Bias reduction of the local MLE} \label{sectionBiasCorrection}
We go one step further and study the properties of the asymptotic conditional bias of the local MLE, i.e. the quantity $\esp_{(i-1)\Delta_n}\l[\widehat{\Theta}_{i,n} - \theta_{(i-1) \Delta_n}^*\r]$. We then derive the expression of a bias-corrected estimator $\widehat{\Theta}_{i,n}^{(BC)}$ whose expectation tends faster to $\theta_{(i-1) \Delta_n}^*$.  

\smallskip
We start by estimating the order of the bias of the local MLE. As the reader can see, the following computations are very involved. Therefore, in this section only, we adopt the following notation conventions. First, we drop the index reference $i$. Consequently, all the variables $N^n$,$ \lambda_*^n$,$ l_n$, $\esps$, etc. should be read $N^{i,n}$,$ \lambda_*^{i,n}$,$ l_{i,n}$, $\espc$, etc. All the results are implicitly stated uniformly in the block index. Second, for a random variable $Z$ that admits a first order moment for the operator $\esps$, we denote by $\overline{Z}$ its centered version, i.e. the random variable $Z - \esps[Z]$. We adopt Einstein's summation convention, i.e. any indice that is repeated in an expression is implicitly summed. For example the expression $a_{ij}b_j$ should be read $\sum_j a_{ij}b_j$. Finally, as in Section \ref{mainResults}, for a matrix $M$, we use superscripts to designate elements of its inverse, i.e. $M^{ij}$ stands for the element in position $(i,j)$ of $M^{-1}$ when it is well-defined, $M^{ij} = 0$ otherwise. 

\smallskip
By a Taylor expansion of the score function around the maximizer of the likelihood function, it is immediate to see that there exists $\xi_n \in [\widehat{\Theta}_n, \theta_0 ]$ such that 
\bea 
0 = \partial_\theta l_n(\widehat{\Theta}_n) = \partial_\theta l_n(\theta_0) + \partial_\theta^2 l_n(\theta_0)(\widehat{\Theta}_n- \theta_0) + \half \partial_\theta^3 l_n(\xi_n)(\widehat{\Theta}_n- \theta_0)^{\otimes 2},
\eea 
where $\partial_\theta^3 l_n(\xi_n)(\widehat{\Theta}_n- \theta_0)^{\otimes 2}$ is a compact expression for the vector whose $i$-th component is $\partial_{\theta,ijk}^3 l_n(\xi_n)(\widehat{\Theta}_n- \theta_0)_j(\widehat{\Theta}_n- \theta_0)_k$. Let $\epsilon \in (0,1)$. By application of Lemmas \ref{boundedDeviation} and \ref{lemmaBoundedDerivatives}, it still holds that
\bea 
 \partial_\theta l_n^c(\theta_0) + \partial_\theta^2 l_n^c(\theta_0)(\widehat{\Theta}_n- \theta_0) + \half\partial_\theta^3 l_n^c(\xi_n)(\widehat{\Theta}_n- \theta_0)^{\otimes 2} = O_{\proba}\l(h_n^{1-\epsilon\kappa}\r),
 \label{taylorConstant}
\eea 
where the residual term $O_{\proba}\l(h_n^{1-\epsilon\kappa}\r)$ admits clearly moments of any order with respect to $\esps$. We now apply the operator $\esps$, divide by $h_n T$ and obtain 
\beas 
  \esps[\overline{\Gamma}_n^c(\theta_0)(\widehat{\Theta}_n- \theta_0)]&+& \esps[\Gamma_n^c(\theta_0)]\esps[\widehat{\Theta}_n- \theta_0] - \esps\l[\frac{\partial_\theta^3 l_n^c(\xi_n)}{2h_n T}(\widehat{\Theta}_n- \theta_0)^{\otimes 2}\r] = O_\proba(h_n^{-\epsilon\kappa}),
  \label{espExpansion}
\eeas
where the expectation of the first term has vanished because of the martingale form of $\Delta_n^c(\theta_0)$ in (\ref{martingaleDeltaC}). The term $\esps[\Gamma_n^c(\theta_0)]\esps[\widehat{\Theta}_n- \theta_0]$ is of interest since it contains the quantity we want to evaluate. The first and the third terms have thus to be evaluated to derive an expansion of the bias. We start by the first term, i.e. the covariance between our estimator and $\Gamma_n^c(\theta_0)$. To compute the limiting value of such covariance, we consider the martingale $M_{n}^{c}(t,\theta_0) =   \int_0^{t}{ \frac{\partial_\theta\lambda^{n,c}(s,\theta_0)}{\lambda^{n,c}(s,\theta_0)} \{dN_s^{n,c} - \lambda^{n,c}(s,\theta_0) ds\}}$, and we define the empirical covariance processes $C_n^c(\theta_0)$ and $Q_n^c(\theta_0)$ whose components are, for any triplet $(i,j,k) \in \reels^3 \times \reels^3 \times \reels^3$,
 
 \beas 
 C_n^c(\theta_0)_{i,jk} = \inv{h_n T} \int_0^{h_n T}{\partial_{\theta,i}\lambda^{n,c}(s,\theta_0) \partial_{\theta,jk}^2 \text{log} \lambda^{n,c}(s,\theta_0)ds},
 \eeas
 and
 \beas 
 Q_n^c(\theta_0)_{i,jk} = -\frac{M_{n}^{c}(T,\theta_0)_i}{h_n T} \int_0^{h_n T}{  \frac{\partial_\theta \lambda^{n,c}(s,\theta_0)_j\partial_\theta \lambda^{n,c}(s,\theta_0)_k}{\lambda^{n,c}(s,\theta_0)} ds},
  \label{CovarianceExpression}
 \eeas
 
 We define in a similar way $C_n(\theta_0)$ and $Q_n(\theta_0)$. The next lemma clarifies the role of $C_n^c(\theta_0)+Q_n^c(\theta_0)$ and is a straightforward calculation.
 
 \begin{lemma*}
 We have 
\bea 
 \esps \l[ C_n^c(\theta_0)_{i,jk}+Q_n^c(\theta_0)_{i,jk} \r] = -\sqrt{h_n T}\esps \l[ \Delta_n^c(\theta_0)_i \Gamma_n^c(\theta_0)_{jk} \r].
 \eea
 \label{lemmaCovariance}
 \end{lemma*}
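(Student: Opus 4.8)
The plan is to express $\esps[\Delta_n^c(\theta_0)_i\,\Gamma_n^c(\theta_0)_{jk}]$ by putting $\Delta_n^c$ and $\Gamma_n^c$ in martingale form and recognizing the two resulting contributions as (multiples of) $\esps[C_n^c(\theta_0)_{i,jk}]$ and $\esps[Q_n^c(\theta_0)_{i,jk}]$. As a preliminary remark, conditionally on $\theta_0^{i,n,*}=\theta_0$ the process $N^{n,c}$ is a parametric Hawkes process with parameter $\theta_0$ which is independent of $\calf_0^{i,n}$, while the indicator $\mathbf{1}_{\{R_{i,n}(0)\leq M_n\}}$ implicit in $\esps$ is $\calf_0^{i,n}$-measurable; since every quantity in the statement depends only on $N^{n,c}$, that indicator factors out of both sides and $\esps$ may be treated as the ordinary expectation under the parametric model (exactly as in the proof of Lemma \ref{lemmaConvConstant}). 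All random variables below have finite moments of every order by Lemmas \ref{lemmaMomentsGeneral}, \ref{boundedMoments} and \ref{lemmaBoundedDerivatives} (recall $\lambda^{n,c}$ is bounded away from zero), so every (local) martingale appearing is a genuine square-integrable martingale and the manipulations that follow are licit.

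First I would rewrite $\Delta_n^c$ and $\Gamma_n^c$. Substituting $dN_s^{n,c}=\{dN_s^{n,c}-\lambda^{n,c}(s,\theta_0)ds\}+\lambda^{n,c}(s,\theta_0)ds$ in $\partial_\theta l_n^c(\theta_0)$, the compensator piece cancels the $-\int\partial_\theta\lambda^{n,c}$ term and one gets $\sqrt{h_nT}\,\Delta_n^c(\theta_0)_i=\partial_{\theta,i}l_n^c(\theta_0)=M_n^c(h_nT,\theta_0)_i$, the representation already noted in (\ref{martingaleDeltaC}). Differentiating once more and using the elementary identity $\lambda\,\partial_{\theta,jk}^2\text{log}\,\lambda=\partial_{\theta,jk}^2\lambda-\lambda^{-1}\partial_{\theta,j}\lambda\,\partial_{\theta,k}\lambda$ to convert the $dN^{n,c}$-integral of $\partial_{\theta,jk}^2\text{log}\,\lambda^{n,c}$ into its compensator, the $\int\partial_{\theta,jk}^2\lambda^{n,c}$ contributions cancel and I obtain
\bea
-h_nT\,\Gamma_n^c(\theta_0)_{jk}=\partial_{\theta,jk}^2 l_n^c(\theta_0)=\mathcal{M}_n(h_nT)_{jk}-h_nT\,\mathcal{J}_n(\theta_0)_{jk},
\eea
where $\mathcal{M}_n(t)_{jk}:=\int_0^t\partial_{\theta,jk}^2\text{log}\,\lambda^{n,c}(s,\theta_0)\{dN_s^{n,c}-\lambda^{n,c}(s,\theta_0)ds\}$ is a martingale and $\mathcal{J}_n(\theta_0)_{jk}:=\inv{h_nT}\int_0^{h_nT}\lambda^{n,c}(s,\theta_0)^{-1}\partial_{\theta,j}\lambda^{n,c}(s,\theta_0)\,\partial_{\theta,k}\lambda^{n,c}(s,\theta_0)\,ds$. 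The crucial observation is that $h_nT\,\mathcal{J}_n(\theta_0)_{jk}$ is exactly the integral factor occurring in $Q_n^c$, so that $M_n^c(h_nT,\theta_0)_i\,\mathcal{J}_n(\theta_0)_{jk}=-Q_n^c(\theta_0)_{i,jk}$.

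Next I would multiply the two representations and take $\esps$, which gives
\bea
\sqrt{h_nT}\,\esps\l[\Delta_n^c(\theta_0)_i\Gamma_n^c(\theta_0)_{jk}\r]=-\inv{h_nT}\esps\l[M_n^c(h_nT,\theta_0)_i\,\mathcal{M}_n(h_nT)_{jk}\r]+\esps\l[M_n^c(h_nT,\theta_0)_i\,\mathcal{J}_n(\theta_0)_{jk}\r].
\eea
The second expectation equals $-\esps[Q_n^c(\theta_0)_{i,jk}]$ by the observation just made. For the first, since the product of two square-integrable martingales started at $0$ minus their predictable covariation is a martingale started at $0$, its expectation equals $\esps[\langle M_n^c(\cdot,\theta_0)_i,\mathcal{M}_n(\cdot)_{jk}\rangle_{h_nT}]$; and for stochastic integrals against the same compensated point process $N^{n,c}$ the predictable covariation is obtained by multiplying the two integrands and integrating against $\lambda^{n,c}(s,\theta_0)\,ds$, which here yields $\int_0^{h_nT}\partial_{\theta,i}\lambda^{n,c}(s,\theta_0)\,\partial_{\theta,jk}^2\text{log}\,\lambda^{n,c}(s,\theta_0)\,ds=h_nT\,C_n^c(\theta_0)_{i,jk}$. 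Hence the first term equals $-\esps[C_n^c(\theta_0)_{i,jk}]$, so the displayed identity reads $\sqrt{h_nT}\,\esps[\Delta_n^c(\theta_0)_i\Gamma_n^c(\theta_0)_{jk}]=-\esps[C_n^c(\theta_0)_{i,jk}+Q_n^c(\theta_0)_{i,jk}]$, which is exactly the claim.

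The computation is essentially mechanical; the only point requiring care is the bookkeeping in the second step — checking that after replacing $dN^{n,c}$ by its compensator the $\partial_{\theta,jk}^2\lambda^{n,c}$ contributions cancel and the genuinely information-like term $\mathcal{J}_n$ survives with the correct sign — together with verifying the square-integrability needed for the covariation identity, both of which are immediate from the moment bounds of Lemmas \ref{boundedMoments} and \ref{lemmaBoundedDerivatives}.
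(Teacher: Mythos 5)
Your proof is correct and follows essentially the same route as the paper's: the paper likewise writes $\sqrt{h_nT}\,\Delta_n^c(\theta_0)$ and the martingale part of $\partial_\theta^2 l_n^c(\theta_0)$ as compensated point-process integrals and invokes the predictable-covariation identity $\langle\int u\,d\tilde N^{n,c},\int v\,d\tilde N^{n,c}\rangle_t=\int_0^t u_sv_s\lambda^{n,c}(s,\theta_0)\,ds$, leaving the bookkeeping (the cancellation producing the $\mathcal J_n$ term and hence $Q_n^c$, and the covariation producing $C_n^c$) implicit where you spell it out. No gaps; your explicit handling of $\esps$ and of square-integrability is consistent with how the paper treats these points elsewhere.
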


\begin{proof}
Note that for two $\mathbb{L}_2$ bounded processes $(u_s)_s$, $(v_s)_s$, we have 
\beas 
\l\langle \int_0^{.}{u_s \{dN_s^{n,c} - \lambda^{n,c}(s,\theta_0) ds\}},\int_0^{.}{v_s \{dN_s^{n,c} - \lambda^{n,c}(s,\theta_0) ds\}} \r\rangle_t = \int_0^{t}{u_s v_s  \lambda^{n,c}(s,\theta_0) ds}
\eeas
 Taking expectation, this yields
\beas  
 \esps\l[ \int_0^{t}{u_s \{dN_s^{n,c} - \lambda^{n,c}(s,\theta_0) ds\}}\int_0^{t}{v_s \{dN_s^{n,c} - \lambda^{n,c}(s,\theta_0) ds\}}\r] = \esps \l[ \int_0^{t}{u_s v_s  \lambda^{n,c}(s,\theta_0) ds }\r]
 \eeas
 Formula (\ref{CovarianceExpression}) is then obtained directly from the expression of $\Gamma_n^c(\theta_0)$ and $ \Delta_n^c(\theta_0)$. 
\end{proof}

Now, by the same argument as for the proof of (\ref{lemmaConvConstant}), we have for any integer $p \geq 1$ and any $\epsilon \in (0, 1)$, 
\bea
\sup_{\theta_0 \in K}  h_n^{\epsilon \frac{p}{2}} \esp_{\theta_0,n}\l|C_n^c(\theta_0) - C(\theta_0)\r|^p \to^\proba 0,
\label{eqCConstant}
\eea
and
\bea 
\sup_{\theta_0 \in K}  h_n^{\epsilon \frac{p}{2}} \l|\esp_{\theta_0,n}\l[Q_n^c(\theta_0) - Q(\theta_0)\r]\r|^p \to^\proba 0
\label{eqQConstant}
\eea 
where $C$ and $Q$ were defined respectively in (\ref{paramC}) and (\ref{paramQ}). Before we turn to the limiting expression of the term 
$$\esps[\overline{\Gamma}_n^c(\theta_0)(\widehat{\Theta}_n- \theta_0)]_i$$ 
in our expansion of the bias in terms of $C(\theta_0)+Q(\theta_0)$, we need to control the convergence of $\Gamma_n^c(\theta_0)^{-1}$ toward $\Gamma(\theta_0)^{-1}$. We define $c_0 = \min_{\theta_0 \in K} \min \{ c \in \reels_+ | \forall x \in \reels^3 - \{0\}, x^T \Gamma(\theta_0) x \geq c |x|_2^2   > 0 \}  $, the smallest eigenvalue of all the matrices $\Gamma(\theta_0)$.  We consider the sequence of events $\mathbb{B}_n(\theta_0) =  \{\forall x \in \reels^3 - \{0\}, x^T \Gamma_n^c(\theta_0) x \geq \frac{c_0}{2} |x|_2^2\}$, and their complements $\mathbb{B}_n(\theta_0)^c$. 
\begin{lemma*}
We have, for any integer $p \geq 1$ and any $\epsilon \in (0,1)$ that 
\bd
		\im[{\bf (i)}] $\sup_{\theta_0 \in K} \probas \l[ \mathbb{B}_n(\theta_0)^c\r]= O_\proba\l(h_n^{-\epsilon \frac{p}{2}}\r)$.
		\im[{\bf (ii)}] $\sup_{\theta_0 \in K}  h_n^{\epsilon \frac{p}{2}} \esps \l[ \l| \Gamma_n^c(\theta_0)^{-1} - \Gamma(\theta_0)^{-1} \r| \mathbf{1}_{\mathbb{B}_n}\r] \to^\proba 0$.
\ed
\label{lemmaConvInv}
\end{lemma*}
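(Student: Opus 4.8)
The plan is to derive both parts from the uniform $\mathbb{L}^p$ estimate (\ref{eqGammaConstant}) on $\Gamma_n^c(\theta_0) - \Gamma(\theta_0)$ together with elementary perturbation theory for the $3\times 3$ symmetric matrices $\Gamma_n^c(\theta_0)$, using crucially that $c_0>0$ is a \emph{uniform} lower bound for the eigenvalues of $\Gamma(\theta_0)$ over the compact set $K$.

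For \textbf{(i)}, I would first establish the deterministic inclusion $\mathbb{B}_n(\theta_0)^c \subseteq \{\, |\Gamma_n^c(\theta_0)-\Gamma(\theta_0)| \geq c_1 \,\}$ for some constant $c_1>0$ depending only on $c_0$. Indeed, on $\mathbb{B}_n(\theta_0)^c$ there exists a unit vector $x$ with $x^T\Gamma_n^c(\theta_0)x < c_0/2$, while $x^T\Gamma(\theta_0)x \geq c_0$ by definition of $c_0$, so $x^T(\Gamma(\theta_0)-\Gamma_n^c(\theta_0))x > c_0/2$; hence the operator norm of $\Gamma(\theta_0)-\Gamma_n^c(\theta_0)$ is at least $c_0/2$, and by equivalence of norms on $\reels^{3\times 3}$ the $|\cdot|$-norm is at least some $c_1>0$. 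Applying Markov's inequality at an order $q\geq p$ then gives $\probas[\mathbb{B}_n(\theta_0)^c] \leq c_1^{-q}\,\espc|\Gamma_n^c(\theta_0)-\Gamma(\theta_0)|^q$, and plugging in (\ref{eqGammaConstant}) with the free exponent there taken close enough to $1$ yields $\sup_{\theta_0\in K}\probas[\mathbb{B}_n(\theta_0)^c]=O_\proba(h_n^{-\epsilon p/2})$, the uniformity in $\theta_0$ being inherited directly from the uniformity in (\ref{eqGammaConstant}).

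For \textbf{(ii)}, note that on $\mathbb{B}_n(\theta_0)$ the matrix $\Gamma_n^c(\theta_0)$ is symmetric with smallest eigenvalue at least $c_0/2$, hence invertible with the operator norm of its inverse at most $2/c_0$; also $\Gamma(\theta_0)^{-1}$ has operator norm at most $1/c_0$ uniformly over $K$. Using the resolvent identity $\Gamma_n^c(\theta_0)^{-1}-\Gamma(\theta_0)^{-1}=\Gamma_n^c(\theta_0)^{-1}\big(\Gamma(\theta_0)-\Gamma_n^c(\theta_0)\big)\Gamma(\theta_0)^{-1}$ together with the equivalence of norms, one obtains the pointwise bound $\big|\Gamma_n^c(\theta_0)^{-1}-\Gamma(\theta_0)^{-1}\big|\,\mathbf{1}_{\mathbb{B}_n(\theta_0)} \leq C\,\big|\Gamma_n^c(\theta_0)-\Gamma(\theta_0)\big|$, with $C$ depending only on $c_0$. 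Taking $\espc$, multiplying by $h_n^{\epsilon p/2}$, and invoking (\ref{eqGammaConstant}) (with Jensen's inequality to pass from a high moment to the relevant moment, and the free exponent chosen so that the resulting rate strictly beats $h_n^{-\epsilon p/2}$) then gives the desired convergence to $0$, uniformly in $\theta_0$.

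The only delicate point is the final bookkeeping of exponents: (\ref{eqGammaConstant}) supplies the rate $h_n^{-\epsilon' q/2}$ only for $\epsilon'\in(0,1)$ rather than the optimal $h_n^{-q/2}$, so to absorb the prefactor $h_n^{\epsilon p/2}$ one must take $\epsilon'$ sufficiently close to $1$ and, if needed, the moment order $q$ sufficiently large; this is always possible and is the reason the statement is phrased for each fixed $\epsilon\in(0,1)$. Everything else — the eigenvalue perturbation argument, the resolvent identity, and Markov's/Jensen's inequalities — is routine, and all constants are uniform in $\theta_0\in K$ by compactness of $K$, continuity of $\theta_0\mapsto\Gamma(\theta_0)$, and the uniform form of (\ref{eqGammaConstant}).
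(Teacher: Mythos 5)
Your argument follows the same route as the paper's proof: for \textbf{(i)}, the eigenvalue--perturbation inclusion $\mathbb{B}_n(\theta_0)^c \subseteq \{ |\Gamma_n^c(\theta_0)-\Gamma(\theta_0)| \geq c_1 \}$ (the paper phrases this through the equivalence of $|M|$ with $\sup_{x \neq 0} |x^T M x|/|x|_2^2$ on symmetric matrices), followed by Markov's inequality and the uniform moment bound (\ref{eqGammaConstant}); for \textbf{(ii)}, the resolvent identity together with the fact that on $\mathbb{B}_n(\theta_0)$ the smallest eigenvalue of $\Gamma_n^c(\theta_0)$ is at least $c_0/2$, yielding the pointwise bound $|\Gamma_n^c(\theta_0)^{-1}-\Gamma(\theta_0)^{-1}|\,\mathbf{1}_{\mathbb{B}_n(\theta_0)} \leq C\, |\Gamma_n^c(\theta_0)-\Gamma(\theta_0)|$. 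Part \textbf{(i)} is correct as you wrote it, and uniformity in $\theta_0$ is indeed inherited from (\ref{eqGammaConstant}).

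The exponent bookkeeping you propose for \textbf{(ii)} does not work, however. Reading the display literally (first moment of the difference of inverses, prefactor $h_n^{\epsilon p/2}$), no choice of the free exponent $\epsilon' \in (0,1)$ and of a higher moment order $q$ can make the rate "strictly beat" $h_n^{-\epsilon p/2}$: Jensen gives $\esps |\Gamma_n^c(\theta_0)-\Gamma(\theta_0)| \leq (\esps |\Gamma_n^c(\theta_0)-\Gamma(\theta_0)|^q)^{1/q} = O_\proba(h_n^{-\epsilon'/2})$, so the $q$ cancels, and the centered empirical information is genuinely of order $h_n^{-1/2}$; hence for $p \geq 2$ and $\epsilon$ close to $1$ the quantity $h_n^{\epsilon p/2}\esps[\,\cdot\,]$ would in fact diverge. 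The resolution is that the statement is meant with the $p$-th power on the norm inside the expectation (this matches the pattern of (\ref{eqGammaConstant}) and the way the lemma is invoked via H\"{o}lder's inequality in Lemma \ref{ordre2}). With that reading, your pointwise resolvent bound raised to the power $p$, combined with (\ref{eqGammaConstant}) at order $p$ and free exponent $\epsilon' \in (\epsilon,1)$, gives $h_n^{\epsilon p/2} \esps[ |\Gamma_n^c(\theta_0)^{-1}-\Gamma(\theta_0)^{-1}|^p \mathbf{1}_{\mathbb{B}_n(\theta_0)}] = O_\proba(h_n^{(\epsilon-\epsilon')p/2}) \to 0$ uniformly in $\theta_0$, with no Jensen step needed; this is exactly the paper's one-line argument.
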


\begin{proof}
We start by showing {\bf (i)}. We recall that in our notation convention, the symbol $|x|$ stands for $\sum_i |x_i|$ for any vector or matrix. Clearly, we have that 
\bea 
\probas \l[ \mathbb{B}_n(\theta_0)^c\r] &\leq& \probas  \l\{\forall x \in \reels^3 - \{0\}, \frac{|x^T (\Gamma_n^c(\theta_0) - \Gamma(\theta_0)) x|}{|x|_2^2} > \frac{c_0}{2} \r\},
\label{dominationEigenValue}
\eea
and by equivalence of the norms $|M|$ and $\sup_{x \in \reels^3 - \{0\}}\frac{|x^T M x|}{|x|_2^2}$ on the space of symmetric matrices of $\reels^3$,  (\ref{dominationEigenValue}) implies the existence of some constant $\eta > 0$ such that 
\beas 
\probas \l[ \mathbb{B}_n(\theta_0)^c\r] &\leq& \probas  \l[|\Gamma_n^c(\theta_0) - \Gamma(\theta_0)| > \eta c_0 \r]\\
&\leq& (\eta c_0)^{-p} \esps  |\Gamma_n^c(\theta_0) - \Gamma(\theta_0)|^p,
\eeas
where Markov's inequality was used at the last step. {\bf (i)} thus follows from (\ref{eqGamma2}). Moreover, {\bf (ii)} is easily obtained using the elementary result $|A^{-1} - B^{-1}| = |B^{-1}(B-A)A^{-1}| \leq |A^{-1}|_{\infty}|B^{-1}|_{\infty} |B-A|$ applied to $ \Gamma_n^c(\theta_0)$ and $\Gamma(\theta_0)$ on the set $\mathbb{B}_n(\theta_0)$.
\end{proof}

\begin{lemma*}
Let $\epsilon \in (0,1)$ and $i \in \{0,1,2\}$. The following expansion holds.
\bea 
\esps[\overline{\Gamma}_n^c(\theta_0)(\widehat{\Theta}_n- \theta_0)]_i = -\frac{\Gamma(\theta_0)^{jk}\l\{C(\theta_0)_{k,ij}+Q(\theta_0)_{k,ij}\r\}}{h_nT} + O_\proba\l(h_n^{-\epsilon \l(\kappa \wedge \frac{3}{2}\r)}\r).
\eea
\label{ordre2}
\end{lemma*}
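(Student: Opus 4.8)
The plan is to start from the Taylor expansion of the score around $\widehat{\Theta}_n$ for the constant-parameter model, solve it for $\widehat{\Theta}_n-\theta_0$ up to higher-order terms, and then substitute this expression into $\esps[\overline{\Gamma}_n^c(\theta_0)(\widehat{\Theta}_n-\theta_0)]$ so that the leading term becomes a product of $\overline{\Gamma}_n^c$ with $\Delta_n^c$, whose expectation is precisely the quantity computed in Lemma \ref{lemmaCovariance}. Concretely, writing the first-order inversion of (\ref{taylorConstant}) restricted to the good set $\mathbb{B}_n(\theta_0)$ where $\Gamma_n^c(\theta_0)$ is invertible, one gets
\bea
\widehat{\Theta}_n - \theta_0 = \Gamma_n^c(\theta_0)^{-1}\frac{\Delta_n^c(\theta_0)}{\sqrt{h_n T}} + O_\proba\l(h_n^{-1}\r) + (\text{complementary-set and model-deviation terms}),
\eea
where the $O_\proba(h_n^{-1})$ comes from the quadratic term $\tfrac12 \partial_\theta^3 l_n^c(\xi_n)(\widehat{\Theta}_n-\theta_0)^{\otimes 2}$ via (\ref{eqPartial}) and Theorem \ref{convergence} (the rescaled MLE has bounded conditional moments of order up to $2\kappa$, and $2\kappa>2$ gives control of the square), and the complementary-set term is negligible because $\probas[\mathbb{B}_n(\theta_0)^c] = O_\proba(h_n^{-\epsilon p/2})$ by Lemma \ref{lemmaConvInv}(i) combined with the fact that $\widehat{\Theta}_n - \theta_0$ is bounded.

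Next I would replace $\Gamma_n^c(\theta_0)^{-1}$ by $\Gamma(\theta_0)^{-1}$, controlling the error through Lemma \ref{lemmaConvInv}(ii), and plug into the covariance:
\bea
\esps[\overline{\Gamma}_n^c(\theta_0)(\widehat{\Theta}_n- \theta_0)]_i = \frac{1}{\sqrt{h_n T}}\,\Gamma(\theta_0)^{jk}\,\esps\l[\overline{\Gamma}_n^c(\theta_0)_{ij}\,\Delta_n^c(\theta_0)_k\r] + O_\proba\l(h_n^{-\epsilon(\kappa\wedge 3/2)}\r).
\eea
Here the residual rate $h_n^{-\epsilon(\kappa\wedge 3/2)}$ is the worst among: the $O_\proba(h_n^{-1})$ quadratic remainder multiplied by the $O_\proba(1)$ factor $\overline{\Gamma}_n^c$ (giving effectively the $3/2$ branch after accounting for the extra $h_n^{-1/2}$ gain that the Cauchy--Schwarz splitting with a centered factor produces), the model-deviation errors from Lemma \ref{lemma4Inequalities} which are $O_\proba(h_n^{-\epsilon\kappa})$, and the $\mathbb{B}_n^c$-contribution which is negligible for $p$ large. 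Since $\esps[\overline{\Gamma}_n^c(\theta_0)_{ij}\Delta_n^c(\theta_0)_k] = \esps[\Gamma_n^c(\theta_0)_{ij}\Delta_n^c(\theta_0)_k]$ (because $\esps[\Delta_n^c]=0$ by the martingale form (\ref{martingaleDeltaC})), Lemma \ref{lemmaCovariance} gives $\sqrt{h_n T}\,\esps[\Delta_n^c(\theta_0)_k\Gamma_n^c(\theta_0)_{ij}] = -\esps[C_n^c(\theta_0)_{k,ij}+Q_n^c(\theta_0)_{k,ij}]$, and then (\ref{eqCConstant})--(\ref{eqQConstant}) let me replace this by $-(C(\theta_0)_{k,ij}+Q(\theta_0)_{k,ij})$ up to $O_\proba(h_n^{-\epsilon p/2})$. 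Dividing by $h_n T$ yields exactly
\bea
\esps[\overline{\Gamma}_n^c(\theta_0)(\widehat{\Theta}_n- \theta_0)]_i = -\frac{\Gamma(\theta_0)^{jk}\l\{C(\theta_0)_{k,ij}+Q(\theta_0)_{k,ij}\r\}}{h_nT} + O_\proba\l(h_n^{-\epsilon\l(\kappa\wedge\frac32\r)}\r).
\eea

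The main obstacle I anticipate is the bookkeeping of the remainder rates — in particular, verifying that the quadratic Taylor term, after being paired with the \emph{centered} factor $\overline{\Gamma}_n^c$ and handled by Cauchy--Schwarz together with the $L^p$-bounds of Theorem \ref{convergence} and (\ref{eqGamma2}), genuinely contributes at order $h_n^{-\epsilon(3/2)}$ rather than merely $h_n^{-1}$; this is where the exponent $\kappa\wedge 3/2$ and the constraint $2\kappa>2$ enter, and one must be careful that all the moments invoked (of order a bit above the nominal ones, as in the choice of $L'\in(L,2\kappa)$ in the proof of Theorem \ref{convergence}) are available. The rest is a routine, if lengthy, combination of the already-established lemmas.
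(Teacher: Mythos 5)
Your proposal is correct and follows essentially the same route as the paper's proof: Taylor expansion of the score restricted to the good set $\mathbb{B}_n(\theta_0)$, replacement of $\Gamma_n^c(\theta_0)^{-1}$ by $\Gamma(\theta_0)^{-1}$ via Lemma \ref{lemmaConvInv}, identification of the leading term through Lemma \ref{lemmaCovariance} together with (\ref{eqCConstant})--(\ref{eqQConstant}), and control of the quadratic Taylor remainder by pairing it with the centered factor $\overline{\Gamma}_n^c(\theta_0)$. The ``extra gain'' you flag as the main obstacle is exactly what the paper exploits: $\overline{\Gamma}_n^c(\theta_0)$ is of order $h_n^{-\epsilon/2}$ in the $\mathbb{L}^p$ sense by (\ref{eqGammaConstant}) and (\ref{eqGamma2}), so H\"{o}lder's inequality combined with Theorem \ref{convergence} applied to $x \mapsto (x_l x_m)^{L/2}$, $L \in (2,2\kappa)$, gives the remainder at order $O_\proba\l(h_n^{-\frac{3\epsilon}{2}}\r)$ rather than $h_n^{-1}$.
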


\begin{proof}
Note first that in view of Lemma \ref{lemmaConvInv} {\bf (i)} along with H\"{o}lder's inequality, we have that $\esps[\overline{\Gamma}_n^c(\theta_0)(\widehat{\Theta}_n- \theta_0)] = \esps[\overline{\Gamma}_n^c(\theta_0)(\widehat{\Theta}_n- \theta_0)\mathbf{1}_{\mathbb{B}_n(\theta_0)}] + o_\proba\l(h_n^{- \frac{3}{2}}\r)$. Thus we can assume without loss of generality the presence of the indicator of the event $\mathbb{B}_n(\theta_0)$ in the expectation of the left-hand side of (\ref{ordre2}). Take $\epsilon \in (0,1)$ and $\tilde{\epsilon} \in (\epsilon,1)$. As a consequence of (\ref{taylorConstant}), we have the representation,
\bea 
\widehat{\Theta}_n- \theta_0 = \inv{\sqrt{h_nT}}\Gamma_n^c(\theta_0)^{-1}\Delta_n^c(\theta_0) +\Gamma_n^c(\theta_0)^{-1}\frac{\partial_\theta^3 l_n^c(\xi_n)(\widehat{\Theta}_n- \theta_0)^{\otimes 2}}{2h_nT} + O_{\proba}\l(h_n^{-\tilde{\epsilon}\kappa}\r),
\label{expansionTheta}
\eea
 on the set $\mathbb{B}_n(\theta_0)$, where the residual term $O_{\proba}\l(h_n^{-\tilde{\epsilon}\kappa}\r)$ admits moments of any order with respect to the operator $\esps$. We inject (\ref{expansionTheta}) in the expectation and get
\beas 
\esps[\overline{\Gamma}_n^c(\theta_0)(\widehat{\Theta}_n- \theta_0)] &=& \inv{\sqrt{h_nT}}\esps[\overline{\Gamma}_n^c(\theta_0)\Gamma_n^c(\theta_0)^{-1}\Delta_n^c(\theta_0)\mathbf{1}_{\mathbb{B}_n(\theta_0)}]\\
&+& \esps \l[\overline{\Gamma}_n^c(\theta_0)\Gamma_n^c(\theta_0)^{-1}\frac{\partial_\theta^3 l_n^c(\xi_n)(\widehat{\Theta}_n- \theta_0)^{\otimes 2}}{2h_nT}\mathbf{1}_{\mathbb{B}_n(\theta_0)}\r]\\
&+& O_\proba(h_n^{-\epsilon \kappa}),
\eeas
where the residual term $O_\proba(h_n^{-\epsilon \kappa})$ is obtained by H\"{o}lder's inequality using the fact that $\epsilon < \tilde{\epsilon}$. By Lemma \ref{lemmaConvInv} {\bf (ii)}, the first term admits the expansion
\bea 
\inv{\sqrt{h_nT}}\esps[\overline{\Gamma}_n^c(\theta_0)\Gamma(\theta_0)^{-1}\Delta_n^c(\theta_0)] + O_\proba\l(h_n^{-\frac{3\epsilon}{2}}\r)
\label{expansionCov},
\eea
where we used H\"{o}lder's inequality to control $\inv{\sqrt{h_nT}}\esps\l[\overline{\Gamma}_n^c(\theta_0)(\Gamma_n^c(\theta_0)^{-1} -\Gamma(\theta_0)^{-1}) \Delta_n^c(\theta_0)\r]$ and we neglected the effect of the indicator function by Lemma \ref{lemmaConvInv} {\bf (i)}. For any $i \in \{0,1,2\}$, we develop the matrix product in (\ref{expansionCov}), use Lemma \ref{lemmaCovariance} along with (\ref{eqCConstant}), and this leads to the estimate
\bea
\inv{\sqrt{h_nT}}\esps[\overline{\Gamma}_n^c(\theta_0)\Gamma(\theta_0)^{-1}\Delta_n^c(\theta_0)]_i = \frac{\Gamma(\theta_0)^{jk}\l\{C(\theta_0)_{k,ij}+Q(\theta_0)_{k,ij}\r\}}{h_nT} + O_\proba\l(h_n^{-\frac{3\epsilon}{2}}\r).
\eea
It remains to control the term $\esps \l[\overline{\Gamma}_n^c(\theta_0)\Gamma_n^c(\theta_0)^{-1}\frac{\partial_\theta^3 l_n^c(\xi_n)(\widehat{\Theta}_n- \theta_0)^{\otimes 2}}{2h_nT}\mathbf{1}_{\mathbb{B}_n(\theta_0)}\r]$. Take $L \in (2,2\kappa)$. By boundedness of moments of $h_n^{\frac{\epsilon}{2}}\overline{\Gamma}_n^c(\theta_0)_{ij}\Gamma_n^c(\theta_0)^{jk}\frac{\partial_{\theta,klm}^3 l_n^c(\theta)}{2h_nT}\mathbf{1}_{\mathbb{B}_n(\theta_0)}$, for any $(i,j,k,l,m)$ and uniformly in $\theta_0 \in K$, we have
$$\esps \l[\overline{\Gamma}_n^c(\theta_0)_{ij}\Gamma_n^c(\theta_0)^{jk}\frac{\partial_{\theta,klm}^3 l_n^c(\xi_n)(\widehat{\Theta}_n- \theta_0)_l(\widehat{\Theta}_n- \theta_0)_m}{2h_nT}\mathbf{1}_{\mathbb{B}_n(\theta_0)}\r] $$
\beas
&\leq& Kh_n^{-\frac{\epsilon}{2}}\esps \l[\l|(\widehat{\Theta}_n- \theta_0)_l(\widehat{\Theta}_n- \theta_0)_m\r|^{\frac{L}{2}}\r]^{\frac{2}{L}}\\
&=& O_\proba\l(h_n^{-\frac{3\epsilon}{2} }\r),
\eeas
where  H\"{o}lder's inequality was applied for the first inequality, and Theorem \ref{convergence} was used with the function $f : x \to \l(x_l x_m\r)^{\frac{L}{2}}$, which is of polynomial growth of order $L$, to get the final estimate.  
\end{proof}
Finally, we derive the expansion of $\inv{2h_n T}\esps[\partial_\theta^3 l_n^c(\xi_n)(\widehat{\Theta}_n- \theta_0)^{\otimes 2}]$. First note that for any integer $p \geq 1$ and any $\epsilon \in (0,1)$, 
\bea
\sup_{\theta_0 \in K}  h_n^{\epsilon \frac{p}{2}} \esp_{\theta_0,n}\l|\inv{h_n T} \partial_\theta^3 l_n^c(\theta_0) - K(\theta_0)\r|^p \to^\proba 0,
\label{eqKConstant}
\eea
where $K(\theta_0)$ was introduced in (\ref{paramK}). The next lemma is proved the same way as for Lemma \ref{ordre2}.

\begin{lemma*}
Let $\epsilon \in (0,1)$ and $i \in \{0,1,2\}$. We have the expansion
\bea 
\inv{2h_n T}\esps[\partial_\theta^3 l_n^c(\xi_n)(\widehat{\Theta}_n- \theta_0)^{\otimes 2}]_i =  \frac{\Gamma(\theta_0)^{jk}K(\theta_0)_{ijk}}{2h_nT} + O_\proba\l(h_n^{-\epsilon \l(\kappa \wedge \frac{3}{2}\r)}\r).
\eea
\label{ordre3}
\end{lemma*}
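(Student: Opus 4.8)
The proof follows the same scheme as that of Lemma \ref{ordre2}, with the centered matrix $\overline{\Gamma}_n^c(\theta_0)$ replaced by $(h_nT)^{-1}\partial_\theta^3 l_n^c$; the new feature is that this quantity no longer tends to zero but to the deterministic tensor $K(\theta_0)$, so its leading order must be extracted rather than merely bounded. As in the rest of this section I write $\esps$ for $\espc$, drop the block index, and keep all estimates uniform in it.

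First I would insert the indicator $\mathbf 1_{\mathbb{B}_n(\theta_0)}$ of the non-degeneracy event. Since $\widehat{\Theta}_n-\theta_0$ ranges in a fixed compact set ($\widehat{\Theta}_n,\theta_0\in K$) and $(h_nT)^{-1}\sup_{\theta\in K}|\partial_\theta^3 l_n^c(\theta)|$ has $\esps$-moments of every order by (\ref{eqPartial}), Hölder's inequality together with $\sup_{\theta_0}\probas[\mathbb{B}_n(\theta_0)^c]=O_\proba(h_n^{-\epsilon p/2})$ (Lemma \ref{lemmaConvInv}(i)) makes the contribution on $\mathbb{B}_n(\theta_0)^c$ negligible for $p$ large. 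On $\mathbb{B}_n(\theta_0)$ I would replace $\partial_\theta^3 l_n^c(\xi_n)$ by $h_nT\,K(\theta_0)$ in two steps: (a) a first-order expansion of the third derivative gives $|(h_nT)^{-1}\partial_\theta^3 l_n^c(\xi_n)-(h_nT)^{-1}\partial_\theta^3 l_n^c(\theta_0)|\le (h_nT)^{-1}\sup_{\theta\in K}|\partial_\theta^4 l_n^c(\theta)|\,|\widehat{\Theta}_n-\theta_0|$, where the fourth-derivative factor has $\esps$-moments of every order by the same point-process estimates (Lemma \ref{lemmaMomentsGeneral}, Lemma \ref{boundedMoments}, Lemma \ref{lemmaBoundedDerivatives}) that give (\ref{eqPartial}); (b) $(h_nT)^{-1}\partial_\theta^3 l_n^c(\theta_0)-K(\theta_0)=O_\proba(h_n^{-\epsilon p/2})$ for any $p$ by (\ref{eqKConstant}). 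Multiplying these errors by $(\widehat{\Theta}_n-\theta_0)^{\otimes 2}$ and controlling the moments of $\widehat{\Theta}_n-\theta_0$ by Theorem \ref{convergence} (interpolating against the diameter of $K$, since those moments are finite only up to order $<2\kappa$), the left-hand side of the lemma equals $\tfrac12 K(\theta_0)_{ijk}\,\esps[(\widehat{\Theta}_n-\theta_0)_j(\widehat{\Theta}_n-\theta_0)_k]+O_\proba(h_n^{-\epsilon(\kappa\wedge 3/2)})$; this is exactly where the truncation at $2\kappa$ produces the exponent $\kappa\wedge\tfrac32$ (when $2\kappa\ge 3$ one gets the full $h_n^{-3/2}$, otherwise only $h_n^{-\epsilon\kappa}$).

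It remains to evaluate $\esps[(\widehat{\Theta}_n-\theta_0)_j(\widehat{\Theta}_n-\theta_0)_k]$. Using the representation (\ref{expansionTheta}), on $\mathbb{B}_n(\theta_0)$ one has $\widehat{\Theta}_n-\theta_0=(h_nT)^{-1/2}\Gamma_n^c(\theta_0)^{-1}\Delta_n^c(\theta_0)+R_n$ with $R_n=\Gamma_n^c(\theta_0)^{-1}\frac{\partial_\theta^3 l_n^c(\xi_n)(\widehat{\Theta}_n-\theta_0)^{\otimes 2}}{2h_nT}+O_\proba(h_n^{-\tilde{\epsilon}\kappa})$. Keeping the explicit factors $(h_nT)^{-1}$ and $(h_nT)^{-2}$ visible and applying Hölder only to the low moments of $(\widehat{\Theta}_n-\theta_0)^{\otimes 2}$ (which exist as $\kappa>1$), the cross term and $R_n^{\otimes 2}$ contribute $O_\proba(h_n^{-\epsilon(\kappa\wedge 3/2)})$, while replacing $\Gamma_n^c(\theta_0)^{-1}$ by $\Gamma(\theta_0)^{-1}$ via Lemma \ref{lemmaConvInv}(ii) costs only $O_\proba(h_n^{-1-\epsilon p/2})$. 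The main term is then $(h_nT)^{-1}\Gamma(\theta_0)^{-1}\esps[\Delta_n^c(\theta_0)\Delta_n^c(\theta_0)^{T}]\Gamma(\theta_0)^{-1}$, and here I would use the Bartlett-type identity for the well-specified constant model: because under $\esps$ the process $N^{n,c}$ is a parametric Hawkes process with intensity $\lambda^{n,c}(\cdot,\theta_0)$ (as in the proof of Lemma \ref{lemmaConvConstant}), one checks $\esps[\Delta_n^c(\theta_0)\Delta_n^c(\theta_0)^{T}]=(h_nT)^{-1}\esps[\langle M_n^c(\cdot,\theta_0)\rangle_{h_nT}]=\esps[\Gamma_n^c(\theta_0)]=\Gamma(\theta_0)+O_\proba(h_n^{-\epsilon p/2})$, the last equality by (\ref{eqGammaConstant}). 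Hence $\esps[(\widehat{\Theta}_n-\theta_0)_j(\widehat{\Theta}_n-\theta_0)_k]=(h_nT)^{-1}\Gamma(\theta_0)^{jk}+O_\proba(h_n^{-\epsilon(\kappa\wedge 3/2)})$, and substituting back, using that $K(\theta_0)$ is a bounded symmetric tensor, gives $\frac{\Gamma(\theta_0)^{jk}K(\theta_0)_{ijk}}{2h_nT}+O_\proba(h_n^{-\epsilon(\kappa\wedge 3/2)})$, as claimed.

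The main obstacle is the error bookkeeping. Since $\sqrt{h_n}(\widehat{\Theta}_n-\theta_0)$ has bounded moments only up to order $<2\kappa$, every product containing $(\widehat{\Theta}_n-\theta_0)^{\otimes 2}$ or $(\widehat{\Theta}_n-\theta_0)^{\otimes 4}$ must be split by Hölder so that this poorly-integrable factor is raised only to a small power, the excess power being absorbed into the diameter of $K$; verifying that each resulting term is genuinely $O_\proba(h_n^{-\epsilon(\kappa\wedge 3/2)})$ is what makes the argument lengthy. The only ingredient not already in the excerpt is the uniform fourth-derivative bound $\esps[((h_nT)^{-1}\sup_{\theta\in K}|\partial_\theta^4 l_n^c(\theta)|)^p]<\infty$, which follows, exactly as for (\ref{eqPartial}), from the smoothness in $\theta=(\nu,a,b)$ of $\lambda^{i,n,c}(t,\theta)=\nu+\int_0^{t-}ae^{-b(t-s)}dN_s^{i,n,c}$ and the moment estimates of Lemma \ref{lemmaMomentsGeneral} and Lemma \ref{boundedMoments}.
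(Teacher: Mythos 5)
Your proposal is correct and follows essentially the same route as the paper's proof: both hinge on injecting the expansion (\ref{expansionTheta}) only to first order to obtain $\esps[(\widehat{\Theta}_n-\theta_0)_j(\widehat{\Theta}_n-\theta_0)_k]=\Gamma(\theta_0)^{jk}/(h_nT)+O_\proba\l(h_n^{-\epsilon(\kappa\wedge \frac{3}{2})}\r)$ (rather than invoking Theorem \ref{convergence} directly, which would only give the weaker $o_\proba(h_n^{-1})$), on (\ref{eqKConstant}) together with a uniform fourth-derivative moment bound to identify $(2h_nT)^{-1}\partial_\theta^3 l_n^c(\xi_n)$ with $\tfrac{1}{2}K(\theta_0)$, and on H\"older's inequality with Theorem \ref{convergence} to control the remaining coupling term. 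The only organizational difference is that the paper splits the expectation of the product into a product of expectations plus a covariance involving the centered third derivative, whereas you substitute $K(\theta_0)$ directly and make explicit the Bartlett-type identity $\esps[\Delta_n^c(\theta_0)\Delta_n^c(\theta_0)^{T}]=\esps[\Gamma_n^c(\theta_0)]$ (which the paper leaves implicit, and which is indeed available via Lemma \ref{lemmaConvConstant} and the martingale bracket, just as your fourth-derivative bound is already covered by Lemma \ref{lemmaBoundedDerivatives}); the mathematical content is the same.
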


\begin{proof}
 Consider three indices $i,j,k \in \{0,1,2\}$ and $\epsilon \in (0,1)$. We have the decomposition 
\beas 
\inv{2h_n T}\esps[\partial_{\theta,ijk}^3 l_n^c(\xi_n)(\widehat{\Theta}_n- \theta_0)_j(\widehat{\Theta}_n- \theta_0)_k] &=& \inv{2h_n T}\esps[\partial_{\theta,ijk}^3 l_n^c(\xi_n)]\esps[(\widehat{\Theta}_n- \theta_0)_j(\widehat{\Theta}_n- \theta_0)_k] \\
&+& \inv{2h_n T}\esps[\overline{\partial_{\theta,ijk}^3 l_n^c(\xi_n)}(\widehat{\Theta}_n- \theta_0)_j(\widehat{\Theta}_n- \theta_0)_k].
\eeas
We now remark that the first term admits the expansion
\bea 
\frac{\Gamma(\theta_0)^{jk}K(\theta_0)_{ijk}}{2h_nT} + O_\proba\l(h_n^{ -\epsilon \l(\kappa \wedge \frac{3}{2}\r)}\r),
\eea
 by replacing  $\esps[(\widehat{\Theta}_n- \theta_0)_j(\widehat{\Theta}_n- \theta_0)_k]$ and $\inv{2h_n T}\esps[\partial_{\theta,ijk}^3 l_n^c(\xi_n)]$ by their estimates
\bea 
\esps[(\widehat{\Theta}_n- \theta_0)_j(\widehat{\Theta}_n- \theta_0)_k] = \frac{\Gamma(\theta_0)^{jk}}{h_n T} + O_\proba\l(h_n^{-\epsilon \l(\kappa \wedge \frac{3}{2}\r)}\r),
\label{ExpansionQuad}
\eea
and
\bea 
\inv{2h_n T}\esps\l[\partial_{\theta,ijk}^3 l_n^c(\xi_n)\r] = K(\theta_0)_{ijk} + O_\proba \l( h_n^{- \frac{\epsilon}{2}} \r).
\label{ExpansionDeriv3}
\eea
(\ref{ExpansionQuad}) is obtained by injecting the expansion of $\widehat{\Theta}_n- \theta_0$ in (\ref{expansionTheta}) up to the first order only, and (\ref{ExpansionDeriv3}) is a consequence of (\ref{eqKConstant}) and the uniform boundedness of moments of $\frac{\partial_{\theta}^4 l_n^c(\theta_0)}{h_nT}$ in $\theta_0 \in K$ by Lemma \ref{lemmaBoundedDerivatives} {\bf (ii)}. Note that the expansion (\ref{ExpansionQuad}) is not a direct consequence of Theorem \ref{convergence} applied to $x \to x_jx_k$ since this would lead to the weaker estimate $\frac{\Gamma(\theta_0)^{jk}}{h_n T} + o_\proba(h_n^{-1})$ instead. Finally, the second term is of order $O_\proba\l(h_n^{- \frac{3\epsilon}{2}}\r)$ by H\"{o}lder's inequality along with Theorem \ref{convergence}, and thus we are done.
\end{proof}

Before we turn to the final theorem, we recall for any $j \in \{0,1,2\}$ the expression 
\bea 
b(\theta_0)_j = \half \Gamma(\theta_0)^{ij}\Gamma(\theta_0)^{kl}(K(\theta_0)_{ikl}+ 2\l\{C(\theta_0)_{k,il}+Q(\theta_0)_{k,il}\r\}),
\eea
which was defined in (\ref{paramb}). We are now ready to state the general theorem on bias correction of the local MLE, which we formulate with the block index $i$.

\begin{theorem*} \label{biasCorrection}
Let $\epsilon \in (0,1)$. The bias of the estimator $\widehat{\Theta}_{i,n}$ has the expansion
\bea 
\espc\l[\widehat{\Theta}_{i,n} - \theta_0\r] = \frac{b(\theta_0)}{h_n T} + O_\proba\l(h_n^{-\epsilon \l(\kappa \wedge \frac{3}{2}\r)}\r), 
\label{biasExpansion}
\eea 
uniformly in $i \in \{1, \cdots, B_n\}$ and in $\theta_0 \in K$. Moreover, the bias-corrected estimator $\widehat{\Theta}_{i,n}^{(BC)}$ defined in (\ref{localMLEBC}) has the (uniform) bias expansion
\bea 
\espc\l[\widehat{\Theta}_{i,n}^{BC} - \theta_0\r] =  O_\proba \l(h_n^{-\epsilon \l(\kappa \wedge \frac{3}{2}\r)}\r). 
\label{biasCorrection3}
\eea 
\end{theorem*}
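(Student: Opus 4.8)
The plan is to obtain (\ref{biasExpansion}) by assembling the componentwise identities already available, and then to deduce (\ref{biasCorrection3}) from it by a short Taylor argument. Throughout I keep the conventions of this subsection (the block index $i$ is dropped, $\esps$ stands for $\espc$, Einstein summation is in force), and all estimates are meant uniformly in $i\in\{1,\dots,B_n\}$ and $\theta_0\in K$.

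For (\ref{biasExpansion}) I would start from the identity obtained just above, after applying $\esps$ and dividing by $h_nT$ in the Taylor expansion of the score, namely
\[
\esps\bigl[\overline{\Gamma}_n^c(\theta_0)(\widehat{\Theta}_n-\theta_0)\bigr]+\esps\bigl[\Gamma_n^c(\theta_0)\bigr]\esps\bigl[\widehat{\Theta}_n-\theta_0\bigr]-\esps\Bigl[\tfrac{\partial_\theta^3 l_n^c(\xi_n)}{2h_nT}(\widehat{\Theta}_n-\theta_0)^{\otimes 2}\Bigr]=O_\proba\bigl(h_n^{-\epsilon\kappa}\bigr),
\]
and substitute Lemma \ref{ordre2} for the first term and Lemma \ref{ordre3} for the third; since $O_\proba(h_n^{-\epsilon\kappa})=O_\proba(h_n^{-\epsilon(\kappa\wedge 3/2)})$, this reduces everything to
\[
\esps[\Gamma_n^c(\theta_0)]_{ij}\,\esps[\widehat{\Theta}_n-\theta_0]_j=\frac{\Gamma(\theta_0)^{jk}\{C(\theta_0)_{k,ij}+Q(\theta_0)_{k,ij}\}}{h_nT}+\frac{\Gamma(\theta_0)^{jk}K(\theta_0)_{ijk}}{2h_nT}+O_\proba\bigl(h_n^{-\epsilon(\kappa\wedge 3/2)}\bigr).
\]
It then remains to invert the left-hand matrix. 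Since the smallest eigenvalue of $\Gamma(\theta_0)$ is bounded below by $c_0>0$ uniformly and $|\esps[\Gamma_n^c(\theta_0)]-\Gamma(\theta_0)|=O_\proba(h_n^{-\epsilon/2})$ by (\ref{eqGammaConstant}) together with Jensen's inequality, the matrix $\esps[\Gamma_n^c(\theta_0)]$ is invertible for $n$ large with a uniformly bounded inverse, and replacing $\esps[\Gamma_n^c(\theta_0)]^{-1}$ by $\Gamma(\theta_0)^{-1}$ against the $O(h_n^{-1})$ right-hand side costs only $O_\proba(h_n^{-1-\epsilon/2})=O_\proba(h_n^{-\epsilon(\kappa\wedge 3/2)})$, where I use that $1+\epsilon/2\ge\epsilon(\kappa\wedge 3/2)$ for every $\epsilon\in(0,1)$. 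Using the symmetry of $\Gamma(\theta_0)$ to relabel indices, the leading term collapses to exactly $b(\theta_0)/(h_nT)$ as defined in (\ref{paramb}), which gives (\ref{biasExpansion}).

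For (\ref{biasCorrection3}) I would simply use the definition (\ref{localMLEBC}), which yields
\[
\esps\bigl[\widehat{\Theta}_n^{(BC)}-\theta_0\bigr]=\esps\bigl[\widehat{\Theta}_n-\theta_0\bigr]-\frac{\esps\bigl[b(\widehat{\Theta}_n)\bigr]}{h_nT}.
\]
Because $b$ is Lipschitz on the compact $K$ (being a smooth function of the parametric quantities $\Gamma,K,C,Q$), one has $\bigl|\esps[b(\widehat{\Theta}_n)]-b(\theta_0)\bigr|\le L_b\,\esps|\widehat{\Theta}_n-\theta_0|=O_\proba(h_n^{-1/2})$ by Theorem \ref{convergence}, hence $\tfrac1{h_nT}\esps[b(\widehat{\Theta}_n)]=\tfrac{b(\theta_0)}{h_nT}+O_\proba(h_n^{-3/2})$; since $\epsilon(\kappa\wedge 3/2)\le 3/2$, the residual is $O_\proba(h_n^{-\epsilon(\kappa\wedge 3/2)})$. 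Subtracting this from (\ref{biasExpansion}), the two copies of $b(\theta_0)/(h_nT)$ cancel and (\ref{biasCorrection3}) follows, uniformly in $i$ and $\theta_0$.

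I expect essentially all of the genuine analysis to have been front-loaded into Lemmas \ref{boundedDeviation}, \ref{ordre2}, \ref{ordre3} and \ref{lemmaConvInv}, so that this argument is mostly an assembly; the only point requiring care is the exponent bookkeeping — checking that replacing $\esps[\Gamma_n^c(\theta_0)]^{-1}$ by $\Gamma(\theta_0)^{-1}$ and $\esps[b(\widehat{\Theta}_n)]$ by $b(\theta_0)$ leaves residuals dominated by $h_n^{-\epsilon(\kappa\wedge 3/2)}$ for every admissible $\epsilon$ — together with propagating the uniformity over the block index and the initial value, which in the cited lemmas rests on the polynomial decay of $\probas[\mathbb{B}_n(\theta_0)^c]$ (Lemma \ref{lemmaConvInv}\,(i)) and on $\widehat{\Theta}_{i,n}$ taking its values in the compact set $K$.
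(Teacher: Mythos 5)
Your proposal is correct and follows essentially the same route as the paper: assemble the expectation of the Taylor-expanded score via Lemmas \ref{ordre2} and \ref{ordre3}, invert the resulting linear system while replacing the (near-)$\Gamma(\theta_0)$ matrix by $\Gamma(\theta_0)^{-1}$ at a cost dominated by $h_n^{-\epsilon(\kappa\wedge 3/2)}$, and then cancel the leading bias using $\esps[b(\widehat{\Theta}_n)]=b(\theta_0)+o_\proba(1)$ at a sufficiently fast rate. The only (immaterial) difference is in the last step, where you invoke Lipschitz continuity of $b$ on $K$ together with Theorem \ref{convergence} to get a rate $O_\proba(h_n^{-1/2})$, whereas the paper obtains $O_\proba(h_n^{-\epsilon/2})$ by a calculation analogous to Lemmas \ref{ordre2}--\ref{ordre3}; either rate suffices since $1+\epsilon/2\geq \epsilon(\kappa\wedge\frac{3}{2})$ for $\epsilon\in(0,1)$.
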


\begin{proof} 
We drop the index $i$ in this proof. Take $\epsilon \in (0,1)$ and some $j \in \{0,1,2\}$. By Lemma \ref{ordre2} and Lemma \ref{ordre3}, we have
\beas 
   \esps\l[\Gamma_n^c(\theta_0)\r]_{jk}\esps\l[\widehat{\Theta}_n- \theta_0\r]_k =  \frac{\Gamma(\theta_0)^{kl}\l(K(\theta_0)_{jkl} + 2\l\{C(\theta_0)_{l,jk}+Q(\theta_0)_{l,jk}\r\}\r)}{2h_nT} + O_\proba\l(h_n^{-\epsilon \l(\kappa \wedge \frac{3}{2}\r)}\r),
\eeas
which is a set of simultaneous linear equations. After inversion of this system of equations and application of Lemma \ref{lemmaConvInv}, the expression of the bias becomes for $j \in \{0,1,2\}$,

\beas 
 \esps\l[\widehat{\Theta}_n- \theta_0\r]_j =  \frac{ \Gamma(\theta_0)^{ij}\Gamma(\theta_0)^{kl}\l(K(\theta_0)_{ikl}+ 2\l\{C(\theta_0)_{k,il}+Q(\theta_0)_{k,il}\r\}\r)}{2h_nT} + O_\proba\l(h_n^{-\epsilon \l(\kappa \wedge \frac{3}{2}\r)}\r),
\eeas
which is exactly (\ref{biasExpansion}). Finally, a calculation similar to the proofs of Lemmas \ref{ordre2} and \ref{ordre3} shows that

\bea 
\esps b\l(\widehat{\Theta}_n\r)  = \esps b\l(\theta_0\r) + O_\proba\l(h_n^{-\frac{\epsilon}{2}}\r) 
\eea
so that we have (\ref{biasCorrection3}) and this concludes the proof.
\end{proof}

We conclude by showing the version of the preceding theorem in terms of $\esp_{(i-1)\Delta_n}$. 

\begin{proof}[Proof of Theorem \ref{biasCorrection2}] 
This follows exactly the same argument as for the proof of Theorem \ref{conditionalCLT}. 
\end{proof}

\subsection{Proof of the GCLT}
\label{GCLTproof}
In this section we present the proof of Theorem \ref{GCLT} using a similar martingale approach as in \cite{potiron2016estimating}. Using a different decomposition than ($34$) on p. 22 of the cited work, we obtain following the same line of reasoning as in the proof of (37) on p. 47-48 that a sufficient condition to show that the GCLT holds is
\begin{conditionCstar}
  We have uniformly in $i \in \{ 1, \cdots, B_n \}$ that there exists $\epsilon > 0$ such that
\begin{eqnarray}
  \label{GCLTass1} \var_{(i-1) \Delta_n} \l[  \sqrt{h_n}\l(\widehat{\Theta}_{i,n}^{(BC)} - \theta_{(i-1) \Delta_n}^*\r) \r] & = &  T^{-1}\Gamma\l(\theta_{(i-1) \Delta_n}^*\r)^{-1} + o_\proba (1), \\ 
 \label{GCLTass2} \esp_{(i-1) \Delta_n} \l[  \l|\sqrt{h_n}\l(\widehat{\Theta}_{i,n}^{(BC)}- \theta_{(i-1) \Delta_n}^* \r)\r|^{2+\epsilon} \r] & = &  O_\proba (1), \\
\label{GCLTass3} \esp_{(i-1) \Delta_n} \Big[ \widehat{\Theta}_{i,n}^{(BC)} - \theta_{(i-1) \Delta_n}^* \Big] & = &  o_\proba \l(n^{- 1/2}\r),
\end{eqnarray}
where for any $t \in [0, T]$ and any random variable $X$, $\var_{t} [X ] = \esp_t \big[ (X - \esp_t [X])^2 \big]$.
  \end{conditionCstar}
  
  The above-mentioned approach is based on techniques introduced in \cite{potiron2016estimating}, but it is much different and deeper. Indeed, \cite{potiron2016estimating} provides conditions which in this specific case are hard to verify due to the past correlation of the model. We choose to go through a different path. More specifically, the cited author uses a different decomposition than (\ref{cltCondition}). We thus obtain different conditions which are hard to verify, and this is the main goal of the proofs.

\begin{proof}[Proof of Theorem \ref{GCLT} under  \textnormal{[}$\textnormal{C}^*$\textnormal{]}]
We split the proof into two parts.  

{\bf Step 1}. The first part of the proof consists in showing that
\begin{eqnarray}
\label{paramsmooth}
\Theta = \frac{1}{B_n} \sum_{i=1}^{B_n} \theta_{(i-1)\Delta_n}^*  + o_\proba\l(n^{-1/2}\r).
\end{eqnarray}
Note that (\ref{paramsmooth}) is to be compared to (\ref{outlineparamsmooth}) for the toy model. Moreover, (\ref{paramsmooth}) was also shown in (35) on pp. 46-47 in \cite{potiron2016estimating}, but the parameter process was restricted to follow a continuous It\^{o}-process. To show (\ref{paramsmooth}), it is sufficient to show that
\begin{eqnarray}
\label{paramsmooth1}
\frac{\sqrt{n}}{B_n} \sum_{i=1}^{B_n} \Big| \theta_{(i-1)\Delta_n}^* - \Delta_n^{-1} \int_{(i-1) \Delta_n}^{i \Delta_n} \theta_s^* ds \Big| = o_\proba(1).
\end{eqnarray}
We can bound (\ref{paramsmooth1}) by
\begin{eqnarray}
\label{paramsmooth2}
\frac{\sqrt{n}}{B_n} \sum_{i=1}^{B_n}  \Delta_n^{-1} \int_{(i-1) \Delta_n}^{i \Delta_n} \underbrace{\Big| \theta_{(i-1)\Delta_n}^* - \theta_s^* \Big|}_{O_\proba( \Delta_n^{\gamma})} ds  = o_\proba(1),
\end{eqnarray}
where we used [C]-{\bf (i)} to obtain the order in (\ref{paramsmooth2}). Thus, we deduce that the left-hand side in (\ref{paramsmooth2}) is of order $O_\proba(h_n^{\gamma} n^{\frac{1}{2} - \gamma})$. In view of the left inequality in [BC] and the fact that $\gamma > \frac{1}{2}$, this vanishes asymptotically. Thus, we have proved (\ref{paramsmooth}).\\

{\bf Step 2}. We keep here the techniques and notations introduced in Section \ref{outline}, and replace $\widehat{\Theta}_{i,n}$ by the local estimator $\widehat{\Theta}_{i,n}^{(BC)}$ in the definitions of $M_{i,n}$ and $B_{i,n}$. To show the GCLT, we will show that $S_{n}^{(B)} \to^\proba 0$ and we will prove the existence of some $V_T$ such that $\calf_T^{\theta^{*}}$-stably in law, $S_{n}^{(M)} \to V_T^{\half} \mathcal{N} (0,1)$. Note that the former is a straightforward consequence of (\ref{GCLTass3}). To show the latter $S_{n}^{(M)} \to V_T^{\half} \mathcal{N} (0,1)$, we will use Theorem 3.2 of p. 244 in \cite{Jacod1997}. First, we show the conditional Lindeberg condition (3.13), i.e. in our case that for any $\eta > 0$ we have 
\begin{eqnarray}
\label{lindeberg}
\frac{n}{B_n^2} \sum_{i=1}^{B_n} \esp_{(i-1)\Delta_n} \l[ M_{i,n}^2 \mathbf{1}_{\l\{ \frac{\sqrt{n}}{B_n} M_{i,n} > \eta \r\}}\r] \rightarrow^\proba 0.
\end{eqnarray}
Let $\eta >0$. First, note that $\frac{n}{B_n} = h_n$. Using H\"{o}lder's inequality, we obtain that
$$h_n \esp_{(i-1)\Delta_n} \l[ M_{i,n}^2 \mathbf{1}_{\l\{ \frac{\sqrt{n}}{B_n} M_{i,n} > \eta \r\}}\r] \leq \underbrace{\l( \esp_{(i-1)\Delta_n} \l[ \l( \sqrt{h_n} M_{i,n} \r)^{2+\epsilon} \r] \r)^{\frac{2}{2+\epsilon}}}_{a_{i,n}} \underbrace{ \l( \esp_{(i-1)\Delta_n} \l[ \mathbf{1}_{\l\{ \frac{\sqrt{n}}{B_n} M_{i,n} > \eta \r\}}\r] \r)^{\frac{\epsilon}{2+\epsilon}}}_{b_{i,n}}. $$
 On the one hand we have that $a_{i,n}$ is uniformly bounded in view of (\ref{GCLTass2}) from [$\textnormal{C}^*$]. On the other hand, using also (\ref{GCLTass2}) along with [C]-{\bf (ii)}, we have that $b_{i,n}$ goes uniformly to $0$. We have thus proved (\ref{lindeberg}). We now prove the conditional variance condition (3.11), i.e. that
\begin{eqnarray}
\label{condvar}
\frac{n}{ B_n^2} \sum_{i=1}^{B_n} \esp_{(i-1)\Delta_n} \big[ M_{i,n}^2 \big] \rightarrow^\proba V_T :=T^{-2} \int_0^T \Gamma(\theta_s^*)^{-1} ds.
\end{eqnarray}
 We have that 
$$\frac{n}{ B_n^2} \sum_{i=1}^{B_n} \esp_{(i-1)\Delta_n} \big[ M_{i,n}^2 \big] = \frac{1}{T} \sum_{i=1}^{B_n} h_n \esp_{(i-1)\Delta_n} \big[ M_{i,n}^2 \big] \Delta_n.$$
We use Proposition I.4.44 on p.51 in \cite{JacodLimit2003} along with (\ref{GCLTass1}) from [$\textnormal{C}^*$] to show (\ref{condvar}). Now, conditions (3.10) and (3.12) are automatically satisfied because $M_{i,n}$ is a martingale increment and since we consider the reference continuous martingale $\boldsymbol{M} = 0$. Finally we show condition (3.14) to get the stable convergence. We thus consider a bounded $\F^{\theta^{*}}$-martingale $Z$, and we show that 

\bea 
\frac{\sqrt{n}}{B_n}\sum_{i=1}^{B_n}\esp_{(i-1)\Delta_n} \big[ M_{i,n} \Delta Z_{i,n} \big] \to^\proba 0,
\label{stableConvCondition}
\eea 
where $\Delta Z_{i,n} := Z_{i\Delta_n}-Z_{(i-1)\Delta_n}$. Using the Taylor expansion (\ref{taylorConstant}) and the boundedness of $Z$, by a similar calculation as in Lemma \ref{ordre2}, we have 

\beas 
\frac{\sqrt{n}}{B_n}\sum_{i=1}^{B_n}\esp_{(i-1)\Delta_n} \big[ M_{i,n} \Delta Z_{i,n} \big] = \frac{h_n}{\sqrt{n}} \sum_{i=1}^{B_n} \Gamma\l(\theta_{(i-1)\Delta_n}^{*}\r)^{-1}  \esp_{(i-1)\Delta_n} \l[\partial_\theta l_{i,n}^c\l(\theta_{(i-1)\Delta_n}^{*}\r)\Delta Z_{i,n}  \r] + o_\proba(1).
\eeas 
Note now that $l_{i,n}^c\l(\theta_{(i-1)\Delta_n}^{*}\r)$ can be written as an integral over the canonical Poisson martingale :

\beas 
l_{i,n}^c\l(\theta_{(i-1)\Delta_n}^{*}\r) = \int_0^{h_n T} \int_{\reels_+} \frac{\partial_\theta\lambda^{i,n,c}(s,\theta_{(i-1)\Delta_n}^{*})}{\lambda^{i,n,c}(s,\theta_{(i-1)\Delta_n}^{*})} \mathbb{1}_{\{0 \leq z \leq \lambda^{i,n,c}(s,\theta_{(i-1)\Delta_n}^{*}) \}} \l\{ \overline{N}^{i,n}(ds,dz) - \overline{\Lambda}^{i,n}(ds,dz)\r\},
\eeas 
with $\overline{\Lambda}^{i,n}(ds,dz) = ds \otimes dz$. We deduce from the above representation that $\esp_{(i-1)\Delta_n} \l[\partial_\theta l_{i,n}^c\l(\theta_{(i-1)\Delta_n}^{*}\r)\Delta Z_{i,n}  \r] = 0$, since both $\sigma$-fields $\calf_T^{\theta^{*}}$ and $\calf_T^{\overline{N}}$ are independent, so that $Z$ and $\overline{N}^{i,n} - \overline{\Lambda}^{i,n}$ are orthogonal. Thus (\ref{stableConvCondition}) holds. Thus, by Theorem 3.2 of \cite{Jacod1997}, we have the $\calf_T^{\theta^{*}}$-stable convergence in law of $S_n^{(M)}$ toward an $\calf_T^{\theta^{*}}$-conditional Gaussian limit with random variance $V_T$. In particular, we have that $V_T$ and $\mathcal{N} (0,1)$ in Theorem \ref{GCLT} are independent from each other.
\end{proof}

 We prove now that we can obtain (\ref{GCLTass1}), (\ref{GCLTass2}) and (\ref{GCLTass3}) in Condition [$C^*$]. First note that for any $L \in (0,2\kappa)$, a calculation gives 
 \beas  
 \esp_{(i-1)\Delta_n}  \l| \sqrt{h_n}\l( \widehat{\Theta}_{i,n}^{(BC)}-\widehat{\Theta}_{i,n}\r) \r|^L &=& h_n^{-\frac{L}{2}} T^{-L} \esp_{(i-1)\Delta_n} \l|b\l(\widehat{\Theta}_{i,n}\r)\r|^L  = O_\proba\l(h_n^{-\frac{L}{2}}\r)
 \eeas  
 uniformly in $i \in \{1,...,B_n\}$. Thus, combining the previous estimate with Theorem \ref{conditionalCLT}, we have shown that Theorem \ref{conditionalCLT} remains true if $\widehat{\Theta}_{i,n}$ is replaced by $\widehat{\Theta}_{i,n}^{(BC)}$. We will use this fact in the following. If we decompose the conditional variance in (\ref{GCLTass1}) as
 $$\esp_{(i-1) \Delta_n} \l[ \l( \sqrt{h_n}\l(\widehat{\Theta}_{i,n}^{(BC)} - \theta_{(i-1) \Delta_n}^*\r) \r)^2 \r] - \esp_{(i-1) \Delta_n} \l[ \sqrt{h_n} \l(\widehat{\Theta}_{i,n}^{(BC)} - \theta_{(i-1) \Delta_n}^*\r)  \r]^2,$$
then (\ref{GCLTass1}) follows from Theorem \ref{conditionalCLT}. Moreover, (\ref{GCLTass2}) is a direct consequence of Theorem \ref{conditionalCLT}. Finally, in view of (\ref{EqBiasCorrection2}) in Theorem \ref{biasCorrection2},  (\ref{GCLTass3}) holds if there exists $\epsilon \in (0,1)$ such that $\sqrt{n} = o_\proba\l(h_n^{\epsilon(\kappa \wedge \frac{3}{2})}\r) $. From the relation $\sqrt n = h_n^{\frac{\delta}{2}}$, this can be reexpressed as $\frac{\delta}{2} < \kappa \wedge \frac{3}{2}$. If we replace $\kappa$ by its expression, we get the two conditions $\frac{\delta}{2} < \gamma(\delta-1)$ and $\frac{\delta}{2} < \frac{3}{2}$, that is $\frac{\gamma}{\gamma-\half} < \delta < 3$. This is exactly condition [BC].

\subsection{Proof of Proposition \ref{propConsistency}}

\begin{proof}
Let $\gamma \in (0,1]$ and $\alpha \in (0, \frac{\gamma}{1+\gamma})$ and finally $\delta \in (1+\inv{\gamma},\inv{\alpha})$. We follow the proof of Theorem \ref{GCLT}. (\ref{GCLTass1}) and (\ref{GCLTass2}) are true since $\delta > 1+\inv{\gamma}$. Moreover, by assumption on $\delta$ and $\alpha$, (\ref{GCLTass3}) is replaced by $\esp_{(i-1) \Delta_n} \Big[ \widehat{\Theta}_{i,n}^{(BC)} - \theta_{(i-1) \Delta_n}^* \Big]  = O_\proba \l( n^{- \gamma(1-\delta^{-1}) \wedge \delta^{-1}}\r) = o_\proba \l( n^{-\alpha}\r)$. writing the decomposition 

\bea 
\frac{n^{\alpha}}{B_n} \sum_{i =1}^{B_n} \l( \widehat{\Theta}_{i,n}-\theta_{(i-1)\Delta_n}^* \r) = n^{\alpha-\half}  \l\{S_n^{(B)}+S_n^{(M)}\r\},
\eea 
we have $n^{\alpha-\half} S_n^{(M)} \to^\proba 0$ since the central limit theorem for $S_n^{(M)}$ is still valid and $\alpha < \half$. Finally $n^{\alpha-\half} S_n^{(B)} = o_\proba \l(1\r)$. This concludes the proof for $\widehat{\Theta}_n$. The proof for the bias corrected case follows the same path using $\esp_{(i-1) \Delta_n} \Big[ \widehat{\Theta}_{i,n}^{(BC)} - \theta_{(i-1) \Delta_n}^* \Big]  = O_\proba \l( n^{- \gamma(1-\delta^{-1}) \wedge \frac{3}{2}\delta^{-1}}\r)$ in lieu of the previous estimate.
\end{proof}

\subsection{Proof of Proposition \ref{propStudent}}%\ref{AV}}
\label{proofAV}
Note that for any $\theta \in K$, we have
\bea 
\partial_{\xi}^2 l_{i,n}\l(n^{-1}\xi\r)_{|\xi = n \theta} = n^{-2}\partial_{\theta}^2l_{i,n}(\theta),
\eea
and thus 
\beas 
n^{-1}\widehat{C}_n &=& \inv{B_n}\sum_{i=1}^{B_n}{ \partial_{\theta}^2l_{i,n}\l(\widehat{\Theta}_{i,n}\r)^{-1}h_n} \\
&=& \inv{T B_n}\sum_{i=1}^{B_n}{ \Gamma_{i,n}\l(\widehat{\Theta}_{i,n}\r)^{-1}},  \\
\eeas 
so that it is sufficient to prove uniformly in $i \in \{1,...,B_n\}$ the estimates
\bea 
 \Gamma_{i,n}\l(\widehat{\Theta}_{i,n}\r)^{-1}= \Gamma\l(\theta_{\blockd}^*\r)^{-1} + o_\proba(1)
\label{approxGamma1}
\eea 
and
\bea 
\Gamma\l(\theta_{\blockd}^*\r)^{-1} = \Delta_n^{-1}\int_{\blockd}^{\blockf}{\Gamma\l(\theta_t^*\r)^{-1}dt} + o_\proba(1).
\label{approxGamma2}
\eea
To show (\ref{approxGamma1}), we consider the decomposition
\beas 
\Gamma_{i,n}\l(\widehat{\Theta}_{i,n}\r)^{-1} - \Gamma\l(\theta_{\blockd}^*\r)^{-1} = \underbrace{\Gamma_{i,n}\l(\widehat{\Theta}_{i,n}\r)^{-1} - \Gamma_{i,n}\l(\theta_{\blockd}^*\r)^{-1}}_{a_{i,n}}  +\underbrace{\Gamma_{i,n}\l(\theta_{\blockd}^*\r)^{-1} - \Gamma\l(\theta_{\blockd}^*\r)^{-1}}_{b_{i,n}}.
\eeas
We have that
\bea
|a_{i,n}| &\leq& \sup_{\theta \in K} \inv{h_n} \l|\partial_\theta \l(\partial_\theta^2 l_{i,n}(\theta) \r)^{-1}\r|\l|\widehat{\Theta}_{i,n}-\theta_{\blockd}^*\r|.
\eea
By some algebraic calculus it is straightforward to show that the term $\sup_{\theta \in K} \inv{h_n} \l|\partial_\theta \l(\partial_\theta^2 l_{i,n}(\theta) \r)^{-1} \r|$ is $\mathbb{L}_p$ bounded by virtue of Lemma \ref{lemmaBoundedDerivatives} {\bf (i)} and Lemma \ref{lemmaConvInv} {\bf (i)}. By uniform consistency of $\widehat{\Theta}_{i,n}$, this yields $a_{i,n} = o_\proba(1)$. Moreover, we have that $b_{i,n} = o_\proba(1)$ as a direct consequence of Lemma \ref{lemmaConvInv} {\bf (ii)}. Thus (\ref{approxGamma1}) holds. Finally the approximation (\ref{approxGamma2}) is a straightforward consequence of Lemma  \ref{lemmaBoundedDerivatives} {\bf (i)} and Lemma \ref{lemmaConvInv} {\bf (i)} along with assumption [C]-{\bf (i)}.

%\subsection*{Proof of Lemma \ref{lemmaMartingale}}

%\subsection*{Proof of Lemma \ref{lemmaTimeChange}}

%\subsection*{Proof of Lemma \ref{lemmaConvolution}}

%\subsection*{Proof of Lemma \ref{boundedDeviation}}

%\subsection*{Proof of Lemma \ref{lemma4Inequalities}}

%\subsection*{Proof of Lemma \ref{lemmaCovariance}}

%\subsection*{Proof of Lemma \ref{lemmaConvInv}}

%\subsection*{Proof of Theorem \ref{conditionalCLT}}

%\subsection*{Proof of Lemma \ref{ordre2}}

%\subsection*{Proof of Lemma \ref{ordre3}}

%\subsection*{Proof of Theorem \ref{biasCorrection}}

\section*{Acknowledgments}
The research of Simon Clinet is in part supported by CREST Japan Science and Technology Agency. The research of Yoann Potiron is supported by the National Science Foundation 
[DMS 14-07812], Japanese Society for the Promotion of Science Grant-in-Aid for Young Scientists (60781119) and a private grant from Keio University. All financial data is provided by the Chair of Quantitative Finance of the Ecole Centrale Paris. We would like to thank Per Mykland, Nakahiro Yoshida, Frederic Abergel, Feng Chen, Holger Dette (the Editor), an anonymous Associate Editor and two anonymous referees for helpful discussions and advice.
\bibliography{biblio2}
% BibTeX users please use one of
%\bibliographystyle{spbasic}      % basic style, author-year citations
\bibliographystyle{abbrv} %{apalike}      % mathematics and physical sciences

\end{document}